\documentclass[11pt,reqno]{amsart}

\usepackage{amsmath}
\usepackage{amssymb}
\usepackage{a4wide}
\usepackage{bbm}
\usepackage{graphics}
\usepackage{color}
\usepackage[applemac]{inputenc}
\usepackage[sans]{dsfont}

\DeclareGraphicsRule{.tif}{png}{.png}{`convert #1 `dirname #1`/`basename #1 .tif`.png}

\newtheorem{theorem}{Theorem}[section]
\newtheorem{lemma}[theorem]{Lemma}
\newtheorem{proposition}[theorem]{Proposition}

\numberwithin{equation}{section}

\renewcommand{\d}{\mathrm{d}}
\newcommand{\el}{\mathrm{el}}
\newcommand{\n}{\mathrm{n}}
\newcommand{\ph}{\mathrm{ph}}
\renewcommand{\i}{\mathrm{i}}
\newcommand{\DETAILS}[1]{}

\pagestyle{myheadings}                         % Date and Page Headings %
%%%%%%%%%%%%%%%%%%%%%%%%%%%%%%%%%%%%%%%%%%%%%%%%%%%%%%%%%%%%%%%%%%%%%%%%%

\begin{document}

\title[Hyperfine splitting in non-relativistic QED]{Hyperfine splitting of the dressed hydrogen atom ground state in non-relativistic QED}

\author{L. Amour}
\address[L. Amour]{Laboratoire de Math\'ematiques EDPPM \\ FRE-CNRS 3111, Universit\'e de Reims \\ Moulin de la Housse - BP 1039, 51687 REIMS Cedex 2, France}
\email{laurent.amour@univ-reims.fr}
\author{J. Faupin}
\address[J. Faupin]{Institut de Math{\'e}matiques de Bordeaux \\
UMR-CNRS 5251, Universit{\'e} de Bordeaux 1 \\
351 cours de la lib{\'e}ration, 33405 Talence Cedex, France}
\email{jeremy.faupin@math.u-bordeaux1.fr}

\date{\today}

\begin{abstract}
We consider a spin-$\frac{1}{2}$ electron and a spin-$\frac{1}{2}$ nucleus interacting with the quantized electromagnetic field in the standard model of non-relativistic QED. For a fixed total momentum sufficiently small, we study the multiplicity of the ground state of the reduced Hamiltonian. We prove that the coupling between the spins of the charged particles and the electromagnetic field splits the degeneracy of the ground state.
\end{abstract}

\maketitle

\section{Introduction}
This paper is concerned with the spectral analysis of the quantum Hamiltonian associated with a free hydrogen atom, in the context of non-relativistic QED. Before describing our result more precisely, let us neglect for a while the corrections due to quantum electrodynamics. In the following we recall a few well-known properties of the spectrum of Hydrogen. For more details, we refer the reader to classical textbooks on Quantum Mechanics (see, e.g., \cite{Me,CTDL}). See also \cite{BS}.

Consider a neutral hydrogenoid system composed of one electron and one nucleus. The system is supposed to be free, in the sense that no external potential acts on it. The two charged particles (the electron and the nucleus) are supposed to be non-relativistic. In particular, relativistic corrections will not be taken into account in the discussion below.

Under sufficiently strong approximations, the spectrum of the hydrogen atom Hamiltonian is explicitly known. Assume, indeed, that the nucleus is infinitely heavy and that the two charged particles are spinless. Then the Schr{\"o}dinger operator in $\mathrm{L}^2( \mathbb{R}^3 )$ associated with the system reads
\begin{equation}\label{eq:h^sch}
\frac{p^2_\el}{2m_{\el}} - \frac{ \alpha }{ | x_\el | }.
\end{equation}
Here the units are chosen such that $\hbar = c = 1$, where $\hbar = h / 2\pi$, $h$ is the Planck constant, and $c$ is the velocity of light. The mass of the electron is denoted by $m_\el$, $\alpha = e^2$ is the fine-structure constant (with $e$ the charge of the electron), and $x_\el$, respectively $p_\el = - \i \nabla_{x_\el}$, represents the position of the electron, respectively its momentum.

The spectrum of \eqref{eq:h^sch} consists of an infinite increasing sequence of negative isolated eigenvalues $(e_j)_{j\ge0}$, and the branch of absolutely continuous spectrum $[0,\infty)$. The ground state is non-degenerate, that is the eigenvalue $e_0$ is simple, while the excited eigenvalues $(e_j)_{j \ge 1}$ are degenerate (the degeneracy of the $(j+1)^{\mathrm{th}}$ eigenvalue, $e_j$, being equal to $(j+1)^2$).

Introducing the degrees of freedom associated with the spin of the electron, the hydrogen atom can be described by the following Hamiltonian acting on $\mathrm{L}^2( \mathbb{R}^3 ; \mathbb{C}^2 )$:
\begin{equation}\label{eq:h^so}
\frac{p^2_\el}{2m_{\el}} - \frac{ \alpha }{ | x_\el | } + \frac{1}{ 4 m_\el^2 } \frac{ \alpha }{ | x_\el |^3 } \sigma^\el \cdot ( x_\el \wedge p_\el ),
\end{equation}
where $\sigma^\el = ( \sigma^\el_1, \sigma^\el_2, \sigma^\el_3 )$ are the Pauli matrices for the electron spin. The last term in the preceding expression is the \emph{spin-orbit interaction}. It can be derived from the Dirac equation in the non-relativistic regime. Together with other relativistic corrections that have been neglected in \eqref{eq:h^so}, the spin-orbit interaction is responsible for the \emph{fine structure} of the spectrum of Hydrogen. In particular, the ground state of \eqref{eq:h^so} is twice-degenerate, while the spin-orbit coupling splits the degeneracy of the excited eigenvalues $(e_j)_{j\ge 1}$. This can be justified by means of standard perturbation theory, the unperturbed Hamiltonian being given by the expression \eqref{eq:h^sch} seen as an operator on $\mathrm{L}^2( \mathbb{R}^3 ; \mathbb{C}^2 )$.

Assume now that the nucleus is a fixed particle with a spin equal to $\frac{1}{2}$ and a finite mass. In order to take the effect of the spin nucleus into account, one can study the following Pauli Hamiltonian in $\mathrm{L}^2( \mathbb{R}^3 ; \mathbb{C}^4 )$:
\begin{equation}\label{eq:h^pauli}
\frac{1}{2m_{\el}} ( p_\el - \alpha^{ \frac{1}{2} } A_{\n}( x_\el ) )^2 - \frac{ \alpha }{ | x_\el | } - \frac{ \alpha^{ \frac{1}{2} } }{ 2 m_\el } \sigma^{\el} \cdot B_{ \n }( x_\el ) + \frac{1}{ 4 m_\el^2 } \frac{ \alpha }{ | x_\el |^3 } \sigma^\el \cdot ( x_\el \wedge p_\el ).
\end{equation}
Here $A_\n( x_\el )$ is the vector potential of the electromagnetic field generated by the nucleus, and $B_\n( x_\el ) = \i p_\el \wedge A_\n( x_\el )$. Notice that $A_\n( x_\el )$ can be expressed in terms of the Pauli matrices $\sigma^\n = ( \sigma_1^\n , \sigma_2^\n , \sigma_3^\n )$ associated with the spin of the nucleus as $A_\n (  x_\el ) = \mathrm{c} \alpha^{ 1/2 } ( \sigma^\n \wedge x_\el ) / ( m_\n | x_\el |^3 )$, where $m_\n$ is the mass of the nucleus and $\mathrm{c}$ is a positive constant. The expression \eqref{eq:h^pauli} includes the spin-orbit interaction.

The Hamiltonian \eqref{eq:h^pauli} allows one to justify the so-called \emph{hyperfine structure} of the spectrum of Hydrogen. Again, the argument follows from perturbation theory. The unperturbed part is still given by \eqref{eq:h^sch}, now seen as an operator on $\mathrm{L}^2( \mathbb{R}^3 ; \mathbb{C}^4 )$. Hence the degeneracy of the unperturbed ground state eigenvalue, $e_0$, is equal to $4$. Under the influence of the perturbation terms appearing in \eqref{eq:h^pauli} (more precisely, under the influence of the term $\sigma^{\el} \cdot B_{ \n }( x_\el )$), $e_0$ splits into two parts: a simple eigenvalue associated with a unique ground state, and a $3$-fold degenerate eigenvalue. Let us mention that this splitting of the ground state explains the famous \emph{$21$-cm Hydrogen line}. Besides, a similar hyperfine splitting of the excited eigenvalues $(e_j)_{j \ge 1}$ occurs.

Of course, in order to get a refined picture of the spectrum of Hydrogen, one should also treat the nucleus as a moving quantum particle. The corresponding physical system is then translation invariant, and one is led to study the relative Hamiltonian in the center of mass frame. In other words, one can consider the Hamiltonian obtained by putting the total momentum equal to $0$. For instance, in the simplest case where both the electron and the nucleus are spinless, the relative Hamiltonian is given by the expression \eqref{eq:h^sch}, except that $x_\el$ and $p_\el$ are replaced by the relative position and the relative momentum respectively, and $m_\el$ is replaced by the reduced mass $\mu = m_\el m_\n / ( m_\el + m_\n )$ (with $m_\n$ the mass of the nucleus).

In this paper, we investigate the hyperfine structure of the hydrogen atom Hamiltonian in non-relativistic QED. Let us mention that non-relativistic QED provides a suitable framework to rigorously justify radiative decay and Bohr's frequency condition. In particular, save for the ground state, all stationary states are expected to turn into metastable states with a finite lifetime. Since the unperturbed eigenvalues are embedded into the essential spectrum, usual perturbation theory does not apply, and proving the latter statement involves highly non-trivial analysis (see \cite{BFS1,BFS2,AFFS,Sigal} for the case of atomic systems with an infinitely heavy nucleus). The present work focuses on the question of the nature of the ground state, for the model of a freely moving hydrogen atom at a fixed total momentum. The difference between the unperturbed and the perturbed ground state energies is referred to as \emph{the Lamb shift}. Our main concern is to study the \emph{degeneracy} of the ground state. More precisely, we aim at establishing that a hyperfine splitting of the ground state does occur in the framework of non-relativstic QED.

Let us now describe our main result in more details.

We consider a non-relativistic hydrogen atom interacting with the quantized electromagnetic field in the standard model of non-relativistic QED. Both the electron and the nucleus are treated as moving particles, so that the total Hamiltonian, $H_g$, is translation invariant. Here $g$ denotes a coupling parameter depending on the fine-structure constant, $\alpha$, related to the ``strength'' of the interaction between the charged particles and the electromagnetic field (precise definitions will be given in Subsection \ref{subsection:definition} below). The translation invariance implies that $H_g$ admits a direct integral decomposition, $H_g \sim \int_{ \mathbb{R}^3 } H_g( P ) \d P$, with respect to the total momentum $P$ of the system. This paper is devoted to the study of the degeneracy of the ground state energy of the dressed hydrogen atom at a fixed total momentum, $E_g(P) := \inf \sigma ( H_g(P) )$.

In \cite{AGG2}, it is established that, for $g$ and $P$ sufficiently small, $E_g(P)$ is an eigenvalue of $H_g(P)$, that is $H_g(P)$ has a ground state. We also mention \cite{LMS1} where the existence of a ground state for $H_g(P)$ is obtained for any value of $g$, under the assumption that $E_g(0) \le E_g(P)$. Using a method due to \cite{Hir2}, it is proven in \cite{AGG2} that the multiplicity of $E_g(P)$ cannot exceed the multiplicity of $E_0(P) := \inf \sigma ( H_0(P) )$, where $H_0(P) := H_{g=0}(P)$ denotes the non-interacting Hamiltonian. In other words,
\begin{equation}\label{eq:multiplicity_intro}
( 0 < ) \dim \, \mathrm{Ker} \, ( H_g(P) - E_g(P) ) \le \dim \, \mathrm{Ker} \, ( H_0(P) - E_0(P) ).
\end{equation}
Our purpose is to determine whether the inequality in \eqref{eq:multiplicity_intro} is strict, or, on the contrary, is an equality. 

Of course, in the same way as in the case where the coupling to the quantized electromagnetic field is neglected (see the discussion at the beginning of this introduction), the multiplicity of $E_g(P)$ depends on the value of the spins of the charged particles. If the spin of the electron is neglected and the spin of the nucleus is equal to 0, then $E_0(P)$ is simple, and hence, according to \eqref{eq:multiplicity_intro}, $E_g(P)$ is also a simple eigenvalue. In particular, \eqref{eq:multiplicity_intro} is an equality.

If the spin of the electron is taken into account, and the spin of the nucleus is equal to 0, then $E_0(P)$ is twice-degenerate. Using Kramer's degeneracy theorem (see \cite{LMS2}), one can prove that the multiplicity of $E_g(P)$ is even. Therefore, by \eqref{eq:multiplicity_intro}, $E_g(P)$ is also twice-degenerate, and hence \eqref{eq:multiplicity_intro} is again an equality. We refer the reader to \cite{HS,Spohn,Sa,Hir,LMS2} for results on the twice-degeneracy of the ground state of various QED models.

Consider now a hydrogen atom composed of a spin-$\frac{1}{2}$ electron and a spin-$\frac{1}{2}$ nucleus (e.g. a proton). In this case, the multiplicity of $E_0(P)$ is equal to 4. Our main result states that
\begin{equation}\label{eq:splitting_intro}
\dim \, \mathrm{Ker} \, ( H_g(P) - E_g(P) ) < \dim \, \mathrm{Ker} \, ( H_0(P) - E_0(P) ) = 4,
\end{equation}
for $g \neq 0$ small enough. Equation \eqref{eq:splitting_intro} can be interpreted as a hyperfine splitting of the ground state of $H_g(P)$. In other words, the Hamiltonian of a freely moving hydrogen atom at a fixed total momentum in non-relativistic QED contains hyperfine interaction terms which split the degeneracy of the ground state, in the same way as for the Pauli Hamiltonian of Quantum Mechanics mentioned above. Pursuing the analogy with the Pauli Hamiltonian \eqref{eq:h^pauli}, one can conjecture that $E_g(P)$ is simple. Generally speaking, a way to establish the uniqueness of the ground state of a given Hamiltonian $H$ consists in showing that $e^{-t H}$ is positivity improving for all $t > 0$. In several cases, this can be done by constructing a functional integral representation for the semi-group $e^{-t H }$ (see e.g. \cite{Simon} for Schr{\"o}dinger operators under various general assumptions). To our knowledge, however, such a functional representation for the model of two spin-$\frac{1}{2}$ particles minimally coupled to the quantized radiation field does not presently exist (we refer to the recent work \cite{HL} for the case of one spin-$\frac{1}{2}$ particle minimally coupled to the radiation field). Here, we shall follow a different approach; Proving the simplicity of $E_g(P)$ is beyond the scope of this paper.

In addition, in relation with the 21-cm hydrogen line mentioned above, one can expect that a resonance appears near the ground state energy $E_g(P)$, with a very small imaginary part. Showing this would presumably require the use of complex dilatations together with renormalization techniques as in \cite{BFS1}.

The case of a nucleus of spin $\ge1$ is not considered here (for instance, the nucleus of deuterium, composed of one proton and one neutron, can be treated as a spin-$1$ particle), but we expect that a similar hyperfine splitting of the ground state occurs in this case also.

As for positively charged hydrogenoid ions, the question of the existence of a ground state is more subtle than for the hydrogen atom. Indeed, it is proven in \cite{HH} that the Hamiltonian of a positive ion at a fixed total momentum in non-relativistic QED does not have a ground state in Fock space. This result should be compared with the corresponding one for the model of a freely moving, dressed non-relativistic electron in non-relativistic QED, which has been studied recently by several authors (see, among other papers, \cite{Chen,CF,BCFS,HH,CFP,LMS2,FP}). For the latter model, it is established that a ground state exists in a non-Fock representation (see \cite{CF}).

The same results (the absence of a ground state in Fock space, and the existence of a ground state in a non-Fock representation) are proven in \cite{AFGG1,AFGG2}, for a dressed non-relativistic electron which interacts with a classical magnetic field, following an approach different from \cite{CF}. We expect that the method of \cite{AFGG2} can be adapted to prove the existence of a ground state for a renormalized Hamiltonian (in a non-Fock representation) associated with the dressed helium ion $\mathrm{He}^+$, at a fixed total momentum. Since the nucleus of $\mathrm{He}^+$, composed of two protons, can be treated as a spin-0 particle, one can conjecture that the ground state is twice-degenerate according to Kramer's theorem (see \cite{LMS2}).

Regarding the model of a hydrogenoid ion whose nucleus has a non-zero spin, we do not know whether it is possible to adapt \cite{AFGG2}. Indeed, an important ingredient in \cite{AFGG2} lies in the regularity of the ground state energy with respect to the total momentum. For the model considered in \cite{AFGG2}, the latter property can be established using the method developed in \cite{Pizzo,CFP,FP}. However, if a hyperfine splitting of the ground state occurs, it is not known, to our knowledge, how to study the regularity of $E_g(P)$ with respect to $P$.

Let us finally mention that the ground state degeneracy of the non-relativistic hydrogen atom confined by its center of mass (see \cite{AF,Fa}) could also be analyzed by the techniques developed here, provided that both the electron and the nucleus have a spin equal to $\frac{1}{2}$.

\section{Definition of the model and statement of the main result}

Before stating our main theorem, let us precisely define the model under consideration.

%%%%%%%%%%%%%%%%%%%%%%%%%%%%%%
%%%%%%%%%%%%%%%%%%%%%%%%%%%%%%
\subsection{Definition of the model}\label{subsection:definition}

We consider a mobile, non-relativistic hydrogen atom, interacting with the quantized electromagnetic field in the Coulomb gauge. In the standard model of non-relativistic QED, the Hamiltonian associated with this system acts on the Hilbert space
\begin{equation}
\mathcal{H} := \mathcal{H}_{\mathrm{at}} \otimes \mathcal{H}_{\mathrm{ph}},
\end{equation}
where
\begin{equation}
\mathcal{H}_{\mathrm{at}} := \mathrm{L}^2( \mathbb{R}^3 ; \mathbb{C}^2 ) \otimes \mathrm{L}^2( \mathbb{R}^3 ; \mathbb{C}^2 ) \sim \mathrm{L}^2( \mathbb{R}^6 ; \mathbb{C}^4 )
\end{equation}
is the Hilbert space for the charged particles (the electron and the nucleus), and
\begin{equation}
\mathcal{H}_{\mathrm{ph}} := \mathbb{C} \oplus \bigoplus_{n=1}^\infty S_n \left [ \mathrm{L}^2( \mathbb{R}^3 \times \{1,2\} )^{ \otimes^n} \right ]
\end{equation}
is the symmetric Fock space for the photons. Here $S_n$ denotes the symmetrization operator.

The units are chosen such that the Planck constant $\hbar = h / 2 \pi$ and the velocity of light $c$ are equal to 1. The Hamiltonian of the system, $H^{ \mathrm{SM} }$, is formally given by the expression
\begin{align}
H^{\mathrm{SM} }  := & \frac{1}{2m_\el} \left( p_\el - \alpha^{\frac{1}{2}} A( x_\el )\right )^2 + \frac{1}{2m_\n} \left( p_\n + \alpha^{\frac{1}{2}} A( x_\n ) \right)^2 + V(x_\el,x_\n) + H_{\rm ph} \notag \\
& -\frac{ \alpha^{\frac{1}{2}} }{2m_\el} \sigma^\el \cdot B(x_\el) + \frac{ \alpha^{\frac{1}{2}} }{2m_\n} \sigma^\n \cdot B(x_\n), \label{eq:defH}
\end{align}
where $x_\el$ (respectively $x_\n$) denotes the position of the electron (respectively the position of the nucleus), and $p_\el := - \i \nabla_{x_\el}$ (respectively $p_\n := - \i \nabla_{x_\n}$) is the momentum operator of the electron (respectively of the nucleus). The parameter $\alpha$ denotes the fine-structure constant. For $x \in \mathbb{R}^3$, the vectors $A(x)$ and $B(x)$ are defined by
\begin{align}
& A(x) := \frac{1}{2\pi} \sum_{\lambda =1,2} \int_{ \mathbb{R}^3 } \frac{ \chi_\Lambda (k) }{ |k|^{\frac{1}{2}} } \varepsilon^\lambda(k) \left [ e^{ - \i k \cdot x } a^*_{\lambda }(k) + e^{ \i k \cdot x}a_{\lambda} (k)\right ] \d k, \label{eq:defA} \\
& B(x) := -\frac{\i}{2\pi} \sum_{\lambda =1,2} \int_{ \mathbb{R}^3 } |k|^{\frac{1}{2} } \chi_\Lambda(k) \big( \frac{k}{|k|} \wedge \varepsilon^{\lambda}(k) \big) \left [ e^{ - \i k \cdot x} a^*_{\lambda} (k) - e^{ \i k \cdot x}a_{\lambda} (k)\right ] \d k, \label{eq:defB}
\end{align}
where the polarization vectors are chosen in the following way:
\begin{equation}
\varepsilon^1(k): = \frac{  ( k_2 , -k_1 , 0 ) }{ \sqrt{ k_1^2 + k_2^2 } } \quad , \quad \varepsilon^2(k) := \frac{ k }{ |k| } \wedge \varepsilon^1 (k) = \frac{ ( - k_1 k_3 , - k_2 k_3 , k_1^2 + k_2^2 ) }{ \sqrt{ k_1^2 + k_2^2 } \sqrt{ k_1^2 + k_2^2 + k_3^2 } }.
\end{equation}
In \eqref{eq:defA} and \eqref{eq:defB}, $\chi_\Lambda(k)$ denotes an ultraviolet cutoff function which, for the sake of concreteness, we choose as
\begin{equation}\label{eq:chi_Lambda}
\chi_\Lambda(k) := \mathds{1}_{ |k| \le \Lambda \alpha^2 }(k).
\end{equation}
Here, $\Lambda$ is supposed to be a given arbitrary (large and) positive parameter. As explained in \cite{BFS2,Sigal}, the model is physically relevant if we assume that $1 \ll \Lambda \ll \alpha^{-2}$. The reason for introducing $\alpha^2$ into the definition \eqref{eq:chi_Lambda} will appear below (see \eqref{eq:tilde_chi_Lambda}).

For any $h \in \mathrm{L}^2( \mathbb{R}^3 \times \{1,2\} )$, we set
\begin{equation}
a^*(h) := \sum_{\lambda=1,2} \int_{\mathbb{R}^3} h( k,\lambda ) a^*_\lambda(k) \d k, \quad a(h) := \sum_{\lambda=1,2} \int_{\mathbb{R}^3} \bar h( k,\lambda ) a_\lambda( k) \d k,
\end{equation}
and
\begin{equation}
\Phi(h) := a^*(h) + a(h),
\end{equation}
where the usual creation and annihilation operators, $a^*_\lambda(k)$ and $a_\lambda(k)$, obey the canonical commutation relations
\begin{equation}
[ a_\lambda (k) , a_{\lambda'}(k') ] = [ a^*_\lambda(k) , a^*_{\lambda'}(k') ] = 0 , \quad [ a_\lambda ( k ) , a^*_{\lambda'}( k' ) ] = \delta_{\lambda\lambda'} \delta ( k - k' ).
\end{equation}
Hence, in particular, for $j \in \{ 1,2,3 \}$, we have
\begin{equation}
A_j(x) = \Phi( h^A_j(x) ), \quad \text{and} \quad B_j(x) = \Phi( h^B_j(x) ), 
\end{equation}
with
\begin{align}
& h^A_j(x,k,\lambda) := \frac{1}{2\pi} \frac{ \chi_\Lambda(k) }{ |k|^{ \frac{1}{2} } } \varepsilon^\lambda_j(k) e^{ - \i k \cdot x }, \label{eq:h^A} \\
& h^B_j(x,k,\lambda) := - \frac{\i}{2\pi} |k|^{\frac{1}{2}} \chi_\Lambda(k) \left ( \frac{ k }{ |k| } \wedge \varepsilon^\lambda(k) \right )_j e^{ - \i k \cdot x }. \label{eq:h^B}
\end{align}
The Coulomb potential $V(x_\el,x_\n)$ is given by
\begin{equation}
V(x_\el,x_\n) \equiv V ( x_\el - x_\n ) := - \frac{ \alpha }{ | x_\el - x_\n | },
\end{equation}
and $H_\ph$ is the Hamiltonian of the free photon field, defined by
\begin{equation}
H_{\mathrm{ph}} := \sum_{ \lambda = 1, 2 } \int_{ \mathbb{R}^3 } |k| a^*_\lambda(k) a_\lambda(k) \d k.
\end{equation}
The $3$-uples $\sigma^\el = ( \sigma^\el_1 , \sigma^\el_2 , \sigma^\el_3 )$ and $\sigma^\n = ( \sigma^\n_1 , \sigma^\n_2 , \sigma^\n_3 )$ are the Pauli matrices associated with the spins of the electron and the nucleus respectively. They can be written as $4 \times 4$ matrices in the following way:
\begin{footnotesize}
\begin{align}
\sigma^\el_1 &=
  \left (
  \begin{array}{cccc}
   0 & 0 & 1 & 0 \\
   0 & 0 & 0 & 1 \\
   1 & 0 & 0 & 0 \\
   0 & 1 & 0 & 0
  \end{array}
  \right ) , \quad
  \sigma^\el_2 =
    \left (
  \begin{array}{cccc}
   0 & 0 & - \i & 0 \\
   0 & 0 & 0 & -\i \\
   \i & 0 & 0 & 0 \\
   0 & \i & 0 & 0
  \end{array}
  \right ) , \quad
  \sigma^\el_3 =
    \left (
  \begin{array}{cccc}
   1 & 0 & 0 & 0 \\
   0 & 1 & 0 & 0 \\
   0 & 0 & -1 & 0 \\
   0 & 0  & 0 & -1
  \end{array}
  \right ) ,  \label{eq:sigma_el} \\
 \sigma^\n_1 &=
  \left (
  \begin{array}{cccc}
   0 & 1 & 0 & 0 \\
   1 & 0 & 0 & 0 \\
   0 & 0 & 0 & 1 \\
   0 & 0 & 1 & 0
  \end{array}
  \right ) , \quad
  \sigma^\n_2 =
    \left (
  \begin{array}{cccc}
   0 & -\i & 0 & 0 \\
   \i & 0 & 0 & 0 \\
   0 & 0 & 0 & -\i \\
   0 & 0 & \i & 0
  \end{array}
  \right ) , \quad
  \sigma^\n_3 =
    \left (
  \begin{array}{cccc}
   1 & 0 & 0 & 0 \\
   0 & -1 & 0 & 0 \\
   0 & 0 & 1 & 0 \\
   0 & 0  & 0 & -1
  \end{array}
  \right ). \label{eq:sigma_n}
\end{align}
\end{footnotesize}

In order to exhibit the perturbative behavior of the interaction between the charged particles and the photon field, we proceed to a change of units. More precisely, let $\mathcal{U} : \mathcal{H} \to \mathcal{H}$ be the unitary operator associated with the scaling 
\begin{align}
(x_\el , x_\n , k_1 , \lambda_1 , \dots , k_n , \lambda_n ) \mapsto ( x_\el / \alpha , x_\n / \alpha , \alpha^2 k_1 , \lambda_1 , \dots , \alpha^2 k_n , \lambda_n ).
\end{align}
We have
\begin{align}\label{eq:defH_2}
& \frac{ 1 }{ \alpha^2 } \mathcal{U} H^{\mathrm{SM}} \mathcal{U}^* = \frac{1}{2m_\mathrm{el}} \left( p_\mathrm{el} - \alpha^{\frac{3}{2}} \tilde A( \alpha x_\mathrm{el} )\right )^2 + \frac{1}{2m_\mathrm{n}} \left( p_\mathrm{n} + \alpha^{\frac{3}{2}} \tilde A( \alpha x_\mathrm{n} ) \right)^2 \notag \\
& - \frac{ 1 }{ | x_\mathrm{el} - x_\mathrm{n} | } + H_{\rm ph} -\frac{ \alpha^{\frac{3}{2}} }{2m_\mathrm{el}} \sigma^{\mathrm{el}} \cdot \tilde B( \alpha x_\mathrm{el} ) + \frac{ \alpha^{\frac{3}{2}} }{2m_\mathrm{n}} \sigma^{\mathrm{n}} \cdot \tilde B( \alpha x_\mathrm{n} ),
\end{align}
where $\tilde A$ and $\tilde B$ are defined in the same way as $A$ and $B$, except that the ultraviolet cutoff function $\chi_\Lambda(k)$ is replaced by 
\begin{equation}\label{eq:tilde_chi_Lambda}
\tilde \chi_\Lambda(k) := \chi_\Lambda( \alpha^2 k ) = \mathds{1}_{ |k| \le \Lambda }(k).
\end{equation}
To simplify the notations, we redefine $\tilde \chi_\Lambda = \chi_\Lambda$, $A = \tilde A$ and $B = \tilde B$. Setting $g := \alpha^{\frac{3}{2}}$, we are thus led to study the Hamiltonian
\begin{align}\label{eq:defH_g}
H_g^{\mathrm{SM}} :=& \frac{1}{2m_\mathrm{el}} \left( p_\mathrm{el} - g A( g^{\frac{2}{3}} x_\mathrm{el} )\right )^2 + \frac{1}{2m_\mathrm{n}} \left( p_\mathrm{n} + g A( g^{\frac{2}{3}} x_\mathrm{n} ) \right)^2 \notag \\
& - \frac{ 1 }{ | x_\mathrm{el} - x_\mathrm{n} | } + H_{\rm ph} -\frac{ g }{2m_\mathrm{el}} \sigma^{\mathrm{el}} \cdot B( g^{\frac{2}{3}} x_\mathrm{el} ) + \frac{ g }{2m_\mathrm{n}} \sigma^\mathrm{n} \cdot B( g^{\frac{2}{3}} x_\mathrm{n} ).
\end{align}

Let the total mass, $M$, and the reduced mass, $\mu$, be defined respectively by
\begin{equation}
M := m_\el + m_\n , \quad \frac{ 1 }{ \mu } := \frac{ 1 }{ m_\el } + \frac{ 1 }{ m_\n }.
\end{equation}
Let
\begin{align}
& r := x_\el - x_\n , \quad R := \frac{ m_\el }{ M } x_\el + \frac{ m_\n }{ M } x_\n , \\
& \frac{ p_r }{ \mu } := \frac{ p_\el }{ m_\el } - \frac{ p_\n }{ m_\n } , \quad P_R := p_\el + p_\n.
\end{align}
For $g=0$, the Hamiltonian $H_0^{\mathrm{SM}} := H_{g=0}^{\mathrm{SM}}$ is given by
\begin{equation}
H_0^{\mathrm{SM}} = \frac{ p_\el^2 }{ 2m_\el } + \frac{ p_\n^2 }{ 2m_\n } - \frac{1}{ | x_\el - x_\n | } + H_{\mathrm{ph}} = H_R + H_r + H_{\mathrm{ph}},
\end{equation}
where the Schrödinger operators $H_R$ and $H_r$ on $\mathrm{L}^2( \mathbb{R}^3 )$ are defined by
\begin{equation}\label{eq:H_RandH_r}
H_R := \frac{ P_R^2 }{ 2M }, \quad H_r := \frac{ p_r^2 }{ 2 \mu } - \frac{1}{|r|}.
\end{equation}
Let us note that the spectrum of $H_R$ consists of the branch of essential spectrum $[0,\infty)$, whereas the spectrum of $H_r$ is composed of an increasing sequence of isolated eigenvalues $(e_0,e_1,e_2,\dots)$ accumulating at 0, and the essential spectrum $[ 0 , \infty )$. The first eigenvalue of $H_r$ is
\begin{equation}
e_0 = - \frac{ \mu }{ 2 },
\end{equation}
and a normalized eigenstate associated with $e_0$ is given by
\begin{equation}\label{eq:phi0}
\phi_0(r) := ( \pi^{-1} \mu^3 )^{\frac{1}{2}} e^{ - \mu |r| }.
\end{equation}

To conclude this subsection, we recall the definition of the photon number operator, $\mathcal{N}_\ph$, which will be used in the sequel:
\begin{equation}
\mathcal{N}_\ph := \sum_{\lambda=1,2} \int_{\mathbb{R}^3} a^*_\lambda(k) a_\lambda(k) \d k.
\end{equation}

%%%%%%%%%%%%%%%%%%%%%%%%%%%%%%
%%%%%%%%%%%%%%%%%%%%%%%%%%%%%%
\subsection{Fiber decomposition}

The Hamiltonian $H_g^{\mathrm{SM}}$ is translation invariant in the sense that $H_g^{ \mathrm{SM} }$ formally commutes with the total momentum operator $P_{ \mathrm{tot} } := P_R + P_\ph$, where $P_\ph$ denotes the momentum operator of the photon field, given by the expression
\begin{equation}
P_\ph := \sum_{\lambda = 1,2} \int_{ \mathbb{R}^3 } k a^*_\lambda(k) a_\lambda(k) \d k.
\end{equation}
In the same way as in \cite{AGG2}, it follows that $H_g^{\mathrm{SM}}$ can be decomposed into a direct integral, which is expressed in the following proposition.
\begin{proposition}[\cite{AGG2}]\label{prop:decomposition}
There exists $g_c>0$ such that for all $|g| \le g_c$, the following holds: the Hamiltonian $H_g^{\mathrm{SM}}$ given by the formal expression \eqref{eq:defH_g} identifies with a self-adjoint operator which is unitarily equivalent to the direct integral $\int_{\mathbb{R}^3}^\oplus H_g(P) \d P$,
\begin{equation}\label{eq:unitary_equivalence}
H_g^{\mathrm{SM}} \sim \int_{ \mathbb{R}^3 }^\oplus H_g ( P ) \d P.
\end{equation}
For all $P \in \mathbb{R}^3$, $H_g(P)$ is a self-adjoint operator acting on the Hilbert space
\begin{equation}
\mathcal{H}(P) := \mathrm{L}^2( \mathbb{R}^3 ; \mathbb{C}^4 ) \otimes \mathcal{H}_{\mathrm{ph}} \sim \mathbb{C}^4 \otimes \mathrm{L}^2( \mathbb{R}^3 , \d r ) \otimes \mathcal{H}_{\mathrm{ph}},
\end{equation}
with domain $D(H_g(P)) = D( H_0(P) )$, and $H_g(P)$ is given by the expression:
\begin{align}
 H_g( P ) =& \frac{1}{2m_\el} \left( \frac{ m_\el }{ M } ( P - P_\ph ) + p_r - g A( \frac{ m_\el }{ M } g^{\frac{2}{3}} r )\right )^2 \notag \\
 & + \frac{1}{2m_\n} \left( \frac{ m_\n }{ M } ( P - P_\ph ) - p_r + g A( - \frac{ m_\n }{M} g^{\frac{2}{3}} r ) \right)^2 \notag \\
& - \frac{ 1 }{ |r| } + H_{\rm ph}   -\frac{g}{2m_\el} \sigma^{\el} \cdot B( \frac{ m_\el }{ M } g^{\frac{2}{3}} r )
   + \frac{g}{2m_\n} \sigma^{\n} \cdot B ( - \frac{ m_\n }{ M }  g^{\frac{2}{3}} r ).
\end{align}
\end{proposition}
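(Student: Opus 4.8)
The plan is to follow the strategy of \cite{AGG2}: first give the formal expression \eqref{eq:defH_g} a canonical self-adjoint realization with domain $D(H_0^{\mathrm{SM}})$, then exploit translation invariance to decompose it over the total momentum, and finally read off each fiber. For the first point, note that $H_0^{\mathrm{SM}} = H_R + H_r + H_{\mathrm{ph}}$ is self-adjoint on its natural domain, that $H_{\mathrm{ph}} \le H_0^{\mathrm{SM}} + \mu/2$ (since $H_R \ge 0$ and $H_r \ge e_0 = -\mu/2$), and that $p_\el^2 + p_\n^2 \lesssim H_0^{\mathrm{SM}} + 1$ by the Kato inequality $|x_\el - x_\n|^{-1} \le \varepsilon (p_\el^2 + p_\n^2) + C_\varepsilon$. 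Because the ultraviolet cutoff $\chi_\Lambda$ has compact support, the form factors $h^A_j(x,\cdot)$, $h^B_j(x,\cdot)$ and also $h^A_j(x,\cdot)/|k|^{1/2}$, $h^B_j(x,\cdot)/|k|^{1/2}$ lie in $\mathrm{L}^2(\mathbb{R}^3\times\{1,2\})$ with norms bounded uniformly in $x$; the usual creation/annihilation bounds then give $\|A_j(x)\psi\| + \|B_j(x)\psi\| \lesssim \|(H_{\mathrm{ph}}+1)^{1/2}\psi\|$ and $|\langle \psi, A_j(x)^2 \psi\rangle| + |\langle \psi, B_j(x)^2 \psi\rangle| \lesssim \langle \psi, (H_{\mathrm{ph}}+1)\psi\rangle$, uniformly in $x$. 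Writing $W_g := H_g^{\mathrm{SM}} - H_0^{\mathrm{SM}}$ as a sum of terms linear in $g$ (of the type $g\, p_\el\cdot A$, $g\, p_\n\cdot A$, $g\, \sigma^\el\cdot B$, $g\, \sigma^\n\cdot B$) and quadratic in $g$ (of the type $g^2 A^2$), these estimates show that $W_g$ is $H_0^{\mathrm{SM}}$-form-bounded with relative bound $O(|g|)$. By KLMN, $H_g^{\mathrm{SM}}$ is then defined as a semibounded self-adjoint operator for $|g| \le g_c$, and a number-operator/commutator argument as in \cite{Hir2} (controlling terms such as $[H_{\mathrm{ph}}, A(x)]$ together with the resolvent bound $\|\mathcal{N}_\ph^{1/2}(H_g^{\mathrm{SM}}+\i)^{-1}\| < \infty$) upgrades this to $D(H_g^{\mathrm{SM}}) = D(H_0^{\mathrm{SM}})$.

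Next, pass to the center-of-mass variables $(r,R)$ with conjugate momenta $(p_r,P_R)$ via the obvious unitary of $\mathrm{L}^2(\mathbb{R}^6)$, and observe that $H_g^{\mathrm{SM}}$ commutes with the three components of $P_{\mathrm{tot}} = P_R + P_\ph$: on a core, $e^{\i a\cdot P_{\mathrm{tot}}}$ implements the simultaneous translation of the particles and of the photon field, which is a symmetry of the model. The $R$-dependence of $H_g^{\mathrm{SM}}$ sits only in the arguments $g^{2/3}x_\el = g^{2/3}R + \cdots$ and $g^{2/3}x_\n = g^{2/3}R + \cdots$ of $A$ and $B$. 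Conjugating by the dressing unitary $\Pi_1 := e^{-\i g^{2/3} R\cdot P_\ph}$, which is well defined because $[R,P_\ph]=0$ (in the $R$-representation it is $\int^\oplus e^{-\i g^{2/3}R_0\cdot P_\ph}\,\d R_0$), and using $e^{\i y\cdot P_\ph} a_\lambda(k) e^{-\i y\cdot P_\ph} = e^{-\i y\cdot k} a_\lambda(k)$, removes $R$ from the fields, leaves $H_{\mathrm{ph}}$, $|r|^{-1}$ and $p_r$ unchanged, and shifts $P_R$ by a multiple of $P_\ph$; the resulting operator commutes with $P_R$ alone. Applying the Fourier transform $\mathcal{F}_R$ in $R$ therefore decomposes it as $\int_{\mathbb{R}^3}^\oplus H_g(P)\,\d P$ over the spectrum of $P_R$; redistributing $P_R + P_\ph$ among the various kinetic terms and relabelling the fiber parameter as the total momentum $P$ yields exactly the stated expression for $H_g(P)$ on $\mathcal{H}(P) = \mathbb{C}^4\otimes\mathrm{L}^2(\mathbb{R}^3,\d r)\otimes\mathcal{H}_{\mathrm{ph}}$. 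The unitary realizing \eqref{eq:unitary_equivalence} is $\mathcal{F}_R\,\Pi_1$ composed with the change of variables.

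Finally, for each fixed $P$ one checks that $H_g(P)$ is self-adjoint with $D(H_g(P)) = D(H_0(P))$ by the fiberwise version of the first step: one computes $H_0(P) = (P-P_\ph)^2/2M + H_r + H_{\mathrm{ph}}$, the field bounds above are uniform in $r$ and in $P$, $(P-P_\ph)^2/2M \le H_0(P) + \mu/2$, and $p_r^2 \lesssim H_r + 1$; KLMN plus the commutator estimates then give a semibounded self-adjoint operator with the claimed $P$-independent domain, and measurability of $P\mapsto H_g(P)$ is immediate from the explicit formula, so that $\int^\oplus H_g(P)\,\d P$ is well defined and unitarily equivalent to $H_g^{\mathrm{SM}}$.

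The genuinely delicate point is not the algebra of the dressing transformation but the \emph{operator}-domain statements $D(H_g^{\mathrm{SM}}) = D(H_0^{\mathrm{SM}})$ and $D(H_g(P)) = D(H_0(P))$: KLMN only yields equality of form domains, so one has to run the commutator and number-operator estimates of \cite{Hir2} to see that the domain is genuinely stable under the perturbation, with constants uniform in $P$ on compact sets.
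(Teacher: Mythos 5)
The paper does not actually prove Proposition~\ref{prop:decomposition}: it is quoted from \cite{AGG2}, and the surrounding text only says that ``in the same way as in \cite{AGG2}, it follows that $H_g^{\mathrm{SM}}$ can be decomposed into a direct integral.'' So there is no in-paper proof to compare against; you can only be measured against what \cite{AGG2} does. Your outline --- relative form bounds from $\chi_\Lambda$ being compactly supported, KLMN with relative bound $O(|g|)$, an argument \`a la \cite{Hir2} to upgrade form-domain equality to $D(H_g^{\mathrm{SM}})=D(H_0^{\mathrm{SM}})$, a Lee--Low--Pines type dressing to strip $R$ out of the field arguments, and then a Fourier transform in $R$ to produce the direct integral --- is indeed the standard route and matches the structure of the proof in \cite{AGG2}. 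The fiberwise self-adjointness and measurability remarks at the end are also the right closing steps.

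One point deserves a flag, because ``redistributing $P_R + P_\ph$ among the various kinetic terms and relabelling the fiber parameter'' is hiding real bookkeeping. With the scaling built into \eqref{eq:defH_g}, the fields are evaluated at $g^{2/3}x_\el$, $g^{2/3}x_\n$; if you conjugate by $e^{\pm\i g^{2/3}R\cdot P_\ph}$ to delete $R$ from the field arguments, the center-of-mass momentum $P_R$ gets shifted by $g^{2/3}P_\ph$, not by $P_\ph$, and the operator that commutes with the scaled $H_g^{\mathrm{SM}}$ is naively $p_\el+p_\n+g^{2/3}P_\ph$. To land on the stated expression for $H_g(P)$, where $P-P_\ph$ (no $g^{2/3}$) appears and the field arguments contain $\tfrac{m_\el}{M}g^{2/3}r$ and $-\tfrac{m_\n}{M}g^{2/3}r$, one has to insert an additional rescaling of the photon momenta (equivalently, to track through the unitary $\mathcal{U}$ of Subsection~\ref{subsection:definition} the exact form of the conserved total momentum before relabelling). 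Your sketch does not record this, and the precise coefficients in the field arguments (whether one sees $m_\el/M$ or $m_\n/M$ multiplying $g^{2/3}r$ after the dressing) also require a careful trace through $x_\el=R+\tfrac{m_\n}{M}r$, $x_\n=R-\tfrac{m_\el}{M}r$. This is not a conceptual error in your approach --- the strategy is the right one and is what \cite{AGG2} does --- but the ``change of variables'' step is where the $g^{2/3}$'s and mass ratios must be checked line by line, and as written your proposal does not demonstrate that the formula comes out exactly as stated. You are right, finally, that the genuinely delicate analytic point is the operator-domain equality $D(H_g(P))=D(H_0(P))$; KLMN alone yields only form-domain equality, and the commutator/number-operator estimates you invoke are essential.
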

Let us mention that the direct integral decomposition \eqref{eq:unitary_equivalence} remains true for an arbitrary value of the coupling constant $g$ (see \cite{LMS1}). However, in this paper, we shall only be interested in the small coupling regime.

For $g=0$, the fiber Hamiltonian $H_0(P) := H_{g=0}(P)$ reduces to the diagonal operator
\begin{align}
H_0(P) = H_r + \frac{1}{2M} ( P - P_\ph )^2 + H_{\mathrm{ph}},  \label{eq:def_H0}
\end{align}
where $H_r$ is the Schrödinger operator defined in \eqref{eq:H_RandH_r}. Let $\Omega$ denote the photon vacuum in $\mathcal{H}_{\mathrm{ph}}$. One can verify that 
\begin{equation}
E_0(P) := \inf \sigma ( H_0(P) ) = e_0 + \frac{P^2}{2M},
\end{equation}
and that $e_0 + P^2/2M$ is an eigenvalue of multiplicity 4 of $H_0(P)$. Moreover, the associated normalized eigenstates can be written under the form $y \otimes \phi_0 \otimes \Omega$, where $y$ is an arbitrary normalized element in $\mathbb{C}^4$.

The operator $H_0(P)$ is treated as an unperturbed Hamiltonian, the perturbation $W_g(P) := H_g(P) - H_0(P)$ being given by
\begin{align}
W_g(P) =& - \frac{ g }{ m_\el } \left ( \big ( \frac{ m_\el }{ M } ( P - P_\ph ) + p_r \big ) \cdot A ( \frac{ m_\el }{ M } g^{\frac{2}{3}} r ) \right ) \notag \\ 
& + \frac{ g }{ m_\n } \left ( \big ( \frac{ m_\n }{ M } ( P - P_\ph ) - p_r \big ) \cdot A ( - \frac{ m_\n }{ M } g^{\frac{2}{3}} r ) \right ) \notag \\ 
& + \frac{ g^2 }{ 2m_\el } A ( \frac{ m_\el }{ M } g^{\frac{2}{3}} r )^2 + \frac{ g^2 }{ 2m_\n } A ( - \frac{ m_\n }{ M } g^{\frac{2}{3}} r )^2 \notag \\
& -\frac{g}{2m_\el} \sigma^{\el} \cdot B( \frac{ m_\el }{ M } g^{\frac{2}{3}} r ) + \frac{g}{2m_\n} \sigma^{\n} \cdot B ( - \frac{ m_\n }{ M } g^{\frac{2}{3}} r ). \label{eq:Wg}
\end{align}
Note that, due to the choice of the Coulomb gauge, the operators $A ( m_\el g^{2 / 3} r / M )$ and $A ( - m_\n g^{ 2 / 3 } r / M )$ commute both with $p_r$ and $P_\ph$.

As mentioned in the introduction, this paper is concerned with the nature of the bottom of the spectrum of the perturbed Hamiltonian, $E_g(P) := \inf \sigma ( H_g(P) )$. In other words, we would like to determine the behavior of the unperturbed eigenvalue $E_0(P)$ under the perturbation $W_g(P)$. We emphasize that, since $E_0(P)$ is embedded into the continuous spectrum of $H_0(P)$, usual perturbation theory of isolated eigenvalues of finite multiplicity does not apply. In \cite{AGG2}, it is proven that, for $g$ and $P$ sufficiently small, $H_g(P)$ has a ground state of multiplicity $\le 4$, that is $E_g(P)$ is an (embedded) eigenvalue of multiplicity $\le 4$ of $H_g(P)$. The aim of our work is to study more precisely the multiplicity of $E_g(P)$, still assuming that $g$ and $P$ are sufficiently small.

%%%%%%%%%%%%%%%%%%%%%%%%%%%%%%
%%%%%%%%%%%%%%%%%%%%%%%%%%%%%%
\subsection{Main result}

Our main result is stated in the following theorem.
\begin{theorem}\label{thm:main}
There exists $g_c>0$ and $p_c>0$ such that, for any $0 < |g| \le g_c$ and $0 \le |P| \le p_c$,
\begin{equation}\label{eq:main}
\mathrm{dim} \, \mathrm{Ker} \, ( H_g(P) - E_g(P) ) < 4.
\end{equation}
\end{theorem}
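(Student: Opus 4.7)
The plan is to reduce the degeneracy question to a $4\times 4$ matrix problem via a Feshbach-Schur projection onto the unperturbed ground-state space $\mathcal{E}_0 := \mathbb{C}^4 \otimes \mathbb{C}\phi_0 \otimes \mathbb{C}\Omega$, and then to exhibit an explicit $\sigma^\el \cdot \sigma^\n$ splitting at order $g^2$ in the resulting effective operator. Write $\Pi_0$ for the orthogonal projection of $\mathcal{H}(P)$ onto $\mathcal{E}_0$ and $\bar\Pi_0 := \mathds{1} - \Pi_0$. For $g$ and $|P|$ small enough, the Feshbach-Schur operator
\begin{equation*}
F_g(P,z) := \Pi_0(H_g(P) - z)\Pi_0 - \Pi_0 W_g(P)\bar\Pi_0\bigl(\bar\Pi_0 H_g(P)\bar\Pi_0 - z\bigr)^{-1}\bar\Pi_0 W_g(P)\Pi_0
\end{equation*}
is well defined on $\mathcal{E}_0 \simeq \mathbb{C}^4$ for $z$ in a small real neighborhood of $E_0(P)$, thanks to the infrared integrability of the form factors $h^A, h^B$ on the relevant one-photon states above $\phi_0$. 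The standard isospectrality property of the Feshbach-Schur map, combined with the multiplicity-$\le 4$ statement of \cite{AGG2}, reduces the theorem to showing that $\dim\,\mathrm{Ker}\,F_g(P, E_g(P)) \le 3$ for $g \neq 0$ small.

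Next, I would expand $F_g(P, E_0(P))$ in powers of $g$. The first-order term $\Pi_0 W_g(P)\Pi_0$ vanishes, since the vacuum expectations of the creation/annihilation operators appearing linearly in $W_g$ are zero. At order $g^2$, the $A^2$ diagonal term and the $(p-A)$-resolvent-$(p-A)$ contribution produce scalars times $\mathds{1}_{\mathbb{C}^4}$, and the mixed $(p\cdot A)(\sigma\cdot B)$ terms vanish by parity, since $\langle\phi_0, p_r\phi_0\rangle = 0$. The spin structure at order $g^2$ therefore comes entirely from the square of
\begin{equation*}
W_g^{\mathrm{spin}}(P) := -\tfrac{g}{2m_\el}\sigma^\el \cdot B\bigl(\tfrac{m_\el}{M}g^{2/3}r\bigr) + \tfrac{g}{2m_\n}\sigma^\n \cdot B\bigl(-\tfrac{m_\n}{M}g^{2/3}r\bigr).
\end{equation*}
Replacing $B(\cdot)$ by $B(0)$ at leading order (the Taylor corrections being suppressed by $g^{2/3}$ combined with the decay of $\phi_0$) and computing the one-photon resolvent matrix element yields
\begin{equation*}
-\Pi_0 W_g^{\mathrm{spin}}\bar\Pi_0(H_0(P)-E_0(P))^{-1}\bar\Pi_0 W_g^{\mathrm{spin}}\Pi_0 = -g^2\sum_{i,j}M_{ij}(P)\,T_i T_j + o(g^2),
\end{equation*}
with $T_i := -\sigma^\el_i/(2m_\el) + \sigma^\n_i/(2m_\n)$ and
\begin{equation*}
M_{ij}(P) := \sum_{\lambda=1,2}\int_{\mathbb{R}^3}\frac{\overline{h^B_i(0,k,\lambda)}\,h^B_j(0,k,\lambda)}{|k|+|k|^2/(2M)-k\cdot P/M}\,\d k.
\end{equation*}

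The polarization-sum identity $\sum_\lambda(\hat k\wedge\varepsilon^\lambda)_i(\hat k\wedge\varepsilon^\lambda)_j = \delta_{ij} - \hat k_i\hat k_j$, together with the spherical symmetry of $\chi_\Lambda$, gives $M_{ij}(0) = c\,\delta_{ij}$ with an explicit constant $c > 0$, hence $M_{ij}(P) = c(P)\delta_{ij} + O(|P|)$ with $c(P) > 0$ for small $|P|$. Using $(\sigma^\el)^2 = (\sigma^\n)^2 = 3\,\mathds{1}$ and $[\sigma^\el_i,\sigma^\n_j]=0$, one finds $\sum_i T_i^2 = \tfrac{3}{4m_\el^2}\mathds{1} + \tfrac{3}{4m_\n^2}\mathds{1} - \tfrac{1}{2 m_\el m_\n}\sigma^\el\cdot\sigma^\n$. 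Thus, modulo $o(g^2)$ and scalar shifts,
\begin{equation*}
F_g(P, E_0(P)) - (E_0(P)-z)\mathds{1} = g^2 a(P)\mathds{1} + \tfrac{g^2 c(P)}{2m_\el m_\n}\,\sigma^\el\cdot\sigma^\n + o(g^2).
\end{equation*}
Since $\sigma^\el\cdot\sigma^\n$ has spectrum $\{+1,-3\}$ on $\mathbb{C}^4$ with multiplicities $3$ and $1$, the leading $4\times 4$ operator has exactly two distinct eigenvalues, separated by a gap of order $g^2/(m_\el m_\n)$.

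The main obstacle will be to control everything hidden in the $o(g^2)$ remainder: the subleading Taylor terms in the $g^{2/3}$-rescaled fields $A(g^{2/3}r)$ and $B(g^{2/3}r)$; the next-order contributions from the $(p-A)$ and $A^2$ sectors; and the shift from replacing $E_0(P)$ by $E_g(P)$ in $F_g$, whose size is governed by $|E_g(P)-E_0(P)|=O(g^2)$ times derivatives of the Feshbach resolvent. All of these must be shown to be $o(g^2)$ uniformly in $z$ near $E_g(P)$ and $|P|\le p_c$. This reduces to uniform bounds on $(\bar\Pi_0 H_g\bar\Pi_0 - z)^{-1}$ acting on the range of $\bar\Pi_0 W_g\Pi_0$, which follow from the infrared regularity of $h^A, h^B$ together with the lower bound $|k|+|k|^2/(2M)-k\cdot P/M \gtrsim |k|$ valid for $|P| \ll M$. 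Granted these estimates, the $g^2$-gap of the leading matrix persists in $F_g(P,E_g(P))$, so $\dim\,\mathrm{Ker}\,F_g(P, E_g(P)) \le 3$, and the isospectrality of the Feshbach-Schur map yields $\dim\,\mathrm{Ker}\,(H_g(P)-E_g(P)) \le 3 < 4$.
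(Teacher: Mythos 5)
The central technical gap is in the definition of the Feshbach--Schur operator. You set up the map with respect to $\Pi_0 = \mathds{1}\otimes P_{\phi_0}\otimes P_\Omega$ and take $z$ in a real neighbourhood of $E_0(P)$, but $\bar\Pi_0 H_g(P)\bar\Pi_0 - z$ is \emph{not} boundedly invertible for such $z$: since $H_g(P)\ge E_g(P)$ and arbitrarily soft photons can be added to the ground state, $\inf\sigma(\bar\Pi_0 H_g(P)\bar\Pi_0)=E_g(P)$ and there is no spectral gap. Infrared integrability of $h^A,h^B$ makes the explicit second-order one-photon matrix element finite, but it does not produce a convergent Neumann expansion for the full reduced resolvent $(\bar\Pi_0 H_g\bar\Pi_0 - z)^{-1}$, nor does it validate the Feshbach isospectrality you invoke at $z=E_g(P)$. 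The paper's fix is precisely to replace $\Pi_0$ by the soft-photon-regularized projection $P_\rho=\mathds{1}\otimes P_{\phi_0}\otimes\mathds{1}_{H_\ph\le\rho}$ with $g^2\ll\rho\ll 1$: then $\bar P_\rho H_0\bar P_\rho - E_g$ has a gap of order $\rho$ (Lemma~\ref{lm:H_0_barPrho}), the Neumann series converges with ratio $|g|\rho^{-1/2}\ll 1$ (Lemma~\ref{lm:Neumann}), and the Feshbach identity is first established for $\varepsilon>0$ and then passed to the limit $\varepsilon\to 0^+$, giving the weaker but sufficient one-sided statement $F_\rho(0)P_\rho\mathds{1}_{\{E_g\}}(H_g)P_\rho=0$. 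Without this regularization your argument cannot get off the ground.

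The logical strategy also differs in a way that matters for the error control you flag as the main obstacle. You aim to prove directly that the effective $4\times 4$ operator has a $g^2$-gap and hence a kernel of dimension $\le 3$, which would require all $o(g^2)$ remainders, including the shift from $E_0$ to $E_g$ inside the Feshbach operator, to be controlled uniformly, on top of a working isospectrality at $z=E_g(P)$. The paper instead argues by contradiction: assuming $\dim\,\mathrm{Ker}\,(H_g-E_g)=4$, one shows that $P_0\mathds{1}_{\{E_g\}}(H_g)P_0$ is invertible on $\mathrm{Ran}(P_0)$ (Lemma~\ref{lm:P0PgP0_invertible}), deduces that the second-order operator $\Gamma$ must equal $(E_0-E_g)P_0+O(|g|^{2+\tau})$, i.e.\ be scalar up to $o(g^2)$, and then exhibits the explicit off-diagonal entry $\Gamma_{32}=-\mathrm{C}_0 g^2+O(|g|^{8/3})$, which gives the contradiction. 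One non-vanishing off-diagonal entry of size $g^2$ suffices; no diagonalization of the $\sigma^\el\cdot\sigma^\n$ structure, and no knowledge of the sign of its coefficient, is needed. That said, your identification of the $\sigma^\el\cdot\sigma^\n$ coupling coming from the $h^B$ form factors, with $M_{ij}(P)\approx c(P)\delta_{ij}$ from the polarization sum, is consistent with the paper's explicit computation \eqref{eq:M32_5} and captures the correct physics of the hyperfine splitting.
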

This theorem shows that the interaction between the charged particles with spins and the photon field splits the degeneracy of the ground state. As explained in the introduction, this splitting of the ground state can be seen as a manifestation of the hyperfine interaction between the electron and the nucleus. It may be expected that a renormalization group analysis adapted from \cite{BFS1,BCFS}, together with a careful study of the second order term in the expansion of $E_g(P)$ w.r.t. $g$, would lead to a more precise result. A feature of the method developed in the present work is its brevity.

Our proof of Theorem \ref{thm:main} is based on a contradiction argument and the use of the \emph{Feshbach-Schur identity}. Let us sketch the argument more precisely. For technical convenience, we shall work with the Hamiltonian obtained from $H_g(P)$ by Wick ordering (see Section \ref{section:Feshbach}). Let $P_0$ denote the projection onto the eigenspace associated with the eigenvalue $E_0(P)$ of $H_0(P)$, and $\bar P_0 := \mathds{1} - P_0$. Note that $P_0 = \mathds{1} \otimes P_{\phi_0} \otimes P_\Omega$ in the tensor product $\mathbb{C}^4 \otimes \mathrm{L}^2( \mathbb{R}^3 ) \otimes \mathcal{H}_\ph$, where $P_{\phi_0}$ denotes the projection onto the eigenspace associated with the ground state $\phi_0$ of $H_r$, and $P_\Omega$ denotes the projection onto the Fock vacuum. We also set $P_\rho := \mathds{1} \otimes P_{\phi_0} \otimes \mathds{1}_{ H_f \le \rho }$, and $\bar P_\rho := \mathds{1} - P_\rho$, where $\rho$ is a suitably chosen positive parameter (depending on $g$ in such a way that $g^2 \ll \rho \ll 1$, see Sections \ref{section:Feshbach} and \ref{section:proof}). By the Feshbach-Schur identity, we will see that, for all $\varepsilon > 0$,
\begin{equation}\label{eq:Feshbach_identity}
P_\rho \big [ H_g(P) - E_g(P) + \varepsilon \big ]^{-1} P_\rho = F_\rho(\varepsilon)^{-1},
\end{equation}
where $F_\rho(\varepsilon)$ denotes the \emph{Feshbach-Schur operator}:
\begin{align}
F_\rho(\varepsilon) =& \big ( H_0(P) - E_g(P) + \varepsilon \big ) P_\rho + P_\rho W_g(P) P_\rho \notag \\
& - P_\rho W_g(P) \big [ \bar P_\rho H_g(P) \bar P_\rho - E_g(P) + \varepsilon \big ]^{-1} \bar P_\rho W_g(P) P_\rho. \label{eq:F(epsilon)_intro}
\end{align}

By the functional calculus, the projection onto the eigenspace associated with the eigenvalue $E_g(P)$ of $H_g(P)$ can be written as $\mathds{1}_{\{E_g(P)\}} ( H_g(P) )$. The limit $F_\rho(0) := \lim_{\varepsilon \to 0} F_\rho(\varepsilon)$ being well-defined (in the norm topology), we will deduce from \eqref{eq:Feshbach_identity} that
\begin{equation}
F_\rho(0) P_\rho \mathds{1}_{\{E_g(P)\}} ( H_g(P) ) P_\rho = 0.
\end{equation}
Choosing $\rho = |g|^{2 - 2\tau}$ with $\tau > 0$ small enough and using the property that
\begin{equation}
\| \bar P_0 \mathds{1}_{\{E_g(P)\}} ( H_g(P) ) \| \le \mathrm{Const} \, g^2,
\end{equation}
(see Appendix \ref{section:estimates}), we will obtain that
\begin{equation}\label{eq:PFP_intro}
P_0 F_\rho(0) P_0 \mathds{1}_{\{E_g(P)\}} ( H_g(P) ) P_0 = O( |g|^{4-2\tau} ).
\end{equation}
Assuming by contradiction that
\begin{equation}\label{eq:intro_contradiction}
\mathrm{dim} \, \mathrm{Ker} \, ( H_g(P) - E_g(P) ) = 4,
\end{equation}
we will then show that \eqref{eq:PFP_intro} implies
\begin{equation}\label{eq:PFP_intro2}
P_0 F_\rho(0) P_0 = O( |g|^{4-2\tau} ).
\end{equation}

Next, for $\rho = |g|^{2-2\tau} \gg g^2$, using standard estimates involving creation and annihilation operators, it can be verified, in a way similar to \cite{BFS1,BFS2}, that the reduced resolvent $[ \bar P_\rho H_g(P) \bar P_\rho - E_g(P) ]^{-1} \bar P_\rho$  decomposes into the Neumann series
\begin{align}
\big [ \bar P_\rho H_g(P) \bar P_\rho - E_g(P) \big ]^{-1} \bar P_\rho = & \big [ H_0(P) - E_g(P) \big ]^{-1} \notag \\
& \sum_{ n \ge 0 } \big ( - \bar P_\rho W_g(P) \bar P_\rho \big [ H_0(P) - E_g(P) \big ]^{-1} \big )^n \bar P_\rho. \label{eq:Neumann_intro}
\end{align}
Introducing \eqref{eq:Neumann_intro} into \eqref{eq:F(epsilon)_intro} and \eqref{eq:PFP_intro2}, we will deduce the following identity:
\begin{align}
& \sum_{ n \ge 0 } P_0 W_g(P) \big [ H_0(P) - E_g(P) \big ]^{-1} \big ( - \bar P_\rho W_g(P) \bar P_\rho \big [ H_0(P) - E_g(P) \big ]^{-1} \big )^n \bar P_0 W_g(P) P_0 \notag \\
& = (E_0(P) - E_g(P)) P_0 + O(|g|^{2+\tau}). \label{eq:diagonal}
\end{align}
Hence, identifying the left-hand-side of \eqref{eq:diagonal} with a $4\times 4$ matrix, \eqref{eq:diagonal} implies in particular that all the terms of order $g^2$ must be located on the diagonal. Extracting the latter from the sum over $n$ will lead to a contradiction.

%%%%%%%%%%%%%%%%%%%%%%%%%%%%%%
%%%%%%%%%%%%%%%%%%%%%%%%%%%%%%
\subsection{Organization of the paper}

We decompose the proof of Theorem \ref{thm:main} into two main steps. In Section \ref{section:Feshbach}, we introduce and study some properties of the Feshbach-Schur operator $F_\rho(\varepsilon)$ mentioned in the previous subsection. Next, in Section \ref{section:proof}, we assume that the multiplicity of $E_g(P)$ is equal to $4$, and we conclude the proof of Theorem \ref{thm:main} by a contradiction argument. In Appendix \ref{section:estimates}, we collect some technical estimates used in Sections \ref{section:Feshbach} and \ref{section:proof}. 

Throughout the paper, $\mathrm{C}, \mathrm{C}', \mathrm{C}''$ will denote positive constants that may differ from one line to another.

%%%%%%%%%%%%%%%%%%%%%%%%%%%%%%%%%%%%%%%%%%%%%%%%%%%%%%%%%%%%%%%%%%%%%%%
%%%%%%%%%%%%%%%%%%%%%%%%%%%%%%%%%%%%%%%%%%%%%%%%%%%%%%%%%%%%%%%%%%%%%%%
%%%%%%%%%%%%%%%%%%%%%%%%%%%%%%%%%%%%%%%%%%%%%%%%%%%%%%%%%%%%%%%%%%%%%%%
%%%%%%%%%%%%%%%%%%%%%%%%%%%%%%%%%%%%%%%%%%%%%%%%%%%%%%%%%%%%%%%%%%%%%%%
\section{The Feshbach-Schur operator}\label{section:Feshbach}

As mentioned above, it is convenient to work with the Hamiltonian $\tilde H_g(P)$ obtained from $H_g(P)$ by Wick ordering, that is $\tilde H_g(P) = \, ~ : H_g(P) : ~$, with the usual notations. It is not difficult to check that $\tilde H_g(P) = H_g(P) - g^2 \mathrm{C}_\Lambda$, where $\mathrm{C}_\Lambda$ is a positive constant depending on the ultraviolet cutoff parameter $\Lambda$. Hence it suffices to prove Theorem \ref{thm:main} with $\tilde H_g(P)$ replacing $H_g(P)$ and $\tilde E_g(P) := \inf \sigma( \tilde H_g(P) )$ replacing $E_g(P)$. To simplify the notations, we redefine $H_g(P) := \tilde H_g(P)$ and $E_g(P) := \tilde E_g(P)$. Moreover, in what follows, we drop the dependence on $P$ everywhere unless a confusion may arise. In particular, we set
\begin{align}
& H_g = H_{g}(P), \quad H_0 = H_{0}(P), \quad W_g = W_g(P), \notag \\
& E_g = E_{g}(P) , \quad E_0 = E_0(P) = e_0 + \frac{P^2}{2M}.
\end{align}

Let us begin with the proof of the convergence of the Neumann series \eqref{eq:Neumann_intro}.
\begin{lemma}\label{lm:Neumann}
There exist $g_c>0$ and $p_c>0$ such that, for all $0\le |g|\le g_c$, $0 \le |P| \le p_c$, $\varepsilon \ge 0$ and $g^2 \ll \rho \ll 1$, the operator
$
\bar P_\rho H_g \bar P_\rho - E_g + \varepsilon : D(H_0) \cap \mathrm{Ran}( \bar P_\rho ) \to \mathrm{Ran} ( \bar P_\rho )$ is invertible and satisfies
\begin{align}
& \big [ \bar P_\rho H_g \bar P_\rho - E_g + \varepsilon \big ]^{-1} \bar P_\rho \notag \\
& = \big [ H_0 - E_g + \varepsilon \big ]^{-1} \bar P_\rho \sum_{n\ge0} \left ( - W_g \bar P_\rho \big [ H_0 - E_g + \varepsilon \big ]^{-1} \bar P_\rho \right )^n. \label{eq:Neumann}
\end{align}
\end{lemma}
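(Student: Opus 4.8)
## Proof Plan for Lemma 2.1 (Convergence of the Neumann Series)

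The plan is to show that the reduced resolvent $[\bar P_\rho H_g \bar P_\rho - E_g + \varepsilon]^{-1}\bar P_\rho$ exists and expands as a convergent Neumann series in powers of the perturbation $W_g$, sandwiched between free resolvents $[H_0 - E_g + \varepsilon]^{-1}$. The starting point is the algebraic identity $\bar P_\rho H_g \bar P_\rho = \bar P_\rho H_0 \bar P_\rho + \bar P_\rho W_g \bar P_\rho$ on $\mathrm{Ran}(\bar P_\rho)$, so that formally
\[
\bar P_\rho H_g \bar P_\rho - E_g + \varepsilon = \big(\bar P_\rho H_0 \bar P_\rho - E_g + \varepsilon\big)\Big(\mathds{1} + \big[\bar P_\rho H_0 \bar P_\rho - E_g + \varepsilon\big]^{-1}\bar P_\rho W_g \bar P_\rho\Big)
\]
on $\mathrm{Ran}(\bar P_\rho)$. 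Thus the lemma reduces to two things: first, that $\bar P_\rho H_0 \bar P_\rho - E_g + \varepsilon$ is boundedly invertible on $\mathrm{Ran}(\bar P_\rho)$ with a good bound on the inverse; second, that the operator norm of $[\bar P_\rho H_0 \bar P_\rho - E_g + \varepsilon]^{-1}\bar P_\rho W_g \bar P_\rho$ is strictly less than $1$, which then gives the geometric series. Replacing $\bar P_\rho[H_0 - E_g+\varepsilon]^{-1}\bar P_\rho$ by the commuting expression $[H_0 - E_g + \varepsilon]^{-1}\bar P_\rho$ (valid since $H_0$ commutes with $P_\rho$, as $P_\rho$ is a spectral projection of $H_0$ — recall $P_\rho = \mathds{1}\otimes P_{\phi_0}\otimes \mathds{1}_{H_f\le\rho}$ and $H_0 = H_r + (P-P_\ph)^2/2M + H_\ph$) finally yields the stated form \eqref{eq:Neumann}.

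The first ingredient is a lower bound on $H_0 - E_g$ restricted to $\mathrm{Ran}(\bar P_\rho)$. On this subspace one is away from the joint ground state configuration: either the relative coordinate is in the spectral subspace of $H_r$ above $e_0$ (gap $e_1 - e_0 > 0$, a fixed positive constant), or $H_f \ge \rho$. Combined with $E_0 - E_g = O(g^2)$ (the Lamb shift bound, which follows from standard estimates since $W_g$ is infinitesimally $H_0$-bounded with relative bound $O(g)$ — this is the content of the appendix), one gets $\bar P_\rho(H_0 - E_g)\bar P_\rho \ge c\,\rho$ for $g$ small, uniformly in $\varepsilon\ge 0$, and hence $\|[H_0 - E_g + \varepsilon]^{-1}\bar P_\rho\| \le C/\rho$. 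The second ingredient is the relative bound on $W_g$: one shows $\|W_g (H_0 + 1)^{-1/2}\| \le C|g|$ and, more usefully here, that $W_g$ is controlled by $H_\ph^{1/2} + \langle p_r\rangle + 1$ with prefactor $O(|g|)$, using the standard $a^\#(h)$ estimates $\|a(h)\psi\|\le \|h/\sqrt{|k|}\|\,\|H_\ph^{1/2}\psi\|$ and $\|a^*(h)\psi\| \le \|h/\sqrt{|k|}\|\,\|H_\ph^{1/2}\psi\| + \|h\|\,\|\psi\|$, noting that the form factors $h^A_j, h^B_j$ have compact support in $k$ so all the relevant norms are finite. Feeding the resolvent bound $C/\rho$ and the $W_g$ bound $C|g|(H_0 - E_g + 1)^{1/2}$ into the product and using $g^2 \ll \rho$ gives $\|[H_0 - E_g+\varepsilon]^{-1}\bar P_\rho W_g \bar P_\rho\| \le C|g|/\sqrt{\rho} \to 0$, which is $<1$ for $g$ small; this is what makes the Neumann series converge.

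I expect the main obstacle to be the careful bookkeeping in the operator-norm estimate of $[H_0 - E_g + \varepsilon]^{-1}\bar P_\rho W_g \bar P_\rho$, specifically the interplay between the $p_r$-dependent part of $W_g$ (the terms linear in $A$ contracted with $p_r$ and with $P - P_\ph$) and the resolvent. One cannot simply bound $W_g$ by $H_\ph^{1/2}$ alone; one needs the full $[H_0 - E_g]^{1/2}$ on the right, and then commute or interpolate so that the remaining half-power of the resolvent, $[H_0 - E_g + \varepsilon]^{-1/2}\bar P_\rho$, is still bounded by $C/\sqrt\rho$. Handling the $P_\ph$-dependence (which mixes the photon momentum into the atomic kinetic term) and confirming that all constants are uniform in $\varepsilon\ge0$ and in $|P|\le p_c$ requires the form-factor and commutator estimates collected in the appendix; modulo invoking those, the argument is the standard Feshbach-type resolvent expansion as in \cite{BFS1,BFS2}. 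Once the series converges in norm, term-by-term one checks it maps $\mathrm{Ran}(\bar P_\rho)$ to itself and inverts $\bar P_\rho H_g \bar P_\rho - E_g + \varepsilon$ there, completing the proof.
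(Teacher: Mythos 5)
Your proposal is correct and follows essentially the same route as the paper: write $\bar P_\rho(H_g-E_g+\varepsilon)\bar P_\rho=(H_0-E_g+\varepsilon)\bar P_\rho+\bar P_\rho W_g\bar P_\rho$, bound the free reduced resolvent by $C/\rho$ via the gap $e_1-e_0$ and $H_\ph\ge\rho$ (together with $0\le E_0-E_g\le Cg^2$), and control the contraction factor by $C|g|\rho^{-1/2}$ using the standard $a^\#$ and form-factor estimates, which is exactly what the paper's Lemmata \ref{lm:H_0_barPrho} and \ref{lm:estimate_Wg} supply. The only cosmetic difference is that the paper bounds the $n$-th term by inserting the square root of the resolvent symmetrically, i.e.\ via $\big\|[H_0-E_g+\varepsilon]^{-1/2}\bar P_\rho W_g\bar P_\rho[H_0-E_g+\varepsilon]^{-1/2}\big\|\le C|g|\rho^{-1/2}$, rather than an asymmetric relative bound of $W_g$ against $(H_0-E_g+1)^{1/2}$, but this is a matter of bookkeeping and gives the same $C\rho^{-1}(C'|g|\rho^{-1/2})^n$ estimate.
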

\begin{proof}
Since
\begin{equation}
\bar P_\rho \big ( H_g - E_g + \varepsilon \big ) \bar P_\rho = \big ( H_0 - E_g + \varepsilon ) \bar P_\rho + \bar P_\rho W_g \bar P_\rho,
\end{equation}
it suffices to prove that the Neumann series in the right-hand-side of \eqref{eq:Neumann} is convergent. It follows from Lemmata \ref{lm:H_0_barPrho} and \ref{lm:estimate_Wg} in Appendix \ref{section:estimates} that, for all $n \in \mathbb{N}$, $\varepsilon \ge 0$ and $\rho >0$,
\begin{align}
& \left \| \big [ H_0 - E_g + \varepsilon \big ]^{-1} \bar P_\rho \left ( - W_g \bar P_\rho \big [ H_0 - E_g + \varepsilon \big ]^{-1} \bar P_\rho \right )^n \right \| \notag \\
& \le \mathrm{C} \rho^{-1} \big ( \mathrm{C}' |g| \rho^{-\frac{1}{2}} \big )^n, \label{eq:estimate_Neumann}
\end{align} 
Therefore, for $1 \gg \rho \gg g^2$, \eqref{eq:estimate_Neumann} implies \eqref{eq:Neumann}.
\end{proof}

Using Lemma \ref{lm:Neumann}, we now verify that the Feshbach-Schur operator $F_\rho(\varepsilon)$ written in \eqref{eq:F(epsilon)_intro} is well-defined for any $\varepsilon \ge 0$. 
\begin{lemma}\label{lm:existenceF(epsilon)}
There exist $g_c>0$ and $p_c>0$ such that, for all $0\le |g|\le g_c$, $0 \le |P| \le p_c$, $\varepsilon \ge 0$ and $g^2 \ll \rho \ll 1$, the Feshbach-Schur operator
\begin{align}
F_\rho(\varepsilon) 
&= ( H_0 - E_g + \varepsilon ) P_\rho + P_\rho W_g P_\rho - P_\rho W_g \big [ \bar P_\rho H_g \bar P_\rho - E_g + \varepsilon \big ]^{-1} \bar P_\rho W_g P_\rho. \label{eq:F(epsilon)}
\end{align}
is a well-defined (bounded) operator on $\mathrm{Ran}( P_\rho )$. Moreover, $F_\rho(\varepsilon)$ satisfies
\begin{equation}\label{eq:F(epsilon)_to_F(0)}
F_\rho(0) = \lim_{\varepsilon \to 0^+} F_\rho(\varepsilon),
\end{equation}
in the norm topology, and
\begin{equation}\label{eq:||F(0)||}
\| F_\rho(0) \| \le \mathrm{C} \rho.
\end{equation}
\end{lemma}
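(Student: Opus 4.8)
\textbf{Proof proposal for Lemma \ref{lm:existenceF(epsilon)}.}

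The plan is to read off the claims directly from the Neumann expansion established in Lemma \ref{lm:Neumann}, combined with the elementary mapping properties of $P_\rho$ and $\bar P_\rho$ and the technical estimates of Appendix \ref{section:estimates}. First I would check that each of the three summands in \eqref{eq:F(epsilon)} is a well-defined bounded operator on $\mathrm{Ran}(P_\rho)$. For the first term, $( H_0 - E_g + \varepsilon ) P_\rho$: since $P_\rho = \mathds{1} \otimes P_{\phi_0} \otimes \mathds{1}_{H_f \le \rho}$ projects onto vectors with $H_r$-energy $e_0$ and photon energy $\le \rho$, the operator $H_0 - E_0$ is bounded by $\rho$ on $\mathrm{Ran}(P_\rho)$, and $|E_g - E_0| = O(g^2)$ (Lemma in Appendix \ref{section:estimates}, or the a priori bound quoted in the introduction), so with $\varepsilon$ bounded this term is bounded by $\mathrm{C}\rho + \mathrm{C}'g^2 + \varepsilon \le \mathrm{C}''\rho$ for $g^2 \ll \rho$ and $\varepsilon \le \rho$; in fact for the estimate \eqref{eq:||F(0)||} I only need $\varepsilon = 0$, where the bound is exactly $O(\rho)$. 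For the second term, $P_\rho W_g P_\rho$: by the estimate on $W_g$ from Lemma \ref{lm:estimate_Wg} applied between states with at most $\rho$ photons, $\| P_\rho W_g P_\rho \| \le \mathrm{C}|g|\rho^{1/2} \ll \rho$. For the third term, I insert the Neumann series \eqref{eq:Neumann}: $P_\rho W_g [\bar P_\rho H_g \bar P_\rho - E_g + \varepsilon]^{-1} \bar P_\rho W_g P_\rho$ equals $\sum_{n \ge 0} P_\rho W_g [H_0 - E_g + \varepsilon]^{-1} \bar P_\rho ( - W_g \bar P_\rho [H_0 - E_g + \varepsilon]^{-1}\bar P_\rho )^n \bar P_\rho W_g P_\rho$, and each factor $W_g$ acting on a vector built from $\bar P_\rho$-states and a $P_\rho$-state (which has $\le \rho$ photons) gains a factor $|g|$ while the reduced resolvent $[H_0 - E_g + \varepsilon]^{-1}\bar P_\rho$ is bounded by $\mathrm{C}\rho^{-1}$ (Lemma \ref{lm:H_0_barPrho}); more carefully, the two outermost $W_g$'s contribute $\mathrm{C}|g|\rho^{1/2}$ each by the argument used for the second term, the $n$ inner blocks contribute $(\mathrm{C}|g|\rho^{-1/2})^n$ by \eqref{eq:estimate_Neumann}, and the central resolvent another $\mathrm{C}\rho^{-1}$, giving an absolutely summable series bounded by $\mathrm{C}|g|^2 \ll \rho$. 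Summing the three contributions yields both well-definedness on $\mathrm{Ran}(P_\rho)$ and the bound \eqref{eq:||F(0)||}.

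Next I would establish the norm continuity \eqref{eq:F(epsilon)_to_F(0)} as $\varepsilon \to 0^+$. The first summand $(H_0 - E_g + \varepsilon)P_\rho$ is manifestly norm-continuous (indeed affine) in $\varepsilon$, and the second summand $P_\rho W_g P_\rho$ does not depend on $\varepsilon$. For the third summand I would again work from the Neumann series and show it converges in norm, uniformly in $\varepsilon \ge 0$ on compacts, to the $\varepsilon = 0$ series. Term by term, the difference between the $\varepsilon$-series and the $\varepsilon = 0$-series involves differences of the form $[H_0 - E_g + \varepsilon]^{-1} - [H_0 - E_g]^{-1} = -\varepsilon [H_0 - E_g + \varepsilon]^{-1}[H_0 - E_g]^{-1}$ on $\mathrm{Ran}(\bar P_\rho)$; since $H_0 - E_g \ge \mathrm{C}\rho > 0$ on $\mathrm{Ran}(\bar P_\rho)$ (Lemma \ref{lm:H_0_barPrho}, using $E_g - E_0 = O(g^2) \ll \rho$), both resolvents are bounded by $\mathrm{C}\rho^{-1}$ uniformly in $\varepsilon \ge 0$, so each such difference is $O(\varepsilon \rho^{-2})$, hence $\to 0$. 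Because the full series is absolutely summable uniformly in $\varepsilon$ (by the same geometric bound $(\mathrm{C}|g|\rho^{-1/2})^n$ with $|g|\rho^{-1/2} \ll 1$), dominated convergence for series gives norm convergence of the third summand as $\varepsilon \to 0^+$. This proves \eqref{eq:F(epsilon)_to_F(0)}.

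The only genuinely delicate point is the one already dispatched by Lemma \ref{lm:Neumann} and the appendix: the reduced resolvent $[\bar P_\rho H_g \bar P_\rho - E_g + \varepsilon]^{-1}$ must be a bounded operator of norm $O(\rho^{-1})$, uniformly down to $\varepsilon = 0$, which rests on the spectral gap estimate $\bar P_\rho(H_0 - E_0)\bar P_\rho \ge \mathrm{C}\rho$ (this is where the choice $g^2 \ll \rho \ll 1$ enters: $\rho$ must dominate the Lamb shift $|E_g - E_0| = O(g^2)$ yet be small enough that the infrared region $H_f \le \rho$ behaves perturbatively). Everything else is bookkeeping with the standard $N_\ph^{1/2}$-type bounds on $W_g$ from Appendix \ref{section:estimates}. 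I would therefore present the proof as: (i) invoke Lemma \ref{lm:Neumann} to rewrite the third summand as a Neumann series; (ii) bound the three summands using Lemmata \ref{lm:H_0_barPrho} and \ref{lm:estimate_Wg}, obtaining boundedness on $\mathrm{Ran}(P_\rho)$ and, at $\varepsilon = 0$, the bound $\|F_\rho(0)\| \le \mathrm{C}\rho$; (iii) differentiate the resolvents in $\varepsilon$ and use the uniform geometric bound to pass to the limit $\varepsilon \to 0^+$ in norm.
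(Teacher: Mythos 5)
Your proposal follows the paper's proof essentially verbatim: the same three-term decomposition of $F_\rho(\varepsilon)$, the same appeal to the Neumann expansion from Lemma \ref{lm:Neumann}, and the same bounds $\mathrm{C}\rho$, $\mathrm{C}|g|\rho^{1/2}$ and $\mathrm{C}g^2$ on the three summands via Lemmata \ref{lm:Eg-E0}, \ref{lm:H_0_barPrho} and \ref{lm:estimate_Wg}. The only place you go beyond the paper is in spelling out the resolvent-difference argument for the norm limit $\varepsilon\to 0^+$ (the paper dismisses this as a ``straightforward verification''), and your slightly loose attribution of a $\rho^{1/2}$ factor to each outermost $W_g$ and a $\rho^{-1}$ to the central resolvent is harmless since it multiplies out to the same $\mathrm{C}g^2$ that the paper obtains via the split bounds \eqref{eq:estimate_Wg_2}--\eqref{eq:estimate_Wg_3}.
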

\begin{proof}
By Lemma \ref{lm:Neumann} and the fact that $\mathrm{Ran}( P_\rho ) \subset D(H_0) \subset D( W_g )$, $F_\rho(\varepsilon)$ is obviously well-defined on $\mathrm{Ran}( P_\rho )$, for any $\varepsilon \ge 0$. The boundedness of $F_\rho(\varepsilon)$ and Equation \eqref{eq:F(epsilon)_to_F(0)} are straightforward verifications.

In order to prove \eqref{eq:||F(0)||}, we proceed as follows: First, it follows from Lemma \ref{lm:Eg-E0} that
\begin{align}
\big \| ( H_0 - E_g ) P_\rho \big \| & = \big \| ( E_0 - E_g )P_\rho + ( - \frac{1}{M} P \cdot P_\ph + \frac{1}{2M} P_\ph^2 + H_\ph ) P_\rho \| \notag \\
& \le \mathrm{C} g^2 + \mathrm{C}' \rho \le \mathrm{C}'' \rho, \label{eq:||F(0)||_1}
\end{align}
since, by assumption, $\rho \gg g^2$. Next, by Lemma \ref{lm:estimate_Wg}, we have that
\begin{equation}\label{eq:||F(0)||_2}
\| P_\rho W_g P_\rho \| \le \mathrm{C} |g| \rho^{\frac{1}{2}} \le \mathrm{C}' \rho.
\end{equation}
Lemma \ref{lm:Neumann} gives
\begin{align}
& P_\rho W_g \big [ \bar P_\rho H_g \bar P_\rho - E_g \big ]^{-1} \bar P_\rho W_g P_\rho \notag \\
& = P_\rho W_g \big [ H_0 - E_g  \big ]^{-1} \bar P_\rho \sum_{n\ge0} \left ( - W_g \bar P_\rho \big [ H_0 - E_g \big ]^{-1} \bar P_\rho \right )^n W_g P_\rho.
\end{align}
Using again Lemma \ref{lm:estimate_Wg}, we get, for all $n \ge 0$,
\begin{align}
& \big \| P_\rho W_g \big [ H_0 - E_g  \big ]^{-1} \bar P_\rho \left ( - W_g \bar P_\rho \big [ H_0 - E_g \big ]^{-1} \bar P_\rho \right )^n W_g P_\rho \big \| \notag \\
& \le \mathrm{C} g^2 ( \mathrm{C}' |g| \rho^{-\frac{1}{2}} )^n, \label{eq:estimate_Neumann_2}
\end{align}
which implies
\begin{equation}\label{eq:||F(0)||_3}
\big \| P_\rho W_g \big [ \bar P_\rho H_g \bar P_\rho - E_g \big ]^{-1} \bar P_\rho W_g P_\rho \big \| \le \mathrm{C} g^2 \le \mathrm{C}' \rho.
\end{equation}
Equations \eqref{eq:||F(0)||_1}, \eqref{eq:||F(0)||_2} and \eqref{eq:||F(0)||_3} give \eqref{eq:||F(0)||}.
\end{proof}

We now turn to the proof of the Feshbach-Schur identity, Equation \eqref{eq:Feshbach_identity}. We refer to \cite{BFS1,BCFS1,GH} for definitions and properties of the ``(smooth) Feshbach-Schur map'', and its use in the context of non-relativistic QED.  In our case, the operator $H_g - E_g + \varepsilon$ is obviously invertible (for $\varepsilon>0$), so that the following theorem simply follows from usual second order perturbation theory. For the convenience of the reader, we recall the proof.
\begin{lemma}\label{lm:Feshbach}
There exist $g_c>0$ and $p_c>0$ such that, for all $0\le |g|\le g_c$, $0 \le |P| \le p_c$, $\varepsilon > 0$ and $g^2 \ll \rho \ll 1$, the operators $H_g - E_g + \varepsilon : D( H_0 ) \to \mathbb{C}^4 \otimes \mathrm{L}^2( \mathbb{R}^3 ) \otimes \mathcal{H}_{\mathrm{ph}}$ and $F_\rho( \varepsilon ) : \mathrm{Ran}( P_\rho ) \to \mathrm{Ran}( P_\rho )$ are invertible and satisfy
\begin{equation}\label{eq:Feshbach}
P_\rho [ H_g - E_g + \varepsilon ]^{-1} P_\rho = F_\rho (\varepsilon)^{-1}.
\end{equation}
\end{lemma}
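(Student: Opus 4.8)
The plan is to establish the Feshbach–Schur identity \eqref{eq:Feshbach} as a direct consequence of second order perturbation theory on the block decomposition induced by the orthogonal projection $P_\rho$. Since $\varepsilon>0$ and $H_g$ is self-adjoint and bounded below with $E_g=\inf\sigma(H_g)$, the operator $H_g-E_g+\varepsilon$ is bounded invertible on all of $\mathbb{C}^4\otimes\mathrm{L}^2(\mathbb{R}^3)\otimes\mathcal{H}_{\mathrm{ph}}$ with bounded inverse; this gives the first invertibility claim for free. For the Feshbach operator, invertibility will be proven afterwards, but the crucial step is the algebraic identity.

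First I would write, with respect to the decomposition $\mathds{1}=P_\rho+\bar P_\rho$, the operator $H_g-E_g+\varepsilon$ in $2\times2$ block form. The diagonal blocks are $P_\rho(H_g-E_g+\varepsilon)P_\rho$ and $B:=\bar P_\rho(H_g-E_g+\varepsilon)\bar P_\rho=\bar P_\rho H_g\bar P_\rho-E_g+\varepsilon$ restricted to $\mathrm{Ran}(\bar P_\rho)$, which is invertible on that subspace by Lemma \ref{lm:Neumann}; the off-diagonal blocks are $P_\rho(H_g-E_g+\varepsilon)\bar P_\rho=P_\rho W_g\bar P_\rho$ and its adjoint, using that $P_\rho H_0\bar P_\rho=0$ (the range of $P_\rho$ is spanned by vectors of the form $y\otimes\phi_0\otimes\psi$ with $\psi$ supported on $\{H_\ph\le\rho\}$, and $H_0$ preserves this structure). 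Then I apply the standard Schur complement (Feshbach) formula: since the lower-right block $B$ is invertible, $H_g-E_g+\varepsilon$ is invertible if and only if its Schur complement onto $\mathrm{Ran}(P_\rho)$,
\begin{equation*}
S_\rho(\varepsilon):=P_\rho(H_g-E_g+\varepsilon)P_\rho-P_\rho W_g\bar P_\rho B^{-1}\bar P_\rho W_g P_\rho,
\end{equation*}
is invertible on $\mathrm{Ran}(P_\rho)$, and in that case $P_\rho[H_g-E_g+\varepsilon]^{-1}P_\rho=S_\rho(\varepsilon)^{-1}$. It remains only to observe that $S_\rho(\varepsilon)=F_\rho(\varepsilon)$ as given by \eqref{eq:F(epsilon)}, since $P_\rho(H_g-E_g+\varepsilon)P_\rho=(H_0-E_g+\varepsilon)P_\rho+P_\rho W_g P_\rho$.

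The remaining point is the invertibility of $F_\rho(\varepsilon)$ on $\mathrm{Ran}(P_\rho)$, which I would handle by a Neumann-series/perturbation argument on that finite-"size" block: the dominant term is $(H_0-E_g+\varepsilon)P_\rho$, and on $\mathrm{Ran}(P_\rho)$ one has $H_0-E_0=-\tfrac1M P\cdot P_\ph+\tfrac1{2M}P_\ph^2+H_\ph\ge0$ up to a term controlled by $|P|$, so that $H_0-E_g+\varepsilon=(E_0-E_g)+\varepsilon+(\text{nonnegative part})$; using $|E_0-E_g|\le\mathrm{C}g^2\ll\rho$ (Lemma \ref{lm:Eg-E0}) together with the bounds \eqref{eq:||F(0)||_2}, \eqref{eq:||F(0)||_3} on the perturbing terms, one sees that on the subspace where $H_\ph$ is not too small the operator is invertible, while the genuinely small part (near $H_\ph=0$) is handled because there the inverse of $H_g-E_g+\varepsilon$ still exists. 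Alternatively, and more cleanly, invertibility of $F_\rho(\varepsilon)$ follows a posteriori: the Schur formula shows that $H_g-E_g+\varepsilon$ invertible forces $F_\rho(\varepsilon)$ invertible, so no separate estimate is needed for $\varepsilon>0$. I expect the only mildly delicate point to be making the block computation rigorous at the level of domains — checking $\mathrm{Ran}(P_\rho)\subset D(H_0)\subset D(W_g)$ (already noted in the proof of Lemma \ref{lm:existenceF(epsilon)}) and that $B^{-1}$ maps $\mathrm{Ran}(\bar P_\rho)$ into $D(H_0)\cap\mathrm{Ran}(\bar P_\rho)$, which is exactly the content of Lemma \ref{lm:Neumann} — after which the identity \eqref{eq:Feshbach} is a purely formal manipulation that is justified by these inclusions.
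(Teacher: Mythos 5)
Your argument is correct and takes essentially the same route as the paper: both exploit that $H_g-E_g+\varepsilon$ is invertible for $\varepsilon>0$, that $H_0$ commutes with $P_\rho$ (so the off-diagonal blocks are $P_\rho W_g\bar P_\rho$ and $\bar P_\rho W_g P_\rho$), and that $\bar P_\rho(H_g-E_g+\varepsilon)\bar P_\rho$ is invertible on $\mathrm{Ran}(\bar P_\rho)$ by Lemma~\ref{lm:Neumann}; the paper just verifies the Schur-complement identity by direct algebra rather than citing it. Your first, Neumann-series attempt at proving invertibility of $F_\rho(\varepsilon)$ directly is unnecessary (and would be delicate on $\mathrm{Ran}(P_0)$, where $H_0-E_g+\varepsilon$ can be as small as $\varepsilon$), but you correctly note the cleaner a posteriori route, which is exactly what the paper's two-sided computation $F_\rho(\varepsilon)\,P_\rho[H_g-E_g+\varepsilon]^{-1}P_\rho=P_\rho=P_\rho[H_g-E_g+\varepsilon]^{-1}P_\rho\,F_\rho(\varepsilon)$ delivers.
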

\begin{proof}
Since $H_g - E_g \ge 0$, for any $\varepsilon>0$, the operator $H_g - E_g + \varepsilon$  from $D(H_0)$ to $\mathbb{C}^4 \otimes \mathrm{L}^2( \mathbb{R}^3 , \d r ) \otimes \mathcal{H}_{\mathrm{ph}}$ is obviously invertible. Next, since $H_0$ commutes with $P_\rho$, we have that
\begin{equation}\label{eq:Wick2}
\big ( H_0 - E_g + \varepsilon \big ) P_\rho + P_\rho W_g P_\rho = P_\rho \big ( H_g - E_g + \varepsilon \big ) P_\rho.
\end{equation}
Combining \eqref{eq:Wick2} with the facts that $\bar P_\rho W_g P_\rho = \bar P_\rho ( H_g - E_g + \varepsilon ) P_\rho$ and $P_\rho + \bar P_\rho = \mathds{1}$, we get
\begin{align}
& F_\rho ( \varepsilon ) P_\rho \big [ H_g - E_g + \varepsilon \big ]^{-1} P_\rho \notag \\
& = P_\rho \big ( H_g - E_g + \varepsilon \big ) P_\rho \big [ H_g - E_g + \varepsilon \big ]^{-1} P_\rho \notag \\
& \quad - P_\rho W_g [ \bar P_\rho H_g \bar P_\rho - E_g + \varepsilon ]^{-1} \bar P_\rho \big ( H_g - E_g + \varepsilon \big ) P_\rho \big [ H_g - E_g + \varepsilon \big ]^{-1} P_\rho \notag \\
& = P_\rho - P_\rho \big ( H_g - E_g + \varepsilon \big ) \bar P_\rho \big [ H_g - E_g + \varepsilon \big ]^{-1} P_\rho \notag \\
& \quad + P_\rho W_g [ \bar P_\rho H_g \bar P_\rho - E_g + \varepsilon ]^{-1} \bar P_\rho \big ( H_g - E_g + \varepsilon \big ) \bar P_\rho \big [ H_g - E_g + \varepsilon \big ]^{-1} P_\rho \notag \\
& = P_\rho - P_\rho W_g \bar P_\rho \big [ H_g - E_g + \varepsilon \big ]^{-1} P_\rho + P_\rho W_g \bar P_\rho \big [ H_g - E_g + \varepsilon \big ]^{-1} P_\rho \notag \\
& = P_\rho.
\end{align}
The identity $P_\rho [ H_g - E_g + \varepsilon ]^{-1} P_\rho F_\rho ( \varepsilon ) = P_\rho$ follows similarly, which proves \eqref{eq:Feshbach}.
\end{proof}
As a consequence of Lemmata \ref{lm:existenceF(epsilon)} and \ref{lm:Feshbach}, we obtain the following lemma.
\begin{lemma}\label{lm:F(0)=0}
There exist $g_c>0$ and $p_c>0$ such that, for all $0\le |g|\le g_c$, $0 \le |P| \le p_c$, and $g^2 \ll \rho \ll 1$, 
\begin{equation}\label{eq:Feshbach_eps_to_0}
F_\rho(0) P_\rho \mathds{1}_{\{E_g\}}(H_g) P_\rho = 0.
\end{equation}
\end{lemma}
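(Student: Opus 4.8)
The plan is to combine the Feshbach-Schur identity from Lemma~\ref{lm:Feshbach} with the functional calculus and the convergence of $F_\rho(\varepsilon)$ established in Lemma~\ref{lm:existenceF(epsilon)}. First I would record that, by Lemma~\ref{lm:Feshbach}, for every $\varepsilon>0$ one has $P_\rho[H_g-E_g+\varepsilon]^{-1}P_\rho = F_\rho(\varepsilon)^{-1}$ as operators on $\mathrm{Ran}(P_\rho)$; hence, multiplying on the left by $F_\rho(\varepsilon)$, we get the \emph{bounded} identity $F_\rho(\varepsilon)\,P_\rho[H_g-E_g+\varepsilon]^{-1}P_\rho = P_\rho$ (valid on all of $\mathrm{Ran}(P_\rho)$, and $F_\rho(\varepsilon)$ is bounded by Lemma~\ref{lm:existenceF(epsilon)}). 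The idea is now to apply both sides to a vector $\psi$ in the ground-state eigenspace $\mathrm{Ran}\,\mathds{1}_{\{E_g\}}(H_g)$ and to pass to the limit $\varepsilon\to 0^+$.

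Concretely, let $\psi \in \mathrm{Ran}\,\mathds{1}_{\{E_g\}}(H_g)$. Then $H_g\psi = E_g\psi$, so $[H_g - E_g + \varepsilon]^{-1}\psi = \varepsilon^{-1}\psi$, and therefore
\begin{equation}
F_\rho(\varepsilon)\,P_\rho[H_g-E_g+\varepsilon]^{-1}P_\rho \,\psi \;=\; \varepsilon^{-1} F_\rho(\varepsilon)\,P_\rho\,\psi \;=\; P_\rho\,\psi .
\end{equation}
Wait — here one must be careful: $[H_g-E_g+\varepsilon]^{-1}$ does not commute with $P_\rho$, so one cannot directly replace $[H_g-E_g+\varepsilon]^{-1}P_\rho\psi$ by $\varepsilon^{-1}P_\rho\psi$. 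The correct route is to apply the identity $P_\rho[H_g-E_g+\varepsilon]^{-1}P_\rho = F_\rho(\varepsilon)^{-1}$ directly to $P_\rho\psi$ after first using the functional calculus to compute the left-hand side. Indeed, $\mathds{1}_{\{E_g\}}(H_g)[H_g-E_g+\varepsilon]^{-1} = \varepsilon^{-1}\mathds{1}_{\{E_g\}}(H_g)$ and, since $H_g\ge E_g$, the spectral theorem gives that for any $\phi \perp \mathrm{Ran}\,\mathds{1}_{\{E_g\}}(H_g)$ one has $\|[H_g-E_g+\varepsilon]^{-1}\phi\|$ stays bounded as $\varepsilon\to0$ only if $\phi$ avoids the bottom of the spectrum — which we do not know. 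So instead I would argue as follows: write $P_\rho\psi = P_\rho\mathds{1}_{\{E_g\}}(H_g)P_\rho\psi + P_\rho\mathds{1}_{\{E_g\}}(H_g)\bar P_\rho\psi$ — no, cleaner yet: apply $F_\rho(\varepsilon)$ to the vector $P_\rho\mathds{1}_{\{E_g\}}(H_g)\eta$ for arbitrary $\eta$, and compute $F_\rho(\varepsilon)P_\rho\mathds{1}_{\{E_g\}}(H_g)\eta = F_\rho(\varepsilon)\,P_\rho[H_g-E_g+\varepsilon]^{-1}P_\rho\,(H_g-E_g+\varepsilon)\mathds{1}_{\{E_g\}}(H_g)\eta \cdot\varepsilon^{-1}\cdots$. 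The point is that $(H_g-E_g+\varepsilon)\mathds{1}_{\{E_g\}}(H_g)\eta = \varepsilon\mathds{1}_{\{E_g\}}(H_g)\eta$, so
\begin{equation}
F_\rho(\varepsilon)\,P_\rho\,[H_g-E_g+\varepsilon]^{-1}\,P_\rho\,\mathds{1}_{\{E_g\}}(H_g)\,\eta \;=\; P_\rho\,\mathds{1}_{\{E_g\}}(H_g)\,\eta,
\end{equation}
and on the other hand, using that $P_\rho$ and $\mathds{1}_{\{E_g\}}(H_g)$ need not commute but that $[H_g-E_g+\varepsilon]^{-1}$ commutes with $\mathds{1}_{\{E_g\}}(H_g)$, the left-hand side equals $\varepsilon^{-1}F_\rho(\varepsilon)\,P_\rho\,\mathds{1}_{\{E_g\}}(H_g)\,P_\rho\,\mathds{1}_{\{E_g\}}(H_g)\,\eta + F_\rho(\varepsilon)P_\rho[H_g-E_g+\varepsilon]^{-1}\bar P_\rho\mathds{1}_{\{E_g\}}(H_g)\eta$. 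The first term forces $\varepsilon^{-1}F_\rho(\varepsilon)P_\rho\mathds{1}_{\{E_g\}}(H_g)P_\rho$ applied to $\mathds{1}_{\{E_g\}}(H_g)\eta$ to be $O(1)$, hence $F_\rho(\varepsilon)P_\rho\mathds{1}_{\{E_g\}}(H_g)P_\rho = O(\varepsilon)$ on $\mathrm{Ran}\,\mathds{1}_{\{E_g\}}(H_g)$; letting $\varepsilon\to 0^+$ and using $F_\rho(\varepsilon)\to F_\rho(0)$ in norm (Lemma~\ref{lm:existenceF(epsilon)}) yields $F_\rho(0)P_\rho\mathds{1}_{\{E_g\}}(H_g)P_\rho = 0$ on that range, and since $\mathds{1}_{\{E_g\}}(H_g)$ projects onto it, this is exactly \eqref{eq:Feshbach_eps_to_0}.

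The main obstacle — the one subtle point requiring care — is precisely the non-commutation of $P_\rho$ with the resolvent $[H_g-E_g+\varepsilon]^{-1}$ and with $\mathds{1}_{\{E_g\}}(H_g)$: one cannot naively substitute eigenvalue relations inside the sandwiched resolvent. The clean way around it is to evaluate $F_\rho(\varepsilon)^{-1} = P_\rho[H_g-E_g+\varepsilon]^{-1}P_\rho$ on a vector that already lies in the range of the spectral projection $\mathds{1}_{\{E_g\}}(H_g)$, so that $[H_g-E_g+\varepsilon]^{-1}$ acts as scalar multiplication by $\varepsilon^{-1}$ on the relevant component, extract from the resulting identity $F_\rho(\varepsilon)\,P_\rho\,\mathds{1}_{\{E_g\}}(H_g)\,P_\rho = O(\varepsilon)$, and finally invoke the norm-continuity $F_\rho(0)=\lim_{\varepsilon\to0^+}F_\rho(\varepsilon)$ from Lemma~\ref{lm:existenceF(epsilon)} to pass to the limit. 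The boundedness of all operators involved (in particular $F_\rho(\varepsilon)$, again by Lemma~\ref{lm:existenceF(epsilon)}) guarantees that these limits and manipulations are legitimate, and no further estimates beyond those already in Section~\ref{section:Feshbach} are needed.
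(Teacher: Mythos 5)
Your overall strategy — invoke the Feshbach--Schur identity of Lemma~\ref{lm:Feshbach}, exploit the scalar action of the resolvent on the ground-state eigenspace, use the norm convergence $F_\rho(\varepsilon)\to F_\rho(0)$ from Lemma~\ref{lm:existenceF(epsilon)}, and pass to the limit $\varepsilon\to 0^+$ — is the same as the paper's, but your execution has a genuine gap. After decomposing the resolvent via $\mathds{1}=\mathds{1}_{\{E_g\}}(H_g)+(\mathds{1}-\mathds{1}_{\{E_g\}}(H_g))$ (which is evidently what you intend, even though you wrote $\bar P_\rho$ in the second term where it should read $(\mathds{1}-\mathds{1}_{\{E_g\}}(H_g))P_\rho$), you assert that ``the first term forces $\varepsilon^{-1}F_\rho(\varepsilon)P_\rho\mathds{1}_{\{E_g\}}(H_g)P_\rho$ applied to $\mathds{1}_{\{E_g\}}(H_g)\eta$ to be $O(1)$.'' For that to follow from the identity, the remaining term would have to be $O(1)$ as $\varepsilon\to 0^+$, and you offer no argument. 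In fact one cannot expect this: $E_g$ is an eigenvalue embedded at the bottom of the essential spectrum (there is no gap for a massless photon field), so $[H_g-E_g+\varepsilon]^{-1}(\mathds{1}-\mathds{1}_{\{E_g\}}(H_g))$ is \emph{not} uniformly bounded in $\varepsilon$; the functional calculus only gives $\varepsilon[H_g-E_g+\varepsilon]^{-1}(\mathds{1}-\mathds{1}_{\{E_g\}}(H_g))\to 0$ strongly. Without control of that second term, your deduction $F_\rho(\varepsilon)P_\rho\mathds{1}_{\{E_g\}}(H_g)P_\rho=O(\varepsilon)$ is unjustified. (There is also an algebraic slip earlier: inserting $[H_g-E_g+\varepsilon]^{-1}P_\rho(H_g-E_g+\varepsilon)$ as if it were the identity is wrong, since $P_\rho$ does not commute with $H_g$; but the displayed equation that follows is nonetheless the correct consequence of Lemma~\ref{lm:Feshbach} applied to $\mathds{1}_{\{E_g\}}(H_g)\eta$, so that slip is inconsequential.)

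The repair — and this is precisely what the paper does, more simply than your attempted decomposition — is to multiply the full identity $F_\rho(\varepsilon)P_\rho[H_g-E_g+\varepsilon]^{-1}P_\rho=P_\rho$ by $\varepsilon$ \emph{before} taking the limit. The right-hand side $\varepsilon P_\rho$ tends to $0$ in norm. On the left, $\varepsilon[H_g-E_g+\varepsilon]^{-1}\to\mathds{1}_{\{E_g\}}(H_g)$ strongly with operator norms uniformly bounded by $1$ (functional calculus, using $H_g\ge E_g$), while $F_\rho(\varepsilon)\to F_\rho(0)$ in norm; the product therefore converges strongly to $F_\rho(0)P_\rho\mathds{1}_{\{E_g\}}(H_g)P_\rho$, which must vanish. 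Multiplying by $\varepsilon$ up front is exactly what tames the unboundedness of the resolvent near the embedded eigenvalue, and it makes your spectral decomposition — and the unverified $O(1)$ claim — unnecessary.
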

\begin{proof}
We obtain from \eqref{eq:Feshbach} that
\begin{equation}\label{eq:Feshbach_bis}
F_\rho (\varepsilon) P_\rho [ H_g - E_g + \varepsilon ]^{-1} P_\rho = P_\rho.
\end{equation}
for all $\varepsilon > 0$. It follows from the functional calculus that
\begin{equation}
 \mathrm{s}-\lim_{\varepsilon \to 0^+} \varepsilon [ H_g - E_g + \varepsilon ]^{-1} = \mathds{1}_{\{E_g\}}(H_g),
\end{equation}
where $\mathrm{s}-\lim$ stands for strong limit. Hence, using \eqref{eq:F(epsilon)_to_F(0)}, we obtain \eqref{eq:Feshbach_eps_to_0} by multiplying \eqref{eq:Feshbach_bis} by $\varepsilon$ and letting $\varepsilon$ go to 0.
\end{proof}
Introducing \eqref{eq:Neumann} into \eqref{eq:F(epsilon)}, we obtain the following identity which holds for any $\varepsilon \ge 0$ and $g^2 \ll \rho \ll 1$: 
\begin{align}
& F_\rho (\varepsilon) = \big ( H_0 - E_g + \varepsilon \big ) P_\rho + P_\rho W_g P_\rho \notag \\
& - \sum_{ n \ge 0 } P_\rho W_g \big [ H_0 - E_g + \varepsilon \big ]^{-1}  \big ( - \bar P_\rho W_g \bar P_\rho \big [ H_0 - E_g + \varepsilon \big ]^{-1} \big )^n \bar P_\rho W_g P_\rho. \label{eq:F(epsilon)_2}
\end{align}
The next lemma will be used in the proof of Theorem \ref{thm:main}.
\begin{lemma}\label{lm:term_order_g^2}
There exist $g_c>0$ and $p_c>0$ such that, for all $0\le|g| \le g_c$ and $0\le|P|\le p_c$,
\begin{align}
& P_0 F_\rho (0) P_0 = \big ( E_0 - E_g \big ) P_0 \notag \\
&\quad - \sum_{\lambda=1,2} \int_{\mathbb{R}^3} P_0 \tilde w(r,k,\lambda) \big [ H_r + \frac{1}{2M} ( P - k )^2 + |k| - E_g \big ]^{-1} w(r,k,\lambda) P_0 \d k \notag \\
&\quad + O( |g|^{2+\tau} ), \label{eq:F(epsilon)_3}
\end{align}
where $\rho = |g|^{2-2\tau}$, $\tau>0$ is fixed sufficiently small, and
\begin{align}
w(r,k,\lambda) := & - \frac{ g }{ m_\el } \left ( \big ( \frac{ m_\el }{ M } ( P - P_\ph ) + p_r \big ) \cdot h^A ( \frac{ m_\el }{ M } g^{\frac{2}{3}} r , k , \lambda ) \right ) \notag \\
& + \frac{ g }{ m_\n } \left ( \big ( \frac{ m_\n }{ M } ( P - P_\ph ) - p_r \big ) \cdot h^A ( - \frac{ m_\n }{ M } g^{\frac{2}{3}} r , k , \lambda ) \right ) \notag \\
& - \frac{g}{2m_\el} \sigma^{\el} \cdot h^B( \frac{ m_\el }{ M } g^{\frac{2}{3}} r , k , \lambda ) + \frac{g}{2m_\n} \sigma^{\n} \cdot h^B ( - \frac{ m_\n }{ M } g^{\frac{2}{3}} r , k , \lambda ), \label{eq:w(r,k,lambda)}
\end{align}
respectively
\begin{align}
\tilde w(r,k,\lambda) := & - \frac{ g }{ m_\el } \left ( \big ( \frac{ m_\el }{ M } ( P - P_\ph ) + p_r \big ) \cdot \bar{h}^A ( \frac{ m_\el }{ M } g^{\frac{2}{3}} r , k , \lambda ) \right ) \notag \\
& + \frac{ g }{ m_\n } \left ( \big ( \frac{ m_\n }{ M } ( P - P_\ph ) - p_r \big ) \cdot \bar{h}^A ( - \frac{ m_\n }{ M } g^{\frac{2}{3}} r , k , \lambda ) \right ) \notag \\
& - \frac{g}{2m_\el} \sigma^{\el} \cdot \bar{h}^B( \frac{ m_\el }{ M } g^{\frac{2}{3}} r , k , \lambda ) + \frac{g}{2m_\n} \sigma^{\n} \cdot \bar{h}^B ( - \frac{ m_\n }{ M } g^{\frac{2}{3}} r , k , \lambda ). \label{eq:tilde_w(r,k,lambda)}
\end{align}
\end{lemma}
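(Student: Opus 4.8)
The statement is an explicit computation of $P_0 F_\rho(0)P_0$ starting from the Neumann-series form \eqref{eq:F(epsilon)_2}, and the strategy is to isolate the genuinely second-order (in $g$) contribution and bound everything else by $O(|g|^{2+\tau})$. First I would set $\varepsilon = 0$ in \eqref{eq:F(epsilon)_2} and sandwich it between $P_0$ on both sides. The leading term $(H_0 - E_g)P_\rho$ contributes $P_0(H_0-E_g)P_0 = (E_0 - E_g)P_0$, since $P_0$ projects onto $\phi_0\otimes\Omega$ (times $\mathbb{C}^4$), on which $-\tfrac{1}{M}P\cdot P_\ph + \tfrac{1}{2M}P_\ph^2 + H_\ph$ vanishes. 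The term $P_0 W_g P_0$ vanishes because $W_g$ (see \eqref{eq:Wg}) is linear in $a^*,a$ and hence off-diagonal with respect to $P_\Omega$; note $P_0 W_g P_0 = P_0 P_\rho W_g P_\rho P_0$ here since $P_0 \le P_\rho$.

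Next I would analyze the sum $\sum_{n\ge0} P_0 W_g [H_0 - E_g]^{-1}(-\bar P_\rho W_g \bar P_\rho [H_0-E_g]^{-1})^n \bar P_0 W_g P_0$. The $n=0$ term is $P_0 W_g \bar P_\rho [H_0 - E_g]^{-1}\bar P_0 W_g P_0$. Because $W_g P_0$ creates exactly one photon (the $a^*$ part of $\Phi$; the $a$ part kills the vacuum), the intermediate state lies in the one-photon sector with photon energy $|k| \le$ anything, so $\bar P_\rho$ may be replaced by $\mathds{1}$ up to an error supported on $\{|k|\le\rho\}$ which is $O(g^2 \rho^{1/2}) = O(|g|^{3-\tau})$ by the $|k|^{-1/2}$-type integrability of $h^A$ (and boundedness of $h^B$), hence absorbed into $O(|g|^{2+\tau})$ for $\tau$ small. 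On that one-photon sector, $H_0$ acts as $H_r + \tfrac{1}{2M}(P-k)^2 + |k|$, and writing $W_g P_0 = \sum_\lambda \int a^*_\lambda(k)\, w(r,k,\lambda)\,\d k\, P_0$ and dually $P_0 W_g = P_0 \sum_\lambda\int \tilde w(r,k,\lambda)\, a_\lambda(k)\,\d k$, one pulls the resolvent through the annihilation operator and obtains exactly the displayed integral $\sum_\lambda\int P_0 \tilde w(r,k,\lambda)[H_r + \tfrac{1}{2M}(P-k)^2 + |k| - E_g]^{-1} w(r,k,\lambda)P_0\,\d k$, up to commutators of $P_\ph$ with $a_\lambda(k)$ that produce additional factors of $k$ inside $w,\tilde w$ which are already accounted for in \eqref{eq:w(r,k,lambda)}--\eqref{eq:tilde_w(r,k,lambda)}.

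The terms with $n\ge1$ in the sum I would bound directly: each carries a factor $P_0 W_g[H_0-E_g]^{-1}$ (which is $O(|g|\rho^{-1/2})$ on $\mathrm{Ran}\,\bar P_\rho$ by Lemma~\ref{lm:H_0_barPrho} and Lemma~\ref{lm:estimate_Wg}, but on the other side we gain because $W_g P_0$ only reaches the sector where $H_0 - E_g \sim |k|$), times $n$ further factors $\bar P_\rho W_g\bar P_\rho[H_0-E_g]^{-1}$ each of norm $\le \mathrm{C}|g|\rho^{-1/2}$ as in \eqref{eq:estimate_Neumann}, times a final $\bar P_0 W_g P_0 = O(|g|)$. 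A more careful bookkeeping, as in \eqref{eq:estimate_Neumann_2}, gives that the $n\ge1$ terms sum to $O(g^2 \cdot |g|\rho^{-1/2}) = O(|g|^{3}\cdot|g|^{-1+\tau}) = O(|g|^{2+\tau})$, again for $\rho = |g|^{2-2\tau}$. Collecting the three pieces yields \eqref{eq:F(epsilon)_3}. The main obstacle is the careful treatment of the $n=0$ term: replacing $\bar P_\rho$ by $\mathds{1}$ on the one-photon sector and handling the noncommutativity of $P_\ph$ with $a_\lambda(k)$ so that the surviving operators are precisely $w$ and $\tilde w$ as written, while keeping every discarded piece of size $O(|g|^{2+\tau})$; this requires the creation/annihilation estimates of Appendix~\ref{section:estimates} applied with the explicit form factors \eqref{eq:h^A}--\eqref{eq:h^B}.
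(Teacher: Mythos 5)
Your proof takes essentially the same route as the paper: insert the Neumann series \eqref{eq:F(epsilon)_2} into $F_\rho(0)$, sandwich with $P_0$, extract the $n=0$ term, and bound everything else using \eqref{eq:estimate_Neumann_2} and the choice $\rho=|g|^{2-2\tau}$. However, there is a genuine gap in your handling of the Wick-ordered quadratic-in-field pieces of $W_g$. You assert twice that $W_g$ is ``linear in $a^*,a$'' and that $W_g P_0$ ``creates exactly one photon.'' Neither is true: $W_g$, from \eqref{eq:Wg}, contains the $\tfrac{g^2}{2m}A(\cdot)^2$ terms, whose Wick-ordered $a^*a^*$ part produces a two-photon state from the vacuum. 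This affects two of your steps. First, for $P_0 W_g P_0 = 0$ the correct justification is \eqref{eq:Wg_Wick}: after Wick ordering every monomial in $W_g$ has at least one uncompensated creation or annihilation operator, so $P_\Omega W_g P_\Omega = 0$; your appeal to linearity is not the right reason, even though the conclusion is right. Second, for the $n=0$ term $P_0 W_g[H_0-E_g]^{-1}\bar P_\rho W_g P_0$, the two-photon intermediate sector does contribute and must be shown to be subleading. This is easy (reaching it forces both $W_g$ factors into their quadratic parts, each carrying $g^2$, giving $O(g^4)$; the paper's Lemma~\ref{lm:order_g2} simply lumps all such pieces into $O(|g|^3)+O(g^2\rho)$), but as written your reduction to a purely one-photon computation is unjustified. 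The remainder of your argument — the leading term $(E_0-E_g)P_0$, the replacement of $\bar P_\rho$ by $\mathds{1}$ on the one-photon sector with error absorbed into $O(|g|^{2+\tau})$, and the $n\ge1$ tail of order $O(|g|^3\rho^{-1/2})=O(|g|^{2+\tau})$ — agrees with the paper's proof.
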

\begin{proof}
Since $P_0 P_\rho = P_\rho P_0 = P_0$ and $H_0 P_0 = E_0 P_0$, we have that
\begin{align}
& P_0 F_\rho(0) P_0 = \big ( E_0 - E_g \big ) P_0 + P_0 W_g P_0 - P_0 W_g \big [ H_0 - E_g \big ]^{-1} \bar P_\rho W_g P_0 \notag \\
& - \sum_{ n \ge 1 } P_0 W_g \big [ H_0 - E_g \big ]^{-1}  \big ( - \bar P_\rho W_g \bar P_\rho \big [ H_0 - E_g \big ]^{-1} \big )^n \bar P_\rho W_g P_0.
\end{align}
By \eqref{eq:Wg_Wick}, $P_0 W_g P_0 = 0$. Moreover, using \eqref{eq:estimate_Neumann_2} for $n \ge 1$, we obtain that %as in \eqref{eq:estimate_Neumann}, we obtain from Appendix ?? that, for any $n \in \mathbb{N}$,
\begin{align}
P_0 F_\rho (0) P_0 &= \big ( E_0 - E_g \big ) P_0 - P_0 W_g \big [ H_0 - E_g \big ]^{-1} \bar P_\rho W_g P_0 + O( |g|^3 \rho^{-\frac{1}{2}} ). %\notag \\
\end{align}
We conclude the proof by applying Lemma \ref{lm:order_g2} of Appendix \ref{section:estimates}.
\end{proof}
%
%

%%%%%%%%%%%%%%%%%%%%%%%%%%%%%%%%%%%%%%%%%%%%%%%%%%%%%%%%%%
%%%%%%%%%%%%%%%%%%%%%%%%%%%%%%%%%%%%%%%%%%%%%%%%%%%%%%%%%%
%%%%%%%%%%%%%%%%%%%%%%%%%%%%%%%%%%%%%%%%%%%%%%%%%%%%%%%%%%
%%%%%%%%%%%%%%%%%%%%%%%%%%%%%%%%%%%%%%%%%%%%%%%%%%%%%%%%%%
\section{Proof of Theorem \ref{thm:main}}\label{section:proof}
From now on we assume that $\mathrm{dim} \, \mathrm{Ker} \, ( H_g - E_g ) = 4$, which will lead to a contradiction at the end of this section.
\begin{lemma}\label{lm:P0PgP0_invertible}
There exist $g_c>0$ and $p_c>0$ such that, for all $0\le|g|\le g_c$ and $0\le|P|\le p_c$, the following holds: If $\mathrm{dim} \, \mathrm{Ker} \, ( H_g - E_g ) = 4$, then $P_0 \mathds{1}_{ \{E_g\} }( H_g ) P_0$ is invertible on $\mathrm{Ran}(P_0)$ and satisfies
\begin{equation}\label{eq:||P0PgP0^-1||}
\big \|[ P_0 \mathds{1}_{ \{E_g\} }( H_g ) P_0 ]^{-1} \big \| \le \frac{1}{1 - \mathrm{C} g^2 }.
\end{equation}
\end{lemma}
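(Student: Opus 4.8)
The plan is to work with the orthogonal projection $\Pi := \mathds{1}_{\{E_g\}}(H_g)$ onto the perturbed ground-state eigenspace, and to exploit two facts: that $\mathrm{Ran}(\Pi)$ and $\mathrm{Ran}(P_0)$ both have dimension $4$ (the former by the standing hypothesis $\dim\mathrm{Ker}(H_g-E_g)=4$, the latter by the computation following \eqref{eq:def_H0}), and that these subspaces are close, namely $\|\bar P_0\Pi\| \le \mathrm{C}g^2$. The latter estimate is exactly the one recorded in Appendix \ref{section:estimates} and quoted in the introduction. Fix $g_c$ small enough that $\mathrm{C}g_c^2 < 1$. The first (and only non-routine) step is to upgrade $\|\bar P_0\Pi\| \le \mathrm{C}g^2$ to the \emph{symmetric} bound $\|\bar\Pi P_0\| \le \mathrm{C}g^2$.

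For this I would use the following standard observation about orthogonal projections onto equidimensional finite-dimensional subspaces. Since $\|\bar P_0\Pi\| < 1$, the restriction of $P_0\Pi$ to $\mathrm{Ran}(\Pi)$ is injective into $\mathrm{Ran}(P_0)$: if $\phi\in\mathrm{Ran}(\Pi)$ satisfies $P_0\phi=0$, then $\phi = \bar P_0\phi = \bar P_0\Pi\phi$, whence $\|\phi\| \le \mathrm{C}g^2\|\phi\|$ and $\phi=0$. As $\mathrm{Ran}(\Pi)$ and $\mathrm{Ran}(P_0)$ have the same (finite) dimension $4$, this restriction is in fact a bijection $\mathrm{Ran}(\Pi)\to\mathrm{Ran}(P_0)$, and a direct check shows that its Hilbert-space adjoint is the restriction of $\Pi P_0$ to a bijection $\mathrm{Ran}(P_0)\to\mathrm{Ran}(\Pi)$. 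Mutually adjoint maps between spaces of equal finite dimension have the same singular values; comparing the smallest ones gives $1-\|\bar P_0\Pi\|^2 = 1-\|\bar\Pi P_0\|^2$, hence $\|\bar\Pi P_0\| = \|\bar P_0\Pi\| \le \mathrm{C}g^2$. (Equivalently, one may simply invoke the symmetry of the principal angles between two subspaces of equal finite dimension.)

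With this in hand the conclusion is immediate. The operator $P_0\mathds{1}_{\{E_g\}}(H_g)P_0 = P_0\Pi P_0$ is bounded and self-adjoint, leaves $\mathrm{Ran}(P_0)$ invariant, and for every $\psi\in\mathrm{Ran}(P_0)$, using $P_0\psi=\psi$ and $\Pi=\Pi^*=\Pi^2$, one has $\langle\psi,P_0\Pi P_0\psi\rangle = \|\Pi\psi\|^2 = \|\psi\|^2 - \|\bar\Pi\psi\|^2 = \|\psi\|^2 - \|\bar\Pi P_0\psi\|^2 \ge (1-\mathrm{C}^2 g^4)\|\psi\|^2$. Therefore $P_0\Pi P_0 \ge (1-\mathrm{C}^2 g^4)P_0$ on $\mathrm{Ran}(P_0)$, so its restriction there is invertible with $\big\|[P_0\mathds{1}_{\{E_g\}}(H_g)P_0]^{-1}\big\| \le (1-\mathrm{C}^2g^4)^{-1}$, which is bounded by $(1-\mathrm{C}g^2)^{-1}$ after relabeling the constant and shrinking $g_c$ if necessary; this is \eqref{eq:||P0PgP0^-1||}.

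I expect the only genuinely delicate point to be the symmetrization $\|\bar P_0\Pi\| = \|\bar\Pi P_0\|$: this is precisely where the contradiction hypothesis $\dim\mathrm{Ker}(H_g-E_g) = 4 = \dim\mathrm{Ran}(P_0)$ enters in an essential way, since without equal dimensions the two gap quantities need not coincide. Everything else is elementary bookkeeping with orthogonal projections and does not use anything beyond the Appendix estimate.
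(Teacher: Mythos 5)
Your proof is correct, and it takes a genuinely different route from the paper's. The paper stays entirely at the trace level: it notes that $P_0 - P_0\,\mathds{1}_{\{E_g\}}(H_g)\,P_0 = P_0\bar\Pi P_0$ (with $\Pi := \mathds{1}_{\{E_g\}}(H_g)$ and $\bar\Pi := \mathds{1}-\Pi$) is a finite-rank positive operator, bounds its norm by its trace, and uses cyclicity together with $\mathrm{tr}(P_0)=\mathrm{tr}(\Pi)=4$ to identify $\mathrm{tr}(P_0\bar\Pi P_0)=\mathrm{tr}(\bar P_0\Pi)$, which the Appendix (Lemmata \ref{lm:barP0_Pg} and \ref{lm:Pphi0_Pg}) controls by $\mathrm{C}g^2$. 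Equidimensionality enters exactly once, in the cancellation $\mathrm{tr}(P_0)-\mathrm{tr}(\Pi)=0$. Your proof instead works at the operator level: you symmetrize the gap, proving $\|\bar\Pi P_0\|=\|\bar P_0\Pi\|$ via the singular-value matching between $P_0|_{\mathrm{Ran}(\Pi)}$ and its adjoint $\Pi|_{\mathrm{Ran}(P_0)}$, and then get the lower bound on the quadratic form directly. Here equidimensionality is used to make $P_0|_{\mathrm{Ran}(\Pi)}$ a bijection. The paper's route is shorter and needs only positivity of $P_0\bar\Pi P_0$ plus the trace identity; yours isolates the principal-angle symmetry, which is a cleaner structural statement and would generalize to settings where trace bookkeeping is less transparent. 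One caveat: you quote $\|\bar P_0\Pi\|\le\mathrm{C}g^2$ as the Appendix input, but the Appendix lemmata directly control only $\mathrm{tr}(\bar P_0\Pi)\le \mathrm{C}g^2$; passing to a norm bound via $\|\bar P_0\Pi\|^2=\|\Pi\bar P_0\Pi\|\le\mathrm{tr}(\Pi\bar P_0\Pi)=\mathrm{tr}(\bar P_0\Pi)$ yields only $\|\bar P_0\Pi\|\le\mathrm{C}^{1/2}|g|$. This is harmless for your argument (you then obtain $1-\mathrm{C}g^2$ rather than $1-\mathrm{C}^2g^4$ as the lower bound on the quadratic form, which is exactly what \eqref{eq:||P0PgP0^-1||} requires), but it is worth being explicit so as not to rely on a stronger estimate than is actually proved.
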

\begin{proof}
In order to prove that $P_0 \mathds{1}_{ \{E_g\} }( H_g ) P_0$ is invertible on $\mathrm{Ran}(P_0)$, it suffices to show that 
\begin{equation}\label{P0PgP0_invertible_1}
\big \| P_0 - P_0 \mathds{1}_{ \{E_g\} }( H_g ) P_0 \big \| < 1.
\end{equation}
Observe that $P_0 - P_0 \mathds{1}_{ \{E_g\} }( H_g ) P_0$ is a finite rank and positive operator. We have that
\begin{align}
\big \| P_0 - P_0 \mathds{1}_{ \{E_g\} }( H_g ) P_0 \big \| &\le \mathrm{tr}( P_0 - P_0 \mathds{1}_{ \{E_g\} }( H_g ) P_0 ) \notag \\
& = \mathrm{tr} ( P_0 ) - \mathrm{tr} ( P_0 \mathds{1}_{ \{E_g\} }( H_g ) ) \notag \\
& = \mathrm{tr} ( P_0 ) - \mathrm{tr}( \mathds{1}_{ \{E_g\} }( H_g ) ) + \mathrm{tr}( \bar P_0 \mathds{1}_{ \{E_g\} }( H_g ) ) \notag \\
& = 4 - 4 + \mathrm{tr}( \bar P_0 \mathds{1}_{ \{E_g\} }( H_g ) ) = \mathrm{tr}( \bar P_0 \mathds{1}_{ \{E_g\} }( H_g ) ). \label{P0PgP0_invertible_2}
\end{align}
The projection $\bar P_0$ can be decomposed as
\begin{equation}\label{P0PgP0_invertible_3}
\bar P_0 = \mathds{1} \otimes \bar P_{\phi_0} \otimes P_\Omega + \mathds{1} \otimes \mathds{1} \otimes \bar P_\Omega.
\end{equation}
It follows from Lemma \ref{lm:Pphi0_Pg} that 
\begin{equation}\label{P0PgP0_invertible_4}
\mathrm{tr}( ( \mathds{1} \otimes \bar P_{\phi_0} \otimes P_\Omega ) \mathds{1}_{ \{E_g\} }( H_g ) ) \le \mathrm{C} g^2,
\end{equation}
and from Lemma \ref{lm:barP0_Pg} that
\begin{equation}\label{P0PgP0_invertible_5}
\mathrm{tr}( ( \mathds{1} \otimes \mathds{1} \otimes \bar P_\Omega ) P_g ) \le \mathrm{tr}( \mathcal{N}_\ph P_g ) \le \mathrm{C} g^2.
\end{equation}
Therefore, $\| P_0 - P_0 \mathds{1}_{ \{E_g\} }( H_g ) P_0 \| \le \mathrm{C} g^2$. The invertibility of $P_0 \mathds{1}_{ \{E_g\} }( H_g ) P_0$ and Equation \eqref{eq:||P0PgP0^-1||} directly follow from the latter estimate.
\end{proof}
As a consequence of Lemma \ref{lm:P0PgP0_invertible}, we obtain the following lemma.
\begin{lemma}\label{lm:term_order_g^2_1}
Let $\Gamma$ denote the operator on $\mathrm{Ran}(P_0)$ defined by
\begin{equation}
\Gamma := \sum_{\lambda=1,2} \int_{\mathbb{R}^3} P_0 \tilde w(r,k,\lambda) \big [ H_r + \frac{1}{2M} ( P - k )^2 + |k| - E_g \big ]^{-1} w(r,k,\lambda) P_0 \d k, \label{eq:def_M}
\end{equation}
with $w(r,k,\lambda)$ and $\tilde w(r,k,\lambda)$ as in \eqref{eq:w(r,k,lambda)}--\eqref{eq:tilde_w(r,k,lambda)}. There exist $g_c>0$ and $p_c>0$ such that, for all $0\le |g| \le g_c$ and $0\le|P|\le p_c$, the following holds: If $\mathrm{dim} \, \mathrm{Ker} \, ( H_g - E_g ) = 4$, then 
\begin{align}
\Gamma = \big ( E_0 - E_g \big ) P_0 + O( |g|^{2+\tau} ), \label{eq:term_order_g^2_1}
\end{align}
where $\tau>0$ is fixed sufficiently small.
\end{lemma}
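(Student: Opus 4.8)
The plan is to feed Lemma~\ref{lm:F(0)=0} and the invertibility of $P_0\,\mathds{1}_{\{E_g\}}(H_g)\,P_0$ from Lemma~\ref{lm:P0PgP0_invertible} into the expansion of $P_0 F_\rho(0) P_0$ supplied by Lemma~\ref{lm:term_order_g^2}. Write $P_g := \mathds{1}_{\{E_g\}}(H_g)$ for brevity. Since $\mathrm{Ran}(P_0) = \mathrm{Ran}(\mathds{1}\otimes P_{\phi_0}\otimes P_\Omega) \subset \mathrm{Ran}(P_\rho)$, one has $P_0 P_\rho = P_\rho P_0 = P_0$, so multiplying \eqref{eq:Feshbach_eps_to_0} on the left and on the right by $P_0$ gives $P_0 F_\rho(0)\, P_\rho\, P_g\, P_0 = 0$. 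I would then split $P_\rho = P_0 + (P_\rho - P_0)$ in the middle factor: the operator $P_\rho - P_0$ is an orthogonal projection dominated by $\bar P_0$, so, using $\|F_\rho(0)\| \le \mathrm{C}\rho$ from \eqref{eq:||F(0)||} together with the bound $\|\bar P_0\, P_g\| \le \mathrm{C} g^2$ (see Appendix~\ref{section:estimates}), its contribution is $O(\rho\, g^2)$. For the choice $\rho = |g|^{2-2\tau}$ this yields
\[
\big( P_0 F_\rho(0) P_0 \big)\big( P_0 P_g P_0 \big) = O(|g|^{4-2\tau}) .
\]

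Next, under the standing assumption $\dim\mathrm{Ker}(H_g - E_g) = 4$, Lemma~\ref{lm:P0PgP0_invertible} makes $P_0 P_g P_0$ invertible on $\mathrm{Ran}(P_0)$ with $\|(P_0 P_g P_0)^{-1}\| \le (1-\mathrm{C}g^2)^{-1}$, which is bounded for $|g|$ small. Multiplying the previous display on the right by this inverse gives $P_0 F_\rho(0) P_0 = O(|g|^{4-2\tau})$. Comparing with the identity furnished by Lemma~\ref{lm:term_order_g^2}, namely $P_0 F_\rho(0) P_0 = (E_0 - E_g) P_0 - \Gamma + O(|g|^{2+\tau})$ with $\Gamma$ as in \eqref{eq:def_M}, and using $4 - 2\tau \ge 2 + \tau$ for $\tau$ small, one concludes $\Gamma = (E_0 - E_g) P_0 + O(|g|^{2+\tau})$, which is \eqref{eq:term_order_g^2_1}.

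The computation is essentially bookkeeping in powers of $g$ and $\rho$; the genuine input, and the only place the contradiction hypothesis is used, is the invertibility of $P_0 P_g P_0$ from Lemma~\ref{lm:P0PgP0_invertible}, which permits the factor $P_0 P_g P_0$ to be cancelled and $P_0 F_\rho(0) P_0$ to be isolated. The main point to watch is that each error term stays $O(|g|^{2+\tau})$ or smaller for the scaling $\rho = |g|^{2-2\tau}$, which forces $\tau$ to be taken small, compatibly with the constraint $g^2 \ll \rho \ll 1$.
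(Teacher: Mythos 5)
Your proposal is correct and follows the paper's argument step for step: multiply the Feshbach identity $F_\rho(0)P_\rho\mathds{1}_{\{E_g\}}(H_g)P_\rho=0$ by $P_0$, insert a splitting of the middle projection (your $P_\rho = P_0 + (P_\rho - P_0)$ is literally the paper's insertion of $\mathds{1}=P_0+\bar P_0$, since $P_\rho\bar P_0 = P_\rho - P_0$), bound the remainder by $\mathrm{C}\rho\, g^2$ via $\|F_\rho(0)\|\le\mathrm{C}\rho$ and $\|\bar P_0\,\mathds{1}_{\{E_g\}}(H_g)\|\le\mathrm{C}g^2$, cancel $P_0\mathds{1}_{\{E_g\}}(H_g)P_0$ using Lemma~\ref{lm:P0PgP0_invertible}, and compare with Lemma~\ref{lm:term_order_g^2}. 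Same approach, same bookkeeping of powers of $g$ and $\rho$.
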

\begin{proof}
Fix $\rho = |g|^{2-2\tau}$ for some sufficiently small $\tau>0$. Multiplying both sides of Equation \eqref{eq:Feshbach_eps_to_0} by $P_0$, we get
\begin{equation}\label{eq:term_order_g^2_2}
P_0 F_\rho(0) P_\rho \mathds{1}_{ \{E_g \} }( H_g ) P_0 = 0.
\end{equation}
Introducing the decomposition $\mathds{1} = P_0 + \bar P_0$ into \eqref{eq:term_order_g^2_2} and using Lemma \ref{lm:P0PgP0_invertible}, this yields
\begin{equation}\label{eq:term_order_g^2_3}
P_0 F_\rho(0) P_0 = - P_0 F_\rho(0) P_\rho \bar P_0 \mathds{1}_{ \{E_g \} }( H_g ) P_0 [ P_0 \mathds{1}_{ \{E_g \} }( H_g ) P_0 ]^{-1}. 
\end{equation}
By Equations \eqref{P0PgP0_invertible_3}, \eqref{P0PgP0_invertible_4} and \eqref{P0PgP0_invertible_5}, we learn that
\begin{equation}
\big \| \bar P_0 \mathds{1}_{ \{E_g \} }( H_g ) \big \| \le \mathrm{tr}( \bar P_0 \mathds{1}_{ \{E_g \} }( H_g ) ) \le \mathrm{C} g^2,
\end{equation}
which, combined with \eqref{eq:||F(0)||} and \eqref{eq:||P0PgP0^-1||}, implies that
\begin{equation}
\big \|P_0 F_\rho(0) P_\rho \bar P_0 \mathds{1}_{ \{E_g \} }( H_g ) P_0 [ P_0 \mathds{1}_{ \{E_g \} }( H_g ) P_0 ]^{-1} \big \| \le \mathrm{C} g^2 \rho = \mathrm{C} |g|^{4 - 2\tau}.
\end{equation}
We conclude the proof thanks to Lemma \ref{lm:term_order_g^2}.
\end{proof}
Let us consider the canonical orthonormal basis of $\mathbb{C}^4$ in which the Pauli matrices $\sigma^\el_j$, $\sigma^\n_j$, $j \in \{1,2,3\}$, are given by \eqref{eq:sigma_el}--\eqref{eq:sigma_n}. Obviously, $\Gamma$ identifies with a $4\times 4$ matrix in this basis. In the next theorem, we determine a non-diagonal coefficient of $\Gamma$ of the form $-\mathrm{C}_0 g^2 + o(g^2)$ with $\mathrm{C}_0 > 0$.
\begin{theorem}\label{thm:M32}
Let $\Gamma$ be given as in \eqref{eq:def_M}. There exist $g_c>0$ and $p_c>0$ such that, for all $0\le |g| \le g_c$ and $0\le|P|\le p_c$, the coefficient of $\Gamma$ located on the third line and second column, $\Gamma_{32}$, satisfies
\begin{equation}\label{eq:M32}
\Gamma_{32} = - \mathrm{C}_0 g^2 + O( |g|^{ \frac{8}{3} } ),
\end{equation}
where $\mathrm{C}_0$ is a strictly positive constant independent of $g$.
\end{theorem}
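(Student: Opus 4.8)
The plan is to extract the single matrix entry $\Gamma_{32}$ from the operator $\Gamma$ defined in \eqref{eq:def_M} and show that, to leading order in $g$, it equals $-\mathrm{C}_0 g^2$ with $\mathrm{C}_0>0$. First I would note that $P_0 = \mathds{1}\otimes P_{\phi_0}\otimes P_\Omega$, so applying $P_0$ on both sides kills everything except the part of $\tilde w$ that contains a creation operator (acting on $\Omega$ to the right) paired with the part of $w$ that contains an annihilation operator; only the ``one-photon'' sector contributes. Moreover $[H_r+\tfrac{1}{2M}(P-k)^2+|k|-E_g]^{-1}$ sandwiched between $P_{\phi_0}$ on both sides, after using $H_r\phi_0=e_0\phi_0$, reduces (up to $P^2$-dependent and $E_0-E_g=O(g^2)$ corrections) to multiplication by $[\,|k| + P^2/2M - P\cdot k/M + k^2/2M - (E_g-e_0)\,]^{-1}$, which for small $P$ and small $g$ is comparable to $|k|^{-1}$ on the support $|k|\le\Lambda$. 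So $\Gamma_{32}$ becomes $g^2$ times a convergent $k$-integral of a bilinear expression in the form factors $h^A, h^B$ and the Pauli matrices, evaluated in the state $\phi_0\otimes\Omega$, plus a genuinely smaller remainder.

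The next step is to identify which term in $w$ and $\tilde w$ actually produces a nonzero $(3,2)$ entry. Inspecting \eqref{eq:sigma_el}--\eqref{eq:sigma_n}, the $(3,2)$ position is reached by a product of a spin-flip in the electron sector and a spin-flip in the nucleus sector: among $\sigma^\el_1,\sigma^\el_2$ (off-diagonal $2\times2$ blocks) and $\sigma^\n_1,\sigma^\n_2$ (off-diagonal within blocks), exactly one combination lands on line $3$, column $2$. The $A$-terms in $w$ are diagonal in spin space, hence they can only reach $(3,2)$ when multiplied against a spin-flip from the $B$-term of $\tilde w$ — but then the other factor is diagonal, contributing to $(3,3)$ or $(2,2)$, not $(3,2)$; so in fact the leading contribution to $\Gamma_{32}$ comes precisely from the cross term $\big(\tfrac{g}{2m_\el}\sigma^\el\cdot \bar h^B\big)\big(\tfrac{g}{2m_\n}\sigma^\n\cdot h^B\big)$ (and its mirror with $\el\leftrightarrow\n$), since the two $\sigma^\el\sigma^\n$ factors commute and multiply to a matrix with a nonzero $(3,2)$ slot. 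I would compute this slot explicitly: it reduces to $g^2$ times $\sum_\lambda\int \big(\text{components of }\tfrac{k}{|k|}\wedge\varepsilon^\lambda(k)\big)$ paired appropriately, times $\langle\phi_0, e^{\mp i (m_\sharp/M) g^{2/3} k\cdot r}\phi_0\rangle$-type factors which are $1+O(g^{2/3})$, integrated against $|k| \cdot |k|^{-1}\chi_\Lambda(k)^2 = \chi_\Lambda(k)$ — wait, more carefully $h^B$ carries $|k|^{1/2}$, so the integrand is $|k|\chi_\Lambda(k)^2$ divided by the energy denominator $\sim|k|$, giving $\int_{|k|\le\Lambda}\chi_\Lambda(k)\,(\text{angular factor})\,dk$, a finite strictly signed constant. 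The sign and strict positivity of $\mathrm{C}_0$ follow because, after summing over polarizations, $\sum_\lambda (\tfrac{k}{|k|}\wedge\varepsilon^\lambda)\otimes(\tfrac{k}{|k|}\wedge\varepsilon^\lambda) = \mathds{1} - \tfrac{k\otimes k}{|k|^2}$ is a nonnegative matrix, and the relevant contraction with the fixed coefficient vectors coming from $\sigma^\el\cdot(\cdot)\,\sigma^\n\cdot(\cdot)$ at entry $(3,2)$ produces a definite sign; the mass prefactor $1/(4 m_\el m_\n)$ and the Coulomb-state normalization are manifestly positive.

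I would then control the error terms: the $A$--$B$ and $A$--$A$ cross terms contribute only to diagonal entries or to entries other than $(3,2)$ at order $g^2$, so they do not affect $\Gamma_{32}$ at this order; the $g^2 A^2$ diamagnetic terms do not flip spin and are irrelevant; the corrections from replacing the full resolvent denominator by $|k|$ (namely the $P^2/2M$, $-P\cdot k/M$, $k^2/2M$ and $e_0 - E_g$ pieces) and from expanding $e^{\mp i(m_\sharp/M)g^{2/3}k\cdot r}$ to leading order each produce a relative error $O(|P|) + O(g^{2/3})$, hence absorbed into $O(|g|^{8/3})$ once $|P|\le p_c$ is taken small enough (and one notes $E_g-e_0 = O(g^2)$, $E_0-E_g=O(g^2)$ by Lemma~\ref{lm:Eg-E0}). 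The higher terms $n\ge1$ in the Neumann series were already shown to be $O(|g|^3\rho^{-1/2}) = O(|g|^{2+\tau})$ in the proof of Lemma~\ref{lm:term_order_g^2}, so they too are below the $g^{8/3}$ threshold for $\tau$ chosen appropriately (e.g. $\tau \ge 2/3$ would be too big for $\rho\ll1$, so instead one simply absorbs $O(|g|^{2+\tau})$ together with the $g^{8/3}$ terms, which is why the statement is phrased with $O(|g|^{8/3})$ — I would take $\tau$ small and note $2+\tau$ versus $8/3$; more precisely one keeps track and writes the remainder as $O(|g|^{\min(8/3,2+\tau)})$, then fixes $\tau$ so that this is $O(|g|^{8/3})$ is \emph{not} achievable, so the honest statement must be that the remainder from the Neumann tail is separately handled — here I would follow the paper and simply bound everything that is not the explicit leading term by $O(|g|^{8/3})$ after verifying the $n\ge1$ terms are actually $O(g^{3-})$, i.e. smaller). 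The main obstacle is the bookkeeping of exactly which tensor products of Pauli matrices hit entry $(3,2)$ and verifying that the resulting angular integral $\int_{|k|\le\Lambda}(\mathds{1}-\widehat k\otimes\widehat k)\,dk$-type contraction has the claimed strict sign; this is a finite-dimensional linear-algebra computation combined with one explicit radial integral, but it must be done carefully since a sign error would invalidate the whole argument.
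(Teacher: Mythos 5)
Your proposal is correct and follows essentially the same route as the paper: isolate the $\sigma^{\el}_j\sigma^{\n}_{j'}$ ($j,j'\in\{1,2\}$) cross terms in $\tilde w$ and $w$ as the sole source of the $(3,2)$ entry at order $g^2$, freeze $r$ to $0$ in $h^B$ and $H_r$ to $e_0$ in the resolvent at the cost of $O(g^{2/3})$ relative error (hence the $O(|g|^{8/3})$ remainder), use the polarization sum $\sum_\lambda(\hat k\wedge\varepsilon^\lambda)_j(\hat k\wedge\varepsilon^\lambda)_{j'}=\delta_{jj'}-\hat k_j\hat k_{j'}$, and observe the resulting radial integrand is strictly positive for small $|P|$. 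One small remark: the discussion of the $n\ge 1$ Neumann tail at the end of your proposal is extraneous here — Theorem~\ref{thm:M32} concerns the explicit integral \eqref{eq:def_M} alone, and the Neumann remainder $O(|g|^{2+\tau})$ already lives in Lemma~\ref{lm:term_order_g^2_1}, not in $\Gamma$; the confusion you voice about reconciling $2+\tau$ with $8/3$ disappears once this is noted (both are $o(g^2)$, which is all the contradiction in the proof of Theorem~\ref{thm:main} requires).
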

\begin{proof}
We view $w(r,k,\lambda)$ as a linear combination (some coefficients being given by operators) of the functions $h^A_j( \cdots )$ and $h^B_j( \cdots )$, $j \in \{ 1 ,2 ,3 \}$. We introduce the corresponding expression into \eqref{eq:def_M} and consider each term separately. 

Since the coefficients located on the third line and second column of the Pauli matrices expressed in \eqref{eq:sigma_el}--\eqref{eq:sigma_n} vanish, the terms containing at least one factor $h^A_j( \cdots )$ do not contribute to $\Gamma_{32}$. The same holds for the terms containing at least one factor $h^B_3( \cdots )$, since the third Pauli matrices, $\sigma^\el_3$ and $\sigma^\n_3$, are diagonal.

Therefore, $\Gamma_{32}$ is equal to the coefficient located on the third line and second column of the matrix $\Gamma'$ given by
\begin{align}
\Gamma' = \sum_{\lambda=1,2} & \int_{\mathbb{R}^3} P_0 \sum_{j=1,2} \Big ( - \frac{g}{2m_\el} \sigma^{\el}_j \bar{h}^B_j( \frac{ m_\el }{ M } g^{\frac{2}{3}} r , k , \lambda ) + \frac{g}{2m_\n} \sigma^{\n}_j \bar{h}^B_j ( - \frac{ m_\n }{ M } g^{\frac{2}{3}} r , k , \lambda ) \Big ) \notag \\
& \Big [ H_r + \frac{1}{2M} ( P - k )^2 + |k| - E_g \Big ]^{-1} \phantom{\sum_j} \notag \\
& \sum_{j'=1,2} \Big ( - \frac{g}{2m_\el} \sigma^{\el}_{j'} h^B_{j'}( \frac{ m_\el }{ M } g^{\frac{2}{3}} r , k , \lambda ) + \frac{g}{2m_\n} \sigma^{\n}_{j'} h^B_{j'} ( - \frac{ m_\n }{ M } g^{\frac{2}{3}} r , k , \lambda ) \Big ) P_0 \d k. \label{eq:def_M'}
\end{align}
It follows from the definition \eqref{eq:h^B} of $h^B_j$ that
\begin{equation}\label{eq:h(r)-h(0)}
\big | h^B_j( r, k ,\lambda ) - h^B_j( 0 , k, \lambda ) \big | \le \mathrm{C} |k|^{\frac{3}{2}} \chi_\Lambda(k) |r|,
\end{equation}
for any $j \in \{1,2,3\}$, $\lambda \in \{1,2\}$, $r \in \mathbb{R}^3$ and $k \in \mathbb{R}^3$. Moreover, the expression \eqref{eq:phi0} of $\phi_0$ implies that
\begin{equation}\label{eq:rphi0}
\big \| |r| \phi_0(r) \big \| \le \mathrm{C}.
\end{equation}
Hence, using in addition that, for $|P|$ sufficiently small,
\begin{equation}
\Big \| \Big [ H_r + \frac{ ( P - k )^2 }{2M} + |k| - E_g \Big ]^{-1} \Big \| \le \frac{ \mathrm{C} }{ |k| }, \label{eq:resolvante_Hr}
\end{equation}
we obtain from \eqref{eq:def_M'} and \eqref{eq:h(r)-h(0)}--\eqref{eq:resolvante_Hr} that
\begin{align}
\Gamma' = \sum_{\lambda=1,2} & \int_{\mathbb{R}^3} P_0 \sum_{j=1,2} \Big ( - \frac{g}{2m_\el} \sigma^{\el}_j \bar{h}^B_j( 0 , k , \lambda ) + \frac{g}{2m_\n} \sigma^{\n}_j \bar{h}^B_j ( 0 , k , \lambda ) \Big ) \notag \\
& \Big [ e_0 + \frac{1}{2M} ( P - k )^2 + |k| - E_g \Big ]^{-1} \phantom{\sum_j} \notag \\
& \sum_{j'=1,2} \Big ( - \frac{g}{2m_\el} \sigma^{\el}_{j'} h^B_{j'}( 0 , k , \lambda ) + \frac{g}{2m_\n} \sigma^{\n}_{j'} h^B_{j'} ( 0 , k , \lambda ) \Big ) P_0 \d k + O( |g|^{ \frac{8}{3}} ). \label{eq:def_M'_2}
\end{align}

Notice now that, for $j,j' \in \{1,2\}$, the coefficient on the third line and second column of the products $\sigma^\el_j \sigma^\el_{j'}$ and $\sigma^\n_j \sigma^\n_{j'}$ vanishes. We thus obtain from \eqref{eq:def_M'_2} that
\begin{equation}\label{eq:M32_1}
\Gamma_{32} = \Gamma'_{32} = \gamma_1 + \gamma_2 + O( |g|^{ \frac{8}{3}} ),
\end{equation}
where
\begin{align}
\gamma_1 :=& - \frac{ g^2 }{ 4 m_\el m_\n } \sum_{\lambda=1,2} \int_{\mathbb{R}^3} ( \phi_0 , \big [ \bar{h}^B_1( 0 , k , \lambda ) + \i \bar{h}^B_2 ( 0 , k , \lambda ) \big ] \notag \\
&  \Big [ e_0 + \frac{1}{2M} ( P - k )^2 + |k| - E_g \Big ]^{-1} \big [ h^B_1 ( 0 , k , \lambda ) - \i h^B_2 ( 0 , k , \lambda ) \big ] \phi_0 ) \d k,
\end{align}
and
\begin{align}
\gamma_2 := & - \frac{ g^2 }{ 4 m_\el m_\n } \sum_{\lambda=1,2} \int_{\mathbb{R}^3} ( \phi_0  , \big [ \bar{ h}^B_1 ( 0 , k , \lambda ) - \i \bar{ h}^B_2 ( 0 , k , \lambda ) \big ] \notag \\
& \Big [ e_0 + \frac{1}{2M} ( P - k )^2 + |k| - E_g \Big ]^{-1} \big [ h^B_1( 0 , k , \lambda ) + \i h^B_2 ( 0 , k , \lambda ) \Big ] \phi_0 ) \d k.
\end{align}
We remark that the cross terms involving $h^B_1( 0 , k , \lambda )$ and $h^B_2 ( 0 , k , \lambda )$ vanish. 
Thus, we obtain
\begin{align}
\Gamma_{32} =&  - \frac{ g^2 }{ 2 m_\el m_\n } \sum_{j=1,2} \sum_{\lambda=1,2} \int_{\mathbb{R}^3} \bar{ h }^B_j  ( 0 , k , \lambda ) \notag \\
& \phantom{ - \frac{ g^2 }{ 4 m_\el m_\n } } \Big [ e_0 + \frac{ ( P - k )^2 }{2M} + |k| - E_g \Big ]^{-1} h^B_j ( 0 , k , \lambda ) \d k + O( |g|^{ \frac{8}{3} } ). \label{eq:M32_3}
\end{align}
The integral in the right-hand-side of \eqref{eq:M32_3} still depends on $g$ through the ground state energy $E_g$. Nevertheless, one can readily check that
\begin{align}
& \Big | \Big [ e_0 + \frac{ ( P - k )^2 }{2M} + |k| - E_g \Big ]^{-1} - \Big [ e_0 + \frac{ ( P - k )^2 }{2M} + |k| - E_0 \Big ]^{-1} \Big | \notag \\
& \le | E_0 - E_g | \frac{ \mathrm{C} }{ |k|^2 } \le \frac{ \mathrm{C}'g^2 }{ |k|^2 }, 
\end{align}
where, in the last inequality, we used Lemma \ref{lm:Eg-E0}. Therefore, since, for any $j \in \{ 1,2 \}$ and $\lambda \in \{1,2\}$, the functions $h^B_j ( 0 , k , \lambda )$ satisfy $| h^B_j ( 0 , k , \lambda ) | \le \mathrm{C} |k|^{1/2} \chi_\Lambda(k)$, we get
\begin{align}
\Gamma_{32} =&  - \frac{ g^2 }{ 2 m_\el m_\n } \sum_{j=1,2} \sum_{\lambda=1,2} \int_{\mathbb{R}^3} \bar{ h }^B_j ( 0 , k , \lambda ) \notag \\
& \phantom{ - \frac{ g^2 }{ 4 m_\el m_\n } } \Big [ e_0 + \frac{ ( P - k )^2 }{2M} + |k| - E_0 \Big ]^{-1} h^B_j ( 0 , k , \lambda ) \d k + O( |g|^{ \frac{8}{3} } ). \label{eq:M32_4}
\end{align}
Now, the integrals in the right-hand-side of \eqref{eq:M32_4} can be explicitly computed, which leads to
\begin{align}
\Gamma_{32} =&  - \frac{ g^2 }{ 8 \pi^2 m_\el m_\n } \int_{\mathbb{R}^3} \frac{ |k| \chi_\Lambda( k )^2 }{ k^2/2M - k \cdot P / M + |k| } \big ( \frac{ k_3^2 }{ |k|^2 } + 1 \big ) \d k + O( |g|^{ \frac{8}{3} } ). \label{eq:M32_5}
\end{align}
The integrand in \eqref{eq:M32_5} is strictly positive (for $P$ sufficiently small), and hence the integral does not vanish. This concludes the proof of the theorem.
\end{proof}
We are now able to prove Theorem \ref{thm:main}: \\

\noindent \emph{Proof of Theorem \ref{thm:main}}. By \cite{AGG2}, we know that $\mathrm{dim} \, \mathrm{Ker} ( H_g - E_g ) \le 4$. Assume by contradiction that $\mathrm{dim} \, \mathrm{Ker} ( H_g - E_g ) = 4$. By Lemma \ref{lm:term_order_g^2_1}, the matrix $\Gamma$ defined in \eqref{eq:def_M} satisfies \eqref{eq:term_order_g^2_1}. In particular, in any basis of $\mathbb{C}^4$, the non-vanishing terms of order $g^2$ of $\Gamma$ are necessarily located on the diagonal. However, according to Theorem \ref{thm:M32}, in the canonical orthonormal basis of $\mathbb{C}^4$ in which the Pauli matrices are given by \eqref{eq:sigma_el}--\eqref{eq:sigma_n}, the non-diagonal coefficient $\Gamma_{32}$ contains a non-vanishing term of order $g^2$. Hence we get a contradiction and the theorem is proven. \qed

%%%%%%%%%%%%%%%%%%%%%%%%%%%%%%%%%%%%%%%%%%%%%%%%%%%%%%%%
%%%%%%%%%%%%%%%%%%%%%%%%%%%%%%%%%%%%%%%%%%%%%%%%%%%%%%%%
%%%%%%%%%%%%%%%%%%%%%%%%%%%%%%%%%%%%%%%%%%%%%%%%%%%%%%%%
%%%%%%%%%%%%%%%%%%%%%%%%%%%%%%%%%%%%%%%%%%%%%%%%%%%%%%%%
\appendix

%%%%%%%%%%%%%%%%%%%%%%%%%%%%%%%%%%%%%%%%%%%%%%%%%%%%%%%%
%%%%%%%%%%%%%%%%%%%%%%%%%%%%%%%%%%%%%%%%%%%%%%%%%%%%%%%%
%%%%%%%%%%%%%%%%%%%%%%%%%%%%%%%%%%%%%%%%%%%%%%%%%%%%%%%%
%%%%%%%%%%%%%%%%%%%%%%%%%%%%%%%%%%%%%%%%%%%%%%%%%%%%%%%%
\section{}\label{section:estimates}

In this appendix, we collect some estimates which were used in Sections \ref{section:Feshbach} and \ref{section:proof}. Some of them are standard (see for instance \cite{BFS1,BFS2}). We begin with two lemmata concerning the non-interacting Hamiltonian $H_0$ defined in \eqref{eq:def_H0}.
\begin{lemma}\label{lm:Hph<H_0}
There exists $p_c>0$ such that for all $0\le|P|\le p_c$,
\begin{equation}
H_\ph \le 2( H_0 - E_0 ).
\end{equation}
\end{lemma}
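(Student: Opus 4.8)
The idea is to exploit the explicit diagonal structure \eqref{eq:def_H0} of $H_0$, namely
\[
H_0 = H_r + \frac{1}{2M}(P - P_\ph)^2 + H_\ph,
\]
and the fact that $H_r \ge e_0$ on $\mathrm{L}^2(\mathbb{R}^3)$. Subtracting $E_0 = e_0 + P^2/2M$, one gets
\[
H_0 - E_0 \;\ge\; \frac{1}{2M}(P - P_\ph)^2 - \frac{P^2}{2M} + H_\ph \;=\; H_\ph - \frac{1}{M} P \cdot P_\ph + \frac{1}{2M} P_\ph^2 .
\]
So it suffices to show that on $\mathcal{H}_\ph$ one has
\[
\tfrac12 H_\ph \;\le\; H_\ph - \tfrac{1}{M} P\cdot P_\ph + \tfrac{1}{2M} P_\ph^2 ,
\quad\text{i.e.}\quad
\tfrac{1}{M} P\cdot P_\ph \;\le\; \tfrac12 H_\ph + \tfrac{1}{2M} P_\ph^2 ,
\]
for $|P|$ small enough.

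The key step is a one-photon-sector estimate. On the $n$-photon sector the operators $H_\ph$, $P_\ph$ act as multiplication by $\sum_{i=1}^n |k_i|$ and $\sum_{i=1}^n k_i$ respectively, so by the triangle inequality $|P_\ph| \le H_\ph$ as a pointwise (hence operator) inequality. Therefore $\tfrac{1}{M} P\cdot P_\ph \le \tfrac{|P|}{M} H_\ph$, and it is enough to have $\tfrac{|P|}{M} \le \tfrac12$, i.e. $|P| \le M/2$; one may even drop the nonnegative term $\tfrac{1}{2M}P_\ph^2$. Thus choosing $p_c := M/2$ (or any smaller positive number consistent with the other lemmas) gives
\[
H_\ph - \tfrac{1}{M} P\cdot P_\ph + \tfrac{1}{2M} P_\ph^2 \;\ge\; H_\ph - \tfrac{|P|}{M} H_\ph \;\ge\; \tfrac12 H_\ph ,
\]
and combining with the first display yields $H_\ph \le 2(H_0 - E_0)$ on the appropriate form domain.

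I do not expect a genuine obstacle here; the only points to be careful about are (i) working on form domains / quadratic forms so that the manipulations with unbounded operators are legitimate — everything is diagonal in the direct integral over photon momenta, so this is routine — and (ii) making sure the final choice of $p_c$ is compatible with the (smallness) requirements of the other lemmas in the appendix, which is a matter of taking the minimum of finitely many thresholds. The estimate $|P_\ph| \le H_\ph$ is the crux and is immediate from $|k| \le |k|$ summed over photons.
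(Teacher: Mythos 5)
Your proof is correct and follows essentially the same route as the paper: expand $H_0 - E_0$ using $H_r \ge e_0$ and absorb the cross term $-\tfrac{1}{M}P\cdot P_\ph$ using the pointwise (triangle-inequality) bound $|P_\ph|\le H_\ph$ on each photon sector, which is exactly the estimate $|(P_\ph)_j|\le H_\ph$ the paper invokes. The only difference is presentational detail; no gap.
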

\begin{proof}
For $j \in \{ 1,2,3 \}$, one can easily verify that $| (P_\ph)_j | \le H_\ph$. Hence, since $E_0 = e_0 + P^2/2M$, we have that
\begin{align}
H_0 &= H_r + \frac{P^2}{2M} - \frac{1}{M} P \cdot P_\ph + \frac{1}{2M} P_\ph^2 + H_\ph \ge E_0 + \frac{1}{2} H_\ph,
\end{align}
for $P$ sufficiently small, which proves the lemma.
\end{proof}
\begin{lemma}\label{lm:H_0_barPrho}
There exists $p_c>0$ such that, for all $0 \le |P| \le p_c$ and $\rho \ge 0$,
\begin{equation}\label{eq:H_0_barPrho}
\bar P_\rho H_0 \bar P_\rho \ge \big ( \frac{P^2}{2M} + \min( e_0 + \frac{\rho}{2} , e_1 ) \big ) \bar P_\rho.
\end{equation}
\end{lemma}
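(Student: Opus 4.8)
The plan is to decompose $H_0$ according to the spectral properties of its commuting pieces, and then split the Hilbert space into the part where the photon field is in the vacuum and the part where it is not.

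\medskip

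First I would recall from \eqref{eq:def_H0} that $H_0 = H_r + \frac{1}{2M}(P-P_\ph)^2 + H_\ph$, and that the three operators $H_r$ (acting on $\mathrm{L}^2(\mathbb{R}^3,\d r)$), and the pair $\frac{1}{2M}(P-P_\ph)^2 + H_\ph$ (acting on $\mathcal{H}_\ph$) act on distinct tensor factors, hence commute. The spectrum of $H_r$ consists of $\{e_0 < e_1 < \cdots\}$ below $0$ together with $[0,\infty)$, so in particular $H_r \ge e_0$ on all of $\mathrm{L}^2(\mathbb{R}^3)$, and $H_r \ge e_1$ on the orthogonal complement of $\mathrm{Ran}(P_{\phi_0})$. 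For the photon part, $\frac{1}{2M}(P-P_\ph)^2 \ge 0$, so that $\frac{1}{2M}(P-P_\ph)^2 + H_\ph \ge H_\ph \ge 0$, with the stronger bound $\frac{1}{2M}(P-P_\ph)^2 + H_\ph \ge c\,\mathcal{N}_\ph$ for some $c>0$ when $|P|$ is small (since on the $n$-photon sector $H_\ph$ is bounded below by the infimum of $\sum |k_j|$ which controls $n$ once the cutoff $|k_j|\le\Lambda$ is in force; alternatively one only needs the crude bound $H_\ph \ge 0$ and that $H_\ph$ has a spectral gap above $0$ of no help — so I would instead use that on $\mathrm{Ran}(\bar P_\Omega)$, i.e. at least one photon present, $\frac{1}{2M}(P-P_\ph)^2 + H_\ph \ge \delta > 0$ for $|P| \le p_c$, because the bottom of the spectrum of this operator restricted to the one-or-more-photon sectors is strictly positive for small $P$).

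\medskip

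Then I would write $\bar P_\rho = \bar P_\rho(\mathds{1}\otimes\mathds{1}\otimes P_\Omega) + \bar P_\rho(\mathds{1}\otimes\mathds{1}\otimes\bar P_\Omega)$ and observe that $\mathrm{Ran}(P_\rho) = \mathrm{Ran}(\mathds{1}\otimes P_{\phi_0}\otimes\mathds{1}_{H_\ph\le\rho})$. On the vacuum part, $\bar P_\rho(\mathds{1}\otimes\mathds{1}\otimes P_\Omega)$ forces $H_\ph = 0$ and, because $P_\rho$ is not realized, forces being orthogonal to $\phi_0$; hence $H_r \ge e_1$ there, and since $P_\ph = 0$ on the vacuum, $\frac{1}{2M}(P-P_\ph)^2 = \frac{P^2}{2M}$, giving $H_0 \ge \frac{P^2}{2M} + e_1$. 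Wait — but $\mathds{1}_{H_\ph\le\rho}$ on the vacuum sector is the identity, so the vacuum states not in $\mathrm{Ran}(P_\rho)$ are exactly those orthogonal to $\phi_0\otimes\Omega$, i.e. with atomic part orthogonal to $\phi_0$; there $H_r\ge e_1$. On the non-vacuum part, I split further by whether $H_\ph > \rho$ or $H_\ph \le \rho$; if $H_\ph > \rho$ then $H_0 \ge \frac{P^2}{2M} + e_0 + \rho$ (using $H_r\ge e_0$, $\frac{1}{2M}(P-P_\ph)^2\ge0$, $H_\ph>\rho$), but this is not quite $e_0+\rho/2$; and if $\bar P_\Omega$ with $H_\ph\le\rho$ then at least one photon is present so $\frac{1}{2M}(P-P_\ph)^2 + H_\ph$ is bounded below, but I need it bounded below by $\rho/2$, which fails for small $\rho$. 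So the clean route is: on $\mathrm{Ran}(\bar P_\rho)$, consider the orthogonal pieces $\mathds{1}\otimes\bar P_{\phi_0}\otimes\mathds{1}$ and $\mathds{1}\otimes P_{\phi_0}\otimes\bar{\mathds{1}}_{H_\ph\le\rho}$. On the first, $H_r\ge e_1$ and the photon part is $\ge 0$, giving $\ge \frac{P^2}{2M}+e_1$. On the second, $H_r\ge e_0$ and $\frac{1}{2M}(P-P_\ph)^2 + H_\ph \ge \frac{1}{2}H_\ph$ for small $|P|$ by Lemma \ref{lm:Hph<H_0}, and $H_\ph > \rho$ there (being in $\mathrm{Ran}(\bar{\mathds{1}}_{H_\ph\le\rho})$ means $H_\ph$ restricted exceeds $\rho$... actually $\mathds{1}_{H_\ph > \rho}$), so the photon part is $\ge \rho/2$, giving $\ge \frac{P^2}{2M}+e_0+\frac{\rho}{2}$. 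Taking the minimum of the two bounds and noting $\bar P_\rho$ dominates these projections appropriately yields \eqref{eq:H_0_barPrho}.

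\medskip

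The main obstacle is bookkeeping the decomposition of $\bar P_\rho$ into mutually orthogonal subspaces correctly, and in particular getting the factor $\rho/2$ rather than $\rho$ — this comes precisely from invoking Lemma \ref{lm:Hph<H_0}, i.e. $H_\ph \le 2(H_0 - E_0)$ equivalently $H_0 - E_0 \ge \frac{1}{2}H_\ph$, on the sector where the atom is in its ground state. One has to be slightly careful that on the piece $\mathds{1}\otimes P_{\phi_0}\otimes\mathds{1}_{H_\ph>\rho}$ the atomic factor being $\phi_0$ contributes exactly $e_0$ (not more), so the only gain available is from the photon energy, which is why the bound is $e_0 + \rho/2$ and cannot be improved to $e_0+\rho$ without a better lower bound on $\frac{1}{2M}(P-P_\ph)^2 + H_\ph$ in terms of $H_\ph$. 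Everything else is a routine quadratic-form estimate.
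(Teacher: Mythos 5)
Your proof is correct and follows essentially the same strategy as the paper: split $\mathrm{Ran}(\bar P_\rho)$ into an atomic-excited piece where $H_r\ge e_1$, and a high-photon-energy piece where Lemma~\ref{lm:Hph<H_0} yields $H_0\ge E_0+\tfrac{1}{2}H_\ph\ge e_0+\tfrac{P^2}{2M}+\tfrac{\rho}{2}$, then take the minimum. The only cosmetic difference is the choice of orthogonal decomposition: you use $\bar P_\rho=\mathds{1}\otimes\bar P_{\phi_0}\otimes\mathds{1}+\mathds{1}\otimes P_{\phi_0}\otimes\mathds{1}_{H_\ph>\rho}$, whereas the paper uses $\bar P_\rho=\mathds{1}\otimes\bar P_{\phi_0}\otimes\mathds{1}_{H_\ph\le\rho}+\mathds{1}\otimes\mathds{1}\otimes\mathds{1}_{H_\ph>\rho}$; the overlap $\mathds{1}\otimes\bar P_{\phi_0}\otimes\mathds{1}_{H_\ph>\rho}$ satisfies both lower bounds, so either assignment works. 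One small slip to tighten: on the second piece you write $\tfrac{1}{2M}(P-P_\ph)^2+H_\ph\ge\tfrac{1}{2}H_\ph$, which by itself gives only $e_0+\tfrac{\rho}{2}$ and drops the $\tfrac{P^2}{2M}$; what you need, and what Lemma~\ref{lm:Hph<H_0} actually delivers (its proof shows $-\tfrac{1}{M}P\cdot P_\ph+\tfrac{1}{2M}P_\ph^2+H_\ph\ge\tfrac{1}{2}H_\ph$ for $|P|$ small, hence $\tfrac{1}{2M}(P-P_\ph)^2+H_\ph\ge\tfrac{P^2}{2M}+\tfrac{1}{2}H_\ph$), is the version that retains the $\tfrac{P^2}{2M}$. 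The same implicit smallness of $|P|$ is also what justifies dropping the cross term $-\tfrac{1}{M}P\cdot P_\ph$ on the first piece, where you state ``the photon part is $\ge 0$''.
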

\begin{proof}
Since $P_\rho = \mathds{1} \otimes P_{\phi_0} \otimes \mathds{1}_{H_\ph \le \rho}$ in the tensor product $\mathbb{C}^4 \otimes \mathrm{L}^2( \mathbb{R}^3 ) \otimes \mathcal{H}_\ph$, we can write
\begin{equation}
\bar P_\rho = \mathds{1} - P_\rho = \mathds{1} \otimes \bar P_{\phi_0} \otimes \mathds{1}_{H_\ph \le \rho} + \mathds{1} \otimes \mathds{1} \otimes \mathds{1}_{H_\ph \ge \rho},
\end{equation}
where $\bar P_{\phi_0} = \mathds{1}-P_{\phi_0}$. Since $H_r \bar P_{ \phi_0 } \ge e_1 \bar P_{ \phi_0 }$, we get that %A straightforward application of the Spectral Theorem yields that, for $P$ sufficiently small,
\begin{align}
& H_0 ( \mathds{1} \otimes \bar P_{\phi_0} \otimes \mathds{1}_{H_\ph \le \rho} ) \ge \big ( e_1 + \frac{P^2}{2M} \big ) ( \mathds{1} \otimes \bar P_{\phi_0} \otimes \mathds{1}_{H_\ph \le \rho} ),
\end{align}
for $P$ small enough. Moreover, by Lemma \ref{lm:Hph<H_0},
\begin{align}
& H_0 ( \mathds{1} \otimes \mathds{1} \otimes \mathds{1}_{H_\ph \ge \rho} ) \ge \big ( e_0 + \frac{P^2}{2M} + \frac{\rho}{2} \big ) ( \mathds{1} \otimes \mathds{1} \otimes \mathds{1}_{H_\ph \ge \rho} ).
\end{align}
Hence \eqref{eq:H_0_barPrho} is proven.
\end{proof}
The proof of the next two lemmata being standard, we omit them.
\begin{lemma}\label{lm:standard_N}
For any $f \in \mathrm{L}^2( \mathbb{R}^3 \times \{ 1,2 \} )$, the operators $a(f) [ \mathcal{N}_\ph \bar P_\Omega ]^{-1/2}$ and $[ \mathcal{N}_\ph \bar P_\Omega ]^{-1/2} a(f) $ extend to bounded operators on $\mathcal{H}_\ph$ satisfying
\begin{align}
& \big \| a(f) [	\mathcal{N}_\ph \bar P_\Omega ]^{- \frac{1}{2} } \big \| \le \| f \|, \label{eq:a(f)leN_1} \\
& \big \| [ \mathcal{N}_\ph \bar P_\Omega ]^{- \frac{1}{2} } a(f) \big \| \le \sqrt{2} \| f \|.  \label{eq:a(f)leN_2}
\end{align}
\end{lemma}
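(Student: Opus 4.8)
\textbf{Proof proposal for Lemma~\ref{lm:standard_N}.}
The plan is to reduce the bound \eqref{eq:a(f)leN_1} to the standard estimate $\|a(f)\psi\| \le \|f\|\,\|\mathcal{N}_\ph^{1/2}\psi\|$, valid for all $\psi \in D(\mathcal{N}_\ph^{1/2})$, and then to exploit the fact that $a(f)$ annihilates the vacuum to pass to the inverse of $\mathcal{N}_\ph$ restricted to the orthogonal complement of $\Omega$. First I would recall that the operator $\mathcal{N}_\ph \bar P_\Omega$, viewed on $\mathrm{Ran}(\bar P_\Omega)$, has spectrum contained in $\{1,2,3,\dots\}$, so that $[\mathcal{N}_\ph \bar P_\Omega]^{-1/2}$ is a well-defined bounded operator of norm $\le 1$ on $\mathcal{H}_\ph$ (extended by $0$ on the vacuum sector). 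For the first inequality, write any $\psi \in \mathcal{H}_\ph$ as $\psi = [\mathcal{N}_\ph \bar P_\Omega]^{-1/2}\varphi$ with $\varphi \in \mathrm{Ran}(\bar P_\Omega)$ when $\psi$ has no vacuum component, and note $a(f)\psi$ depends only on $\bar P_\Omega \psi$ since $a(f)\Omega = 0$. Then $\|a(f)[\mathcal{N}_\ph \bar P_\Omega]^{-1/2}\varphi\| \le \|f\|\,\|\mathcal{N}_\ph^{1/2}[\mathcal{N}_\ph \bar P_\Omega]^{-1/2}\varphi\| = \|f\|\,\|\varphi\|$, where the final equality uses that $\mathcal{N}_\ph^{1/2}[\mathcal{N}_\ph \bar P_\Omega]^{-1/2} = \bar P_\Omega$ as operators on $\mathrm{Ran}(\bar P_\Omega)$, hence is a partial isometry.

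For the second inequality \eqref{eq:a(f)leN_2}, I would use the pull-through formula: $\mathcal{N}_\ph a(f) = a(f)(\mathcal{N}_\ph - 1)$ on the appropriate domain, equivalently $a(f)$ maps the $n$-photon sector into the $(n-1)$-photon sector. Thus $[\mathcal{N}_\ph \bar P_\Omega]^{-1/2} a(f)$ maps the $n$-photon sector ($n \ge 1$) into the $(n-1)$-photon sector, where for $n \ge 2$ the factor $[\mathcal{N}_\ph \bar P_\Omega]^{-1/2}$ acts as $(n-1)^{-1/2}$, and for $n = 1$ the image lies in the vacuum sector on which $[\mathcal{N}_\ph \bar P_\Omega]^{-1/2}$ is defined to be $0$. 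Combining with $\|a(f)\psi\| \le \|f\|\,\|\mathcal{N}_\ph^{1/2}\psi\| = \|f\|\sqrt{n}\,\|\psi\|$ on the $n$-photon sector, one gets, for $\psi$ in the $n$-photon sector with $n \ge 2$, $\|[\mathcal{N}_\ph \bar P_\Omega]^{-1/2}a(f)\psi\| \le (n-1)^{-1/2}\sqrt{n}\,\|f\|\,\|\psi\| \le \sqrt{2}\,\|f\|\,\|\psi\|$, using $n/(n-1) \le 2$ for $n \ge 2$; the $n=1$ contribution vanishes and the $n=0$ contribution is killed by $a(f)$. Orthogonality of the different photon sectors in the image then yields the claimed bound $\sqrt{2}\,\|f\|$ on all of $\mathcal{H}_\ph$.

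The only genuinely delicate point is the bookkeeping of domains: one should first establish the estimates on the dense subspace of finite-particle vectors (or on $D(\mathcal{N}_\ph)$) and then invoke the resulting boundedness to extend $a(f)[\mathcal{N}_\ph \bar P_\Omega]^{-1/2}$ and $[\mathcal{N}_\ph \bar P_\Omega]^{-1/2}a(f)$ to bounded operators on all of $\mathcal{H}_\ph$ by density; here the factor $[\mathcal{N}_\ph \bar P_\Omega]^{-1/2}$ on the right of $a(f)$ (resp.\ on the left) supplies exactly the $\mathcal{N}_\ph^{1/2}$-control that the raw creation/annihilation bound requires. I do not expect any further obstacle, as everything reduces to the elementary spectral calculus of $\mathcal{N}_\ph$ on Fock space and the standard $N_\ph$-bound for $a(f)$.
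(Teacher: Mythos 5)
Your proof is correct, and it is exactly the standard argument the paper has in mind; note that the paper explicitly omits the proof of this lemma (``The proof of the next two lemmata being standard, we omit them''), so there is no text to compare against. Both parts hinge on the sharp bound $\|a(f)\psi\|\le\|f\|\,\|\mathcal{N}_\ph^{1/2}\psi\|$ and on the interpretation of $[\mathcal{N}_\ph\bar P_\Omega]^{-1/2}$ as the pseudo-inverse (equal to $\mathcal{N}_\ph^{-1/2}$ on $\mathrm{Ran}(\bar P_\Omega)$ and $0$ on $\mathbb{C}\Omega$), which you state correctly. As a minor streamlining, the second estimate \eqref{eq:a(f)leN_2} can be obtained without decomposing into photon sectors by passing to adjoints: $\big\|[\mathcal{N}_\ph\bar P_\Omega]^{-1/2}a(f)\big\|=\big\|a^*(f)[\mathcal{N}_\ph\bar P_\Omega]^{-1/2}\big\|$, and then $\|a^*(f)\psi\|\le\|f\|\,\|(\mathcal{N}_\ph+1)^{1/2}\psi\|$ combined with $\big\|(\mathcal{N}_\ph+1)^{1/2}[\mathcal{N}_\ph\bar P_\Omega]^{-1/2}\big\|=\sup_{n\ge1}\sqrt{(n+1)/n}=\sqrt{2}$ gives the bound directly; this is, of course, equivalent to your sector-by-sector computation.
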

\begin{lemma}\label{lm:standard}
Let $f \in \mathrm{L}^2( \mathbb{R}^3 \times \{ 1,2 \} )$ be such that $(k,\lambda) \mapsto |k|^{-1/2} f(k,\lambda) \in \mathrm{L}^2( \mathbb{R}^3 \times \{ 1,2 \} )$. Then, for any $\rho > 0$, the operators $a(f) [ H_\ph + \rho ]^{-1/2}$ and $[ H_\ph + \rho ]^{-1/2} a(f) $ extend to bounded operators on $\mathcal{H}_\ph$ satisfying
\begin{align}
& \big \| a(f) [	H_\ph + \rho]^{- \frac{1}{2} } \big \| \le \| |k|^{-\frac{1}{2}} f \|, \label{eq:a(f)leHf_1} \\
& \big \| [	H_\ph + \rho]^{- \frac{1}{2} } a(f) \big \| \le \| |k|^{-\frac{1}{2}} f \| + \rho^{-\frac{1}{2}} \| f \|.  \label{eq:a(f)leHf_2}
\end{align}
\end{lemma}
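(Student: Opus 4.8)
This is a standard relative-bound estimate, and the plan is to prove the two inequalities separately, working on the dense domain $\mathcal{D}\subset\mathcal{H}_\ph$ of finite-particle vectors with smooth wave functions that are compactly supported away from $k=0$ — a core for $(H_\ph+\rho)^{\pm 1/2}$ — and then extending the resulting bounded operators to all of $\mathcal{H}_\ph$ by density.

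For \eqref{eq:a(f)leHf_1}, I would write, for $\psi\in\mathcal{D}$,
\begin{equation}
a(f)\psi=\sum_{\lambda=1,2}\int_{\mathbb{R}^3}\big(|k|^{-1/2}\bar f(k,\lambda)\big)\big(|k|^{1/2}a_\lambda(k)\psi\big)\,\d k,
\end{equation}
and apply the Cauchy--Schwarz inequality in $(k,\lambda)$ for the $\mathcal{H}_\ph$-valued integrand, which gives
\begin{equation}
\|a(f)\psi\|\le\Big(\sum_{\lambda=1,2}\int_{\mathbb{R}^3}\frac{|f(k,\lambda)|^2}{|k|}\,\d k\Big)^{1/2}\Big(\sum_{\lambda=1,2}\int_{\mathbb{R}^3}|k|\,\|a_\lambda(k)\psi\|^2\,\d k\Big)^{1/2}.
\end{equation}
The second factor is precisely $\langle\psi,H_\ph\psi\rangle^{1/2}\le\|(H_\ph+\rho)^{1/2}\psi\|$, so that $\|a(f)\psi\|\le\||k|^{-1/2}f\|\,\|(H_\ph+\rho)^{1/2}\psi\|$; substituting $\psi=(H_\ph+\rho)^{-1/2}\phi$ then yields \eqref{eq:a(f)leHf_1} and shows that $a(f)(H_\ph+\rho)^{-1/2}$ is bounded, hence extends uniquely to $\mathcal{H}_\ph$.

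For \eqref{eq:a(f)leHf_2}, I would argue by duality. Since $(H_\ph+\rho)^{-1/2}$ is bounded and self-adjoint and $a(f)^\ast=a^\ast(f)$, the adjoint of $(H_\ph+\rho)^{-1/2}a(f)$ is $a^\ast(f)(H_\ph+\rho)^{-1/2}$, so it suffices to bound the latter. Using the canonical commutation relations in the form $\|a^\ast(f)\chi\|^2=\|a(f)\chi\|^2+\|f\|^2\|\chi\|^2$, hence $\|a^\ast(f)\chi\|\le\|a(f)\chi\|+\|f\|\,\|\chi\|$, and taking $\chi=(H_\ph+\rho)^{-1/2}\phi$, one combines the bound just obtained with $\|(H_\ph+\rho)^{-1/2}\|\le\rho^{-1/2}$ to get $\|a^\ast(f)(H_\ph+\rho)^{-1/2}\phi\|\le(\||k|^{-1/2}f\|+\rho^{-1/2}\|f\|)\|\phi\|$, which is \eqref{eq:a(f)leHf_2}; the extension to $\mathcal{H}_\ph$ follows exactly as in the first part. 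There is no genuine obstacle here: the only point that needs care is the domain bookkeeping — checking that $\mathcal{D}$ is a core on which the interchange of $a(f)$ with the $k$-integral and the CCR manipulations are legitimate, and that the a priori bounds so obtained indeed permit the continuous extension to all of $\mathcal{H}_\ph$. This is precisely why the authors can afford to call the lemma standard.
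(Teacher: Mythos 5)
Your proof is correct, and since the paper explicitly omits the proof of this lemma as ``standard,'' your argument is exactly the routine one the authors have in mind: Cauchy--Schwarz in $(k,\lambda)$ after splitting $\bar f = |k|^{-1/2}\bar f \cdot |k|^{1/2}$ to get \eqref{eq:a(f)leHf_1}, and then the CCR identity $\|a^*(f)\chi\|^2 = \|a(f)\chi\|^2 + \|f\|^2\|\chi\|^2$ combined with $\|(H_\ph+\rho)^{-1/2}\|\le\rho^{-1/2}$ (via the adjoint) to get \eqref{eq:a(f)leHf_2}. Nothing to add or correct.
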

The following lemma is taken from \cite{AGG2}. Its proof is based on a ``pull-through'' formula (see \cite{AGG2}).
\begin{lemma}\label{lm:barP0_Pg}
There exist $g_c>0$ and $p_c>0$ such that, for all $0\le|g|\le g_c$ and $0\le|P|\le p_c$, the following holds:
\begin{equation}
\forall \Phi_g \in \mathrm{Ker}( H_g - E_g ) , \| \Phi_g \| = 1, \text{ we have } ( \Phi_g , \mathcal{N}_\ph \Phi_g ) \le \mathrm{C} g^2, 
\end{equation}
where $\mathrm{C}$ is a positive constant independent of $g$.
\end{lemma}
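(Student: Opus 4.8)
The plan is to prove the bound $(\Phi_g,\mathcal{N}_\ph\Phi_g)\le \mathrm{C}g^2$ by means of a \emph{pull-through formula} applied to a normalized ground state $\Phi_g$ of $H_g$. First I would express the photon number operator as $\mathcal{N}_\ph=\sum_{\lambda=1,2}\int_{\mathbb{R}^3}a^*_\lambda(k)a_\lambda(k)\,\d k$, so that $(\Phi_g,\mathcal{N}_\ph\Phi_g)=\sum_{\lambda}\int_{\mathbb{R}^3}\|a_\lambda(k)\Phi_g\|^2\,\d k$. Hence it suffices to control $\|a_\lambda(k)\Phi_g\|$ pointwise in $k$. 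To this end I would apply $a_\lambda(k)$ to the eigenvalue equation $(H_g-E_g)\Phi_g=0$ and use the commutation relations between $a_\lambda(k)$ and the various pieces of $H_g$. The key identity is of the form
\begin{equation}
(H_g-E_g+|k|)\,a_\lambda(k)\Phi_g = -\big[a_\lambda(k),W_g\big]\Phi_g,
\end{equation}
since $a_\lambda(k)$ commutes with $H_r$ and with the atomic momenta, while $[a_\lambda(k),H_\ph]=|k|a_\lambda(k)$ and the commutator with the quadratic field terms in $(P-P_\ph)^2$ produces additional manageable contributions; I would absorb the latter carefully (they are $O(g)$ relative to the leading commutator, or can be kept on the left as part of a shifted operator). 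The upshot is
\begin{equation}
a_\lambda(k)\Phi_g = -(H_g-E_g+|k|)^{-1}\big[a_\lambda(k),W_g\big]\Phi_g,
\end{equation}
where $(H_g-E_g+|k|)^{-1}$ is bounded by $|k|^{-1}$ since $H_g-E_g\ge0$.

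Next I would compute the commutator $[a_\lambda(k),W_g]$ explicitly from the expression \eqref{eq:Wg} for $W_g$. Because $W_g$ is linear in $\Phi(h^A)$ and $\Phi(h^B)$ (plus a term quadratic in $A$), the commutator produces pointwise multiplication operators: schematically,
\begin{equation}
\big[a_\lambda(k),W_g\big] = -\frac{g}{m_\el}\Big(\frac{m_\el}{M}(P-P_\ph)+p_r\Big)\cdot \bar h^A\big(\tfrac{m_\el}{M}g^{2/3}r,k,\lambda\big) + \dots,
\end{equation}
i.e.\ exactly the coefficient functions $\bar w(r,k,\lambda)$-type expressions appearing in Lemma \ref{lm:term_order_g^2}, together with a term $\frac{g^2}{m_\el}A(\tfrac{m_\el}{M}g^{2/3}r)\cdot\bar h^A(\dots)$ coming from the $A^2$ piece. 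Each surviving term carries an overall factor $g$ and a form factor $\bar h^A_j$ or $\bar h^B_j$ obeying the pointwise bounds $|h^A_j(x,k,\lambda)|\le \mathrm{C}|k|^{-1/2}\chi_\Lambda(k)$ and $|h^B_j(x,k,\lambda)|\le \mathrm{C}|k|^{1/2}\chi_\Lambda(k)$ uniformly in $x$. Using that $(P-P_\ph)$ and $p_r$ are controlled on $D(H_0)$ (hence on $\mathrm{Ran}(\mathds{1}_{\{E_g\}}(H_g))$, say via $\|(p_r^2+P_\ph^2)^{1/2}\Phi_g\|\le \mathrm{C}$, which follows from $H_g\ge H_0-\mathrm{C}g^2$ after the standard relative bounds, cf.\ Lemmata \ref{lm:H_0_barPrho}, \ref{lm:standard} and \ref{lm:standard_N}), one obtains the pointwise estimate
\begin{equation}
\|a_\lambda(k)\Phi_g\| \le \frac{\mathrm{C}|g|}{|k|}\,\big(|k|^{-1/2}+|k|^{1/2}\big)\chi_\Lambda(k) \le \mathrm{C}'|g|\,|k|^{-3/2}\chi_\Lambda(k),
\end{equation}
the dominant singularity coming from the $h^A$ terms.

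Finally I would integrate: since $\chi_\Lambda(k)=\mathds{1}_{|k|\le\Lambda}(k)$, the function $k\mapsto |k|^{-3}\chi_\Lambda(k)$ is \emph{not} integrable near the origin, so a naive bound fails; this is the point where one must exploit the infrared structure more carefully. The remedy, which is standard in this context (see \cite{BFS1,BFS2,AGG2}), is that the $h^A$ contributions come paired against the momentum operator $\frac{m_\el}{M}(P-P_\ph)+p_r$, and using the eigenvalue equation a second time (or, equivalently, commuting $a_\lambda(k)$ through and using that the transverse polarization vectors satisfy $k\cdot\varepsilon^\lambda(k)=0$, which kills the would-be logarithmically divergent piece and leaves an extra factor of $|k|$), one upgrades the bound to $\|a_\lambda(k)\Phi_g\|\le \mathrm{C}|g|\,|k|^{-1/2}\chi_\Lambda(k)$. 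Then
\begin{equation}
(\Phi_g,\mathcal{N}_\ph\Phi_g)=\sum_{\lambda=1,2}\int_{\mathbb{R}^3}\|a_\lambda(k)\Phi_g\|^2\,\d k \le \mathrm{C}g^2\sum_{\lambda=1,2}\int_{|k|\le\Lambda}\frac{\d k}{|k|} = \mathrm{C}'g^2,
\end{equation}
which is the claimed bound, with a constant depending only on $\Lambda$, $m_\el$, $m_\n$, $\mu$ and the (small) bound on $|P|$. The main obstacle is precisely the infrared behaviour of the $h^A$ terms: obtaining an integrable bound requires either the cancellation from the Coulomb gauge transversality or a second application of the pull-through argument, rather than the crude estimate, and this is the step that must be carried out with care. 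Everything else — the commutator computation, the relative bounds on $p_r$ and $P_\ph$, and the resolvent bound $\|(H_g-E_g+|k|)^{-1}\|\le|k|^{-1}$ — is routine given the estimates already assembled in this appendix.
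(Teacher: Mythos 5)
Your approach --- pull-through on the eigenvalue equation and pointwise control of $\|a_\lambda(k)\Phi_g\|$ --- is the same route the paper takes (the result is cited from \cite{AGG2} and explicitly attributed to a pull-through formula), and you correctly diagnose that the crude bound $\|a_\lambda(k)\Phi_g\|\le \mathrm{C}|g|\,|k|^{-3/2}\chi_\Lambda(k)$ leads to a logarithmic infrared divergence after squaring and integrating, so that an improvement is genuinely required. However, the two remedies you propose do not close the gap. Transversality $k\cdot\varepsilon^\lambda(k)=0$ does not help: the $h^A$ form factors are contracted against the atomic momenta $(P-P_\ph)$ and $p_r$, neither of which is proportional to $k$, so no ``would-be divergent piece'' is killed by the Coulomb-gauge condition. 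And ``a second application of the pull-through argument'' is too vague to substitute for an estimate: applying $a_{\lambda'}(k')$ a second time controls two-photon correlations, not the single-photon overlap $\|a_\lambda(k)\Phi_g\|$ that needs sharpening.

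What actually rescues the bound is a pair of structural facts special to the neutral fibre Hamiltonian. First, the $(P-P_\ph)$-paired $h^A$ contributions from the electron and the nucleus carry coefficients $-g/M$ and $+g/M$ against $\bar h^A$ evaluated at $\tfrac{m_\el}{M}g^{2/3}r$ and $-\tfrac{m_\n}{M}g^{2/3}r$ respectively; since $\bar h^A(x,k,\lambda)-\bar h^A(x',k,\lambda)=O(|k|^{1/2}|x-x'|\chi_\Lambda(k))$, the leading $|k|^{-1/2}$ singularity cancels and this piece becomes harmless. Second, the remaining singular piece is essentially $-\tfrac{g}{\mu}\,p_r\cdot\bar h^A(0,k,\lambda)\Phi_g$, and $p_r\phi_0\perp\phi_0$; since $\Phi_g$ is $O(g)$-close to $\mathrm{Ran}(P_0)$, this vector lies up to $O(g)$ errors in $\mathrm{Ran}(\mathds{1}\otimes\bar P_{\phi_0}\otimes P_\Omega)$, where the (unperturbed, shifted-momentum) resolvent is bounded by $\mathrm{C}(e_1-e_0)^{-1}$ \emph{uniformly in $k$}, not by $|k|^{-1}$. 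These two observations together give the integrable bound $\|a_\lambda(k)\Phi_g\|\le\mathrm{C}|g|\,|k|^{-1/2}\chi_\Lambda(k)$ that your argument needs. Until this neutrality cancellation and the orthogonality-to-$\phi_0$ mechanism are made explicit, the proof remains incomplete precisely at the step you yourself flagged as delicate.
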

In the next lemma, we estimate the difference between the ground state energies $E_g = \inf \sigma( H_g )$ and $E_0 = \inf \sigma ( H_0 )$.
\begin{lemma}\label{lm:Eg-E0}
There exist $g_c>0$ and $p_c>0$ such that, for all $0\le|g|\le g_c$ and $0\le|P|\le p_c$,
\begin{equation}\label{eq:Eg-E0}
E_g \le E_0 \le E_g + \mathrm{C} g^2,
\end{equation}
where $\mathrm{C}$ is a positive constant independent of $g$.
\end{lemma}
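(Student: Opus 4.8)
The plan is to establish the two inequalities in \eqref{eq:Eg-E0} by two separate variational arguments. The first inequality, $E_g \le E_0$, is the trivial direction and does not quite hold in the naive sense; rather, one should expect $E_g \le E_0 + \mathrm{C} g^2$ as well, but the sharper claim $E_g \le E_0$ stated here must come from a symmetry or cancellation. Actually, let me reconsider: since we have passed to the Wick-ordered Hamiltonian, the quadratic terms in $A$ have been renormalized, and the relevant observation is that the expectation of $W_g$ in the unperturbed ground state $y \otimes \phi_0 \otimes \Omega$ vanishes (this is \eqref{eq:Wg_Wick}, $P_0 W_g P_0 = 0$). Hence, first I would test $H_g$ against the trial state $\Psi_0 := y \otimes \phi_0 \otimes \Omega$ for a normalized $y \in \mathbb{C}^4$: since $H_0 \Psi_0 = E_0 \Psi_0$ and $\langle \Psi_0, W_g \Psi_0 \rangle = 0$, one gets $\langle \Psi_0, H_g \Psi_0 \rangle = E_0$, so $E_g \le E_0$ immediately by the variational principle. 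This handles the left inequality cleanly.

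For the right inequality $E_0 \le E_g + \mathrm{C} g^2$, I would argue in the reverse direction using a (quasi-)ground state of $H_g$. Let $\Phi_g$ be a normalized eigenstate of $H_g$ with eigenvalue $E_g$ (which exists by \cite{AGG2} for $g, P$ small). Then $E_g = \langle \Phi_g, H_0 \Phi_g \rangle + \langle \Phi_g, W_g \Phi_g \rangle \ge E_0 + \langle \Phi_g, W_g \Phi_g \rangle$, so it suffices to show $|\langle \Phi_g, W_g \Phi_g \rangle| \le \mathrm{C} g^2$. Writing $W_g$ in the form \eqref{eq:Wg}, the terms linear in $A$ or $B$ involve one creation/annihilation operator; using the standard estimates of Lemmata \ref{lm:standard_N} and \ref{lm:standard} together with the bound $\langle \Phi_g, \mathcal{N}_\ph \Phi_g \rangle \le \mathrm{C} g^2$ from Lemma \ref{lm:barP0_Pg} (and $\langle \Phi_g, H_\ph \Phi_g \rangle \le \mathrm{C} g^2$, which follows similarly or from $H_\ph \le H_g - E_g + \mathrm{C}(\cdots)$), each linear term is bounded by $\mathrm{C} |g| \cdot \| \mathcal{N}_\ph^{1/2} \Phi_g \| \le \mathrm{C} g^2$; the quadratic terms $g^2 A(\cdots)^2$ are $O(g^2)$ after Wick ordering by a direct estimate. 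Combining, $\langle \Phi_g, W_g \Phi_g \rangle = O(g^2)$, which gives $E_0 \le E_g + \mathrm{C} g^2$.

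The main obstacle is making the estimate $\langle \Phi_g, W_g \Phi_g \rangle = O(g^2)$ rigorous: one has to control the linear-in-field terms, which a priori are only $O(|g|)$, and the gain to $O(g^2)$ genuinely relies on the a priori smallness of the photon number (and of $H_\ph$) in the ground state $\Phi_g$, i.e.\ on Lemma \ref{lm:barP0_Pg} and a companion bound on $\langle \Phi_g, H_\ph \Phi_g \rangle$. One must also take some care with the momentum coupling $(P - P_\ph)\cdot A$, bounding $P_\ph \Phi_g$ by $\mathrm{C} \| H_\ph^{1/2} \Phi_g \| \cdot \| H_\ph^{1/2}\Phi_g\|$-type quantities, or more simply estimating $\|(P-P_\ph)\Phi_g\|$ in terms of $\langle \Phi_g, (\mathcal{N}_\ph+1)\Phi_g\rangle$ on the relevant subspace; all of this is routine given the appendix lemmata but should be spelled out. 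The remaining pieces ($P_0 W_g P_0 = 0$, the variational principle) are immediate.
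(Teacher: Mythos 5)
Your proposal is correct and follows essentially the same route as the paper: the left inequality is exactly the Rayleigh--Ritz test with $y\otimes\phi_0\otimes\Omega$ exploiting \eqref{eq:Wg_Wick}, and the right inequality is obtained by testing $H_0-H_g=-W_g$ against the normalized ground state $\Phi_g$ and estimating $(\Phi_g,W_g\Phi_g)$ term by term via Lemma~\ref{lm:standard_N} together with the photon-number bound $(\Phi_g,\mathcal{N}_\ph\Phi_g)\le \mathrm{C}g^2$ of Lemma~\ref{lm:barP0_Pg}. One small remark on the point you flag as needing care: the paper dispenses with any $H_\ph$- or $\mathcal{N}_\ph$-type control of $P_\ph\Phi_g$ by simply invoking the operator inequalities $(P-P_\ph)^2\le a H_0+b$ and $p_r^2\le a H_0+b$ (your $\mathcal{N}_\ph$-only option would not bound $P_\ph\Phi_g$ without a UV cutoff in the state), so the momentum factor is only bounded by a constant, with the full $g^2$ gain coming from the explicit prefactor $g$ in $W_g$ and the $O(|g|)$ from $\|a(h^A(\cdot))\Phi_g\|$.
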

\begin{proof}
Note that, since the perturbation $W_g$ is Wick-ordered, we have that
\begin{equation}\label{eq:Wg_Wick}
( \mathds{1} \otimes \mathds{1} \otimes P_\Omega ) W_g ( \mathds{1} \otimes \mathds{1} \otimes P_\Omega ) = 0,
\end{equation}
where, recall, $P_\Omega$ denotes the orthogonal projection onto the vector space spanned by the Fock vacuum $\Omega$. Hence, by the Rayleigh-Ritz principle,
\begin{equation}\label{eq:Eg<E0}
E_g \le \big ( ( y \otimes \phi_0 \otimes \Omega ) , H_g ( y \otimes \phi_0 \otimes \Omega ) \big ) = \big ( ( y \otimes \phi_0 \otimes \Omega ) , H_0 ( y \otimes \phi_0 \otimes \Omega ) \big ) = E_0,
\end{equation}
where, as above, $y$ denotes an arbitrary normalized element in $\mathbb{C}^4$.

In order to prove the second inequality in \eqref{eq:Eg-E0}, we use Lemmata \ref{lm:standard_N} and \ref{lm:barP0_Pg}. More precisely, let $\Phi_g \in \mathrm{Ker}( H_g - E_g )$, $\| \Phi_g \| =1$ ($\Phi_g$ exists by \cite{AGG2}). We have
\begin{align}
E_0 - E_g \le ( \Phi_g , (H_0 - H_g) \Phi_g ) =- ( \Phi_g , W_g \Phi_g ).
\end{align}
Recall that $W_g$ is given by the Wick-ordered expression obtained from \eqref{eq:Wg}. We express the latter in terms of operators of creation and annihilation, and estimate each term separately. Consider for instance the term
\begin{equation}\label{eq:Wg_term1}
\frac{ g }{ m_\el } \left ( \big ( \frac{ m_\el }{ M } ( P - P_\ph ) + p_r \big ) \cdot a ( h^A( \frac{ m_\el }{ M } g^{\frac{2}{3}} r ) ) \right ).
\end{equation}
It is not difficult to check that 
\begin{equation}\label{eq:Wg_est1}
(P-P_\ph)^2 \le a H_0 + b \quad\text{and}\quad p_r^2 \le a H_0 + b,
\end{equation}
for some positive constants $a$ and $b$ depending on $\mu$ and $M$. One easily deduces from \eqref{eq:Wg_est1} that
\begin{equation}\label{eq:Eg-E_0_1}
\big \| \big ( \frac{ m_\el }{ M } ( P - P_\ph ) + p_r \big ) \Phi_g \big \| \le \mathrm{C}.
\end{equation}
Moreover, by Lemmata \ref{lm:standard_N} and \ref{lm:barP0_Pg}, we have that
\begin{equation}\label{eq:Eg-E_0_2}
\big \| a ( h^A( \frac{ m_\el }{ M } g^{\frac{2}{3}} r ) ) \Phi_g \big \| \le \mathrm{C} \big \| \mathcal{N}_\ph^{\frac{1}{2}} \Phi_g \big \| \le \mathrm{C}' |g|.
\end{equation}
Equations \eqref{eq:Eg-E_0_1} and \eqref{eq:Eg-E_0_2} imply that
\begin{equation}
\big | ( \Phi_g , \eqref{eq:Wg_term1} \Phi_g ) \big | \le \mathrm{C} g^2,
\end{equation}
and since the other terms in $W_g$ are estimated similarly, this concludes the proof.
\end{proof}
Lemma \ref{lm:barP0_Pg} gives an estimation of the overlap of the ground state $\Phi_g$ of $H_g$ with the Fock vacuum. We also need to estimate the overlap of $\Phi_g$ with the ground state $\phi_0$ of the electronic Hamiltonian $H_r$ in the sense stated in  the following lemma.
\begin{lemma}\label{lm:Pphi0_Pg}
There exist $g_c>0$ and $p_c>0$ such that, for all $0\le|g|\le g_c$ and $0\le|P|\le p_c$, the following holds:
\begin{equation}
\forall \Phi_g \in \mathrm{Ker}( H_g - E_g ) , \| \Phi_g \| = 1, \text{ we have } | ( \Phi_g , ( \mathds{1} \otimes \bar P_{\phi_0} \otimes P_\Omega )  \Phi_g ) | \le \mathrm{C} g^2, 
\end{equation}
where $\mathrm{C}$ is a positive constant independent of $g$.
\end{lemma}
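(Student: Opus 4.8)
The plan is to test the eigenvalue equation $H_g\Phi_g = E_g\Phi_g$ against the Fock vacuum, in the same spirit as the proof of Lemma~\ref{lm:Eg-E0}. Write $Q := \mathds{1}\otimes\bar P_{\phi_0}\otimes P_\Omega$ and, with a slight abuse of notation, $P_\Omega := \mathds{1}\otimes\mathds{1}\otimes P_\Omega$, $\bar P_\Omega := \mathds{1}-P_\Omega$. Since $Q$ is an orthogonal projection, $(\Phi_g,Q\Phi_g) = \|Q\Phi_g\|^2$, so it suffices to prove $\|Q\Phi_g\|\le\mathrm{C}|g|$. Set $\psi := P_\Omega\Phi_g$, which we identify with a vector of $\mathbb{C}^4\otimes\mathrm{L}^2(\mathbb{R}^3)$; then $Q\Phi_g = (\mathds{1}\otimes\bar P_{\phi_0})\psi$.

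First I would project the eigenvalue equation onto the vacuum sector. Since $H_0$ commutes with $P_\Omega$ and, by \eqref{eq:def_H0}, acts on $\mathrm{Ran}(P_\Omega)$ as $H_r + P^2/2M$, and since $P_\Omega W_g P_\Omega = 0$ by the Wick ordering \eqref{eq:Wg_Wick}, applying $P_\Omega$ to $(H_g - E_g)\Phi_g = 0$ gives
\[
\big(H_r + \tfrac{P^2}{2M} - E_g\big)\psi = -\,P_\Omega W_g\,\bar P_\Omega\Phi_g .
\]
Applying $\mathds{1}\otimes\bar P_{\phi_0}$ (which commutes with $H_r$), and using that $H_r \ge e_1$ on $\mathrm{Ran}(\bar P_{\phi_0})$ together with $E_g = e_0 + P^2/2M + O(g^2)$ from Lemma~\ref{lm:Eg-E0}, we obtain $H_r + P^2/2M - E_g \ge \tfrac12(e_1 - e_0) > 0$ on that subspace, for $g$ and $|P|$ small, whence
\[
\|Q\Phi_g\| = \big\|(\mathds{1}\otimes\bar P_{\phi_0})\psi\big\| \le \frac{2}{e_1 - e_0}\,\big\|P_\Omega W_g\,\bar P_\Omega\Phi_g\big\| .
\]

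It then remains to prove $\|P_\Omega W_g\bar P_\Omega\Phi_g\|\le\mathrm{C}|g|$. Because $W_g$ is Wick ordered, $P_\Omega W_g\bar P_\Omega$ only retains the purely annihilating parts of $W_g$: the terms linear in $a(h^A(\cdots))$ and $a(h^B(\cdots))$, which carry a prefactor $O(g)$ and send the one-photon component $\mathds{1}_{\mathcal{N}_\ph=1}\Phi_g$ to the vacuum, and the term $\propto g^2\,a(h^A)a(h^A)$, which sends $\mathds{1}_{\mathcal{N}_\ph=2}\Phi_g$ to the vacuum. For each of them I would use: (i) that, by transversality of the polarization vectors (the Coulomb gauge observation recalled after \eqref{eq:Wg}), the momentum prefactors commute through the dotted field operators, e.g. $(\tfrac{m_\el}{M}(P-P_\ph)+p_r)\cdot A(\tfrac{m_\el}{M}g^{2/3}r) = A(\tfrac{m_\el}{M}g^{2/3}r)\cdot(\tfrac{m_\el}{M}(P-P_\ph)+p_r)$, and that they commute with $\mathcal{N}_\ph$; (ii) that these prefactors are $H_0$-bounded with $\|H_0\Phi_g\|\le\mathrm{C}$ (which follows, as usual, from $H_g\Phi_g = E_g\Phi_g$ and the relative boundedness of $W_g$); and (iii) the bounds of Lemmata~\ref{lm:standard_N} and~\ref{lm:standard} together with the fundamental estimate $\|\mathcal{N}_\ph^{1/2}\Phi_g\|\le\mathrm{C}|g|$ from Lemma~\ref{lm:barP0_Pg}. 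This bounds each such term by $\mathrm{C}|g|$, hence $\|Q\Phi_g\|\le\mathrm{C}|g|$ and $(\Phi_g,Q\Phi_g) = \|Q\Phi_g\|^2\le\mathrm{C}g^2$, as claimed.

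The main obstacle is the last step: one has to take care of the non-commuting momentum prefactors sitting in front of the annihilation operators and to keep precise track of which photon-number sector of $\Phi_g$ each term of $W_g$ actually acts on, so as to extract the factor $|g|$ provided by Lemma~\ref{lm:barP0_Pg}. This computation is nonetheless entirely parallel to the one performed in the proof of Lemma~\ref{lm:Eg-E0} (compare \eqref{eq:Eg-E_0_1}--\eqref{eq:Eg-E_0_2}), and relies only on the standard tools collected in this appendix.
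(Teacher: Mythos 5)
Your proposal is correct, and the underlying mechanism is the same as the paper's: test the eigenvalue equation against $Q:=\mathds{1}\otimes\bar P_{\phi_0}\otimes P_\Omega$, use that $H_0$ restricted to $\mathrm{Ran}(P_\Omega)$ is $H_r+P^2/2M$, exploit the gap $e_1-e_0$ of $H_r$ together with $E_g\le E_0$ (Lemma~\ref{lm:Eg-E0}), and then control the interaction term via Lemma~\ref{lm:barP0_Pg}. The only difference is in formulation. The paper works with the quadratic form: from $0=(\Phi_g,Q(H_g-E_g)\Phi_g)$ it derives
\[
(e_1-e_0)\,(\Phi_g,Q\Phi_g)\le -(\Phi_g,QW_g\Phi_g),
\]
and then bounds $|(\Phi_g,QW_g\Phi_g)|\le\mathrm{C}g^2$ by the same Cauchy--Schwarz decomposition as in Lemma~\ref{lm:Eg-E0}. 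You instead project the eigenvalue equation onto the vacuum sector and invert $H_r+P^2/2M-E_g$ on $\mathrm{Ran}(\bar P_{\phi_0})$, yielding the vector-norm bound $\|Q\Phi_g\|\le\mathrm{C}\|P_\Omega W_g\bar P_\Omega\Phi_g\|$, and use $(\Phi_g,Q\Phi_g)=\|Q\Phi_g\|^2$ so only one power of $|g|$ is needed. This is sound, and in fact slightly more economical than you suggest: since $P_\Omega W_g P_\Omega=0$, one has $P_\Omega W_g\bar P_\Omega\Phi_g=P_\Omega W_g\Phi_g$, and $\|W_g\Phi_g\|\le\mathrm{C}|g|$ follows already from the $g$-prefactors in $W_g$ together with the relative bound $\|W_g\Psi\|\le\mathrm{C}|g|(\|H_0\Psi\|+\|\Psi\|)$ and $\|H_0\Phi_g\|\le\mathrm{C}$ (by absorption), without even invoking Lemma~\ref{lm:barP0_Pg} at this stage. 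The photon-number estimate $\|\mathcal{N}_\ph^{1/2}\Phi_g\|\le\mathrm{C}|g|$ is what the paper needs because it bounds the quadratic form directly to order $g^2$; your squaring trick dispenses with it. So your route is a legitimate, marginally streamlined variant of the paper's argument.
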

\begin{proof}
Let $\Phi_g$ be a normalized ground state of $H_g$, that is $(H_g - E_g) \Phi_g = 0$, $\| \Phi_g \| = 1$. Since $E_0 - E_g = e_0 + P^2/2M - E_g \ge 0$ by Lemma \ref{lm:Eg-E0}, we have that
\begin{align}
0 & = \big ( \Phi_g , ( \mathds{1} \otimes \bar P_{\phi_0} \otimes P_\Omega ) ( H_g - E_g ) \Phi_g \big ) \phantom{ \frac{P^2}{M} } \notag \\
& = \big ( \Phi_g , ( \mathds{1} \otimes \bar P_{\phi_0} \otimes P_\Omega ) \big ( H_r + \frac{P^2}{2M} - E_g + W_g \big ) \Phi_g ) \notag \\
& \ge \big ( \Phi_g , ( \mathds{1} \otimes \bar P_{\phi_0} \otimes P_\Omega ) ( e_1 - e_0 + W_g ) \Phi_g \big ), \phantom{ \frac{P^2}{M} }
\end{align}
and hence
\begin{align}
( \Phi_g , ( \mathds{1} \otimes \bar P_{\phi_0} \otimes P_\Omega ) \Phi_g ) \le - \frac{1}{e_1-e_0} ( \Phi_g , ( \mathds{1} \otimes \bar P_{\phi_0} \otimes P_\Omega ) W_g \Phi_g ).
\end{align}
We conclude the proof thanks to Lemmata \ref{lm:standard_N} and \ref{lm:barP0_Pg}, by arguing in the same way as in the proof of Lemma \ref{lm:Eg-E0}.
\end{proof}
We now give estimates relating the perturbation $W_g$ to $H_0$.
\begin{lemma}\label{lm:estimate_Wg}
There exist $g_c>0$ and $p_c>0$ such that, for all $0 \le |g| \le g_c$, $0 \le |P| \le p_c$, $0 < \rho \ll 1$ and $\varepsilon \ge 0$, the following estimates hold:
\begin{align}
& \big \| [ H_0 - E_g + \varepsilon ]^{-\frac{1}{2}} \bar P_\rho W_g \bar P_\rho [ H_0 - E_g + \varepsilon ]^{-\frac{1}{2}} \big \| \le \mathrm{C} |g| \rho^{-\frac{1}{2}}, \label{eq:estimate_Wg_1}\\
& \big \| P_\rho W_g \bar P_\rho [ H_0 - E_g + \varepsilon ]^{-\frac{1}{2}} \big \| \le \mathrm{C} |g|, \label{eq:estimate_Wg_2} \\
& \big \| [ H_0 - E_g + \varepsilon ]^{-\frac{1}{2}} \bar P_\rho W_g P_\rho \big \| \le \mathrm{C} |g|, \label{eq:estimate_Wg_3} \\
& \big \| P_\rho W_g P_\rho \big \| \le \mathrm{C} |g| \rho^{\frac{1}{2}}. \label{eq:estimate_Wg_4}
\end{align}
\end{lemma}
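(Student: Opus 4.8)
The strategy is to write $W_g$ as a finite sum of terms, each of which is (up to operator-valued coefficients bounded uniformly in $r$) either a single annihilation/creation operator $a(h)$, $a^*(h)$ smeared against one of the form factors $h^A_j$, $h^B_j$, or a quadratic expression $a^{\#}(h)a^{\#}(h')$ coming from the $A^2$ terms in \eqref{eq:Wg}. For each such term I would estimate it relative to $[H_0-E_g+\varepsilon]^{1/2}$ on the appropriate side, using the standard number-operator and field-energy bounds of Lemmata \ref{lm:standard_N} and \ref{lm:standard} together with Lemma \ref{lm:Hph<H_0} (which gives $H_\ph \le 2(H_0-E_0)$ and hence, since $E_0 - E_g \le \mathrm{C}g^2 \ll \rho$, also $H_\ph \le \mathrm{C}(H_0 - E_g + \varepsilon) + \mathrm{C}g^2$). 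The key analytic inputs are the norm bounds $\||k|^{-1/2}h^A_j(x,\cdot)\|\le \mathrm{C}$ and $\||k|^{-1/2}h^B_j(x,\cdot)\|\le \mathrm{C}$, uniformly in $x$, which follow at once from \eqref{eq:h^A}--\eqref{eq:h^B} and the sharp cutoff $\chi_\Lambda = \mathds{1}_{|k|\le\Lambda}$.

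First I would treat the linear (in $a^{\#}$) terms. Writing such a term schematically as $T = \Theta \cdot \Phi(h(\cdot))$ with $\Theta$ a bounded-below-relatively-to-$H_0^{1/2}$ coefficient (the factors $(m_\el/M)(P-P_\ph)+p_r$ are $H_0^{1/2}$-bounded by the elementary bounds $(P-P_\ph)^2 \le aH_0+b$, $p_r^2 \le aH_0+b$ used already in the proof of Lemma \ref{lm:Eg-E0}, while the $\sigma\cdot h^B$ terms have bounded coefficient), I would split $\Phi(h) = a(h)+a^*(h)$ and insert $[H_0-E_g+\varepsilon]^{-1/2}$ on both sides. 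The annihilation part is controlled by $\|a(h)[H_\ph+\rho]^{-1/2}\| \le \||k|^{-1/2}h\|$ from \eqref{eq:a(f)leHf_1} after absorbing the coefficient into $[H_0-E_g+\varepsilon]^{-1/2}$; the creation part is handled by taking adjoints. The overall coupling constant in front is $g$, so each linear term contributes $\mathrm{C}|g|$ times a power of $\rho^{-1/2}$ dictated by whether a projection $\bar P_\rho$ or $P_\rho$ sits next to it — when $\bar P_\rho$ is present on the ``small'' side one uses $[H_0-E_g+\varepsilon]^{-1/2}\bar P_\rho$ with $H_0-E_g \ge \rho/2 - \mathrm{C}g^2 \ge \rho/4$ on $\mathrm{Ran}(\bar P_\rho)$ by Lemma \ref{lm:H_0_barPrho}, giving the factor $\rho^{-1/2}$ in \eqref{eq:estimate_Wg_1}; when $P_\rho$ is present one uses $\|H_\ph^{1/2}P_\rho\| \le \rho^{1/2}$ to get \eqref{eq:estimate_Wg_4}, and for the mixed estimates \eqref{eq:estimate_Wg_2}--\eqref{eq:estimate_Wg_3} one of the two resolvent factors is simply not needed. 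For the quadratic $A^2$ terms, $g^2 a^{\#}(h)a^{\#}(h')$, I would commute annihilators to the right (picking up a $c$-number of size $g^2\|h\|\|h'\|$ which is actually removed by the Wick ordering), bound each $a^{\#}$ by $[H_\ph+\rho]^{1/2}$ or $\rho^{1/2}$ as above, and note that $g^2 \ll |g|\rho^{1/2}$ whenever $\rho \gg g^2$, so these terms are of strictly lower order and do not affect any of the four bounds.

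The main obstacle is bookkeeping rather than anything deep: one must make sure that in each of the four inequalities the field-energy operator $H_\ph$ (or the shifted resolvent $[H_\ph+\rho]^{-1/2}$) can be genuinely dominated by $[H_0-E_g+\varepsilon]^{-1/2}$ restricted to the correct range, which requires combining Lemma \ref{lm:Hph<H_0} with the spectral gap estimate of Lemma \ref{lm:H_0_barPrho} and with $E_0-E_g = O(g^2) \ll \rho$ from Lemma \ref{lm:Eg-E0} — in particular one has to be slightly careful that on $\mathrm{Ran}(\bar P_\rho)$ the operator $H_0-E_g+\varepsilon$ is bounded below by a constant times $\rho$, so that the inverse powers of $\rho$ are exactly as stated and not worse. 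Once these relative bounds are in place, summing over the finitely many terms of $W_g$ and the two polarizations $\lambda$ gives the stated estimates with new constants $\mathrm{C},\mathrm{C}',\mathrm{C}''$; I would state this and refer to \cite{BFS1,BFS2} for the analogous, by-now-standard computations rather than reproduce them in full.
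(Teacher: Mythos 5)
Your overall strategy — decomposing $W_g$ into its linear and quadratic pieces, controlling the differential coefficients $\frac{m_\el}{M}(P-P_\ph)+p_r$ via the relative bound $(P-P_\ph)^2, p_r^2 \le a H_0 + b$, applying Lemma \ref{lm:standard} to the smeared $a^{\#}$, and trading $H_\ph$ against $H_0 - E_g$ on $\mathrm{Ran}(\bar P_\rho)$ via Lemmata \ref{lm:Hph<H_0} and \ref{lm:H_0_barPrho}, together with $\|[H_\ph+\rho]^{1/2}P_\rho\| \le \rho^{1/2}$ — is exactly what the paper does, and the estimates you obtain for the linear terms are correct.

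There is one small but genuine logical gap: you invoke the hypothesis $\rho \gg g^2$ twice (to get $H_0 - E_g \ge \rho/2 - \mathrm{C}g^2 \ge \rho/4$ on $\mathrm{Ran}(\bar P_\rho)$, and to argue $g^2 \ll |g|\rho^{1/2}$ when dismissing the $A^2$ contributions), but the lemma as stated assumes only $0 < \rho \ll 1$, \emph{not} $g^2 \ll \rho$. The paper avoids this by using the sharper fact $E_g \le E_0$ (the \emph{first} inequality of Lemma \ref{lm:Eg-E0}, proven by the Rayleigh--Ritz principle and Wick ordering), which gives directly $H_0 - E_g \ge H_0 - E_0 \ge \rho/2$ on $\mathrm{Ran}(\bar P_\rho)$ with no loss; and the quadratic terms are handled without $\rho \gg g^2$: estimated carefully (one $a^{\#}$ absorbed by $[H_\ph+\rho]^{1/2}\bar P_\rho[H_0-E_g+\varepsilon]^{-1/2}$, the other via \eqref{eq:a(f)leHf_2}) they each contribute an extra factor $|g|$ relative to the corresponding linear bound, so they are controlled simply by $|g| \le g_c$. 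Replacing your ``$\rho \gg g^2$'' steps by these two observations closes the gap and makes the argument agree precisely with the paper's.
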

\begin{proof}
Let us begin with proving \eqref{eq:estimate_Wg_1}. As in the proof of Lemma \ref{lm:Eg-E0}, we express $W_g$ in terms of creation and annihilation operators from the Wick-ordered expression obtained from \eqref{eq:Wg}, and we estimate each term separately. Let us consider again the term \eqref{eq:Wg_term1} as an example. Using \eqref{eq:Wg_est1}, Lemma \ref{lm:H_0_barPrho}, and the fact that $E_0 \ge E_g$, we obtain
\begin{equation}\label{eq:Wg_est2}
\Big \| [ H_0 - E_g + \varepsilon ]^{-\frac{1}{2}} \bar P_\rho \big ( \frac{ m_\el }{ M } ( P - P_\ph ) + p_r \big )_j \Big \| \le \mathrm{C} \rho^{-\frac{1}{2}}.
\end{equation}
for $j\in\{1,2,3\}$. Next, for $j \in \{1,2,3\}$, Lemma \ref{lm:standard} gives
\begin{equation}\label{eq:Wg_est3}
\big \| a ( h_j^A( \frac{ m_\el }{ M } g^{\frac{2}{3}} r ) ) [ H_\ph + \rho ]^{-1/2} \big \| \le \mathrm{C}, 
\end{equation}
and it follows from Lemmata \ref{lm:Hph<H_0} and \ref{lm:H_0_barPrho} that
\begin{equation}\label{eq:Wg_est4}
\big \| [ H_\ph + \rho ]^{\frac{1}{2}} \bar P_\rho [ H_0 - E_g + \varepsilon ]^{-\frac{1}{2}} \big \| \le \mathrm{C}.
\end{equation}
Using \eqref{eq:Wg_est2}, \eqref{eq:Wg_est3} and \eqref{eq:Wg_est4}, we obtain
\begin{equation}\label{eq:Wg_est5}
\big \| [ H_0 - E_g + \varepsilon ]^{-\frac{1}{2}} \bar P_\rho \eqref{eq:Wg_term1} \bar P_\rho [ H_0 - E_g + \varepsilon ]^{-\frac{1}{2}} \big \| \le \mathrm{C} |g| \rho^{-\frac{1}{2}}.
\end{equation}
The other terms in $W_g$ are estimated similarly, using in particular Estimate \eqref{eq:a(f)leHf_2} (in addition to \eqref{eq:a(f)leHf_1}) for the terms quadratic in the annihilation and creation operators. Hence \eqref{eq:estimate_Wg_1} is proven. In order to prove \eqref{eq:estimate_Wg_2}, \eqref{eq:estimate_Wg_3} and \eqref{eq:estimate_Wg_4}, we proceed similarly, using the further following estimates:
\begin{align}
& \Big \|  \big ( \frac{ m_\el }{ M } ( P - P_\ph ) + p_r \big )_j P_\rho \Big \| \le \mathrm{C}, \label{eq:Wg_est6} \\
& \big \|[ H_\ph + \rho ]^{\frac{1}{2}} P_\rho \big \| \le \mathrm{C} \rho^{\frac{1}{2}}. \label{eq:Wg_est7}
\end{align}
Estimate \eqref{eq:Wg_est6} follows from \eqref{eq:Wg_est1}, and \eqref{eq:Wg_est7} is an obvious consequence of the Spectral Theorem.
\end{proof}
\begin{lemma}\label{lm:order_g2}
There exist $g_c>0$ and $p_c>0$ such that, for all $0 \le |g| \le g_c$, $0 \le |P| \le p_c$, $0 < \rho \ll 1$, and $\varepsilon \ge 0$, we have
\begin{align}
& P_0 W_g \big [ H_0 - E_g \big ]^{-1} \bar P_\rho W_g P_0 \notag \\
& = \sum_{\lambda=1,2} \int_{\mathbb{R}^3} P_0 \tilde w(r,k,\lambda) \big [ H_r + \frac{1}{2M} ( P - k )^2 + |k| - E_g \big ]^{-1} w(r,k,\lambda) P_0 \d k \notag \\
&\quad + O( |g|^3 ) + O( g^2 \rho ), \label{eq:order_g2}
\end{align}
where $w(r,k,\lambda)$ and $\tilde w(r,k,\lambda)$ are defined in \eqref{eq:w(r,k,lambda)}--\eqref{eq:tilde_w(r,k,lambda)}.
\end{lemma}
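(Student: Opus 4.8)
The strategy is to expand $W_g$ in creation and annihilation operators, to classify the contributions to $P_0 W_g[H_0-E_g]^{-1}\bar P_\rho W_g P_0$ by the number of photons in the intermediate state, and to check that only the one-photon part — with $\bar P_\rho$ replaced by the full projection onto the one-photon sector — survives, modulo errors of order $|g|^3$ and $g^2\rho$.

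First I would split $W_g=W_g^{(1)}+W_g^{(2)}$, where $W_g^{(1)}$ gathers the field-linear terms of \eqref{eq:Wg} (the $A\cdot(P-P_\ph+p_r)$ and $\sigma\cdot B$ terms, all of order $g$) and $W_g^{(2)}$ the Wick-ordered $A^2$ terms (order $g^2$). Let $Q_n$ be the orthogonal projection onto the $n$-photon subspace. Since $W_g$ is Wick ordered, \eqref{eq:Wg_Wick} gives $Q_0 W_g P_0=0$, so $W_gP_0$ lies in $\mathrm{Ran}(Q_1)\oplus\mathrm{Ran}(Q_2)$; moreover $Q_1 W_g^{(2)}P_0=0=P_0W_g^{(2)}Q_1$ and $Q_2W_g^{(1)}P_0=0=P_0W_g^{(1)}Q_2$, simply by counting creation and annihilation operators. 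Using that $\varepsilon^\lambda(k)\perp k$ (so $k\cdot h^A(\cdot,k,\lambda)=0$, which is exactly what makes $A(\cdot)$ commute with $p_r$ and $P_\ph$ inside the scalar products occurring in $W_g^{(1)}$) one then checks that
\[
Q_1 W_g P_0=\sum_{\lambda=1,2}\int_{\mathbb{R}^3}a^*_\lambda(k)\,w(r,k,\lambda)\,P_0\;\d k,\qquad P_0 W_g Q_1=\sum_{\lambda=1,2}\int_{\mathbb{R}^3}P_0\,\tilde w(r,k,\lambda)\,a_\lambda(k)\;\d k,
\]
with $w,\tilde w$ as in \eqref{eq:w(r,k,lambda)}--\eqref{eq:tilde_w(r,k,lambda)}.

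Next, I would apply these formulas together with the pull-through relation $[H_0-E_g]^{-1}a^*_\lambda(k)=a^*_\lambda(k)\,[H_r+\tfrac1{2M}(P-P_\ph-k)^2+H_\ph+|k|-E_g]^{-1}$ and $a_\lambda(k)a^*_{\lambda'}(k')\Psi=\delta_{\lambda\lambda'}\delta(k-k')\Psi$ for $\Psi$ in $\mathrm{Ran}(Q_0)$, which directly produces
\[
P_0 W_g[H_0-E_g]^{-1}Q_1 W_g P_0=\sum_{\lambda=1,2}\int_{\mathbb{R}^3}P_0\,\tilde w(r,k,\lambda)\,\big[H_r+\tfrac1{2M}(P-k)^2+|k|-E_g\big]^{-1}w(r,k,\lambda)\,P_0\;\d k,
\]
the main term of \eqref{eq:order_g2}. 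It then remains to bound $P_0W_g[H_0-E_g]^{-1}(\bar P_\rho-Q_1)W_gP_0$. Since $\bar P_\rho$ commutes with the $Q_n$ and $\bar P_\rho Q_1=Q_1-P_\rho Q_1$, with $P_\rho Q_1=\mathds{1}\otimes P_{\phi_0}\otimes\mathds{1}_{\{|k|\le\rho\}}$ on the one-photon sector, this remainder equals $-\,P_0W_g[H_0-E_g]^{-1}P_\rho Q_1W_gP_0+P_0W_g[H_0-E_g]^{-1}\bar P_\rho Q_2W_gP_0$. In the first summand the intermediate atomic vector lies in $\mathrm{Ran}(P_{\phi_0})$, so the resolvent acts as the scalar $[e_0+\tfrac1{2M}(P-k)^2+|k|-E_g]^{-1}$, which is $\le 2/|k|$ for $|P|$ small by Lemma~\ref{lm:Eg-E0}; combined with $\|w(r,k,\lambda)P_0\|,\|\tilde w(r,k,\lambda)P_0\|\le\mathrm{C}|g|\,|k|^{-1/2}\chi_\Lambda(k)$ (a consequence of \eqref{eq:h^A}--\eqref{eq:h^B} and $\phi_0\in H^1$) this yields a bound $\mathrm{C}g^2\int_{|k|\le\rho}|k|^{-2}\,\d k=\mathrm{C}'g^2\rho$. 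In the second summand, $Q_2W_gP_0=Q_2W_g^{(2)}P_0$ and $P_0W_gQ_2=P_0W_g^{(2)}Q_2$ are of order $g^2$ with kernels dominated by $g^2|k|^{-1/2}|k'|^{-1/2}\chi_\Lambda(k)\chi_\Lambda(k')$; on $(\mathds{1}\otimes\bar P_{\phi_0}\otimes\mathds{1})Q_2$ one has $H_0-E_g\ge e_1-e_0-P^2/2M\ge\mathrm{c}>0$, and on $(\mathds{1}\otimes\mathds{1}\otimes\mathds{1}_{\{|k|+|k'|>\rho\}})Q_2$ one has $H_0-E_g\ge\tfrac12(|k|+|k'|)$ for $|P|$ small (again via Lemma~\ref{lm:Eg-E0}), whence a bound $\mathrm{C}g^4\int\!\!\int\chi_\Lambda(k)\chi_\Lambda(k')\,|k|^{-1}|k'|^{-1}(|k|+|k'|)^{-1}\,\d k\,\d k'=O(g^4)=O(|g|^3)$, the integral being finite since $\int\!\!\int_{\{s+t\le\Lambda\}}\tfrac{st}{s+t}\,\d s\,\d t<\infty$ in radial coordinates. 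Collecting the three pieces gives \eqref{eq:order_g2}.

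The step I expect to be the main obstacle is this last, two-photon, error term: the crude operator-norm estimate $\|[H_0-E_g]^{-1}\bar P_\rho\|\le\mathrm{C}\rho^{-1}$ coming from Lemma~\ref{lm:H_0_barPrho} only gives $O(g^4/\rho)$, which is not $O(|g|^3)+O(g^2\rho)$ once $\rho$ is comparable to $g^2$; one must instead keep the photon momenta explicit and exploit that the resolvent singularity $(|k|+|k'|)^{-1}$ is integrable against the product of $A$-form factors. A secondary point requiring care is the bookkeeping of operator orderings and the transversality cancellations used to bring $Q_1 W_g P_0$ and $P_0 W_g Q_1$ into the clean forms displayed above.
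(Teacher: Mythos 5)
Your proposal is correct and follows essentially the same route as the paper's proof: expand $W_g$ into its linear and Wick-ordered quadratic creation/annihilation parts, isolate the one-photon intermediate contribution (both $W_g$'s linear, giving the pulled-through integral in $w,\tilde w$) as the main term, and bound the $P_\rho$-restriction error and the two-photon quadratic error by $O(g^2\rho)$ and $O(|g|^3)$ respectively. The paper's version is terser --- it simply invokes Lemmata~\ref{lm:standard} and~\ref{lm:estimate_Wg} for the quadratic error and writes the one-photon $P_\rho$-subtraction in a single step --- but the decomposition by photon number and the resulting estimates are the same as yours, and your remark that the two-photon piece must be handled by keeping the $(|k|+|k'|)^{-1}$ singularity explicit rather than via the crude $\rho^{-1}$ resolvent bound is a valid clarification of what the cited lemmata accomplish.
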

\begin{proof}
The perturbation $W_g$ appears twice in $P_0 W_g [ H_0 - E_g ]^{-1} \bar P_\rho W_g P_0$. We introduce the expression \eqref{eq:Wg} of $W_g$ into the latter operator, and consider each term separately.

First, the terms containing a creation operator in the ``first'' $W_g$ vanish since $P_0$ projects onto the Fock vaccum. The same holds for the terms containing an annihilation operator in the ``second'' $W_g$.

Next, the terms involving the parts of $W_g$ quadratic in the creation and annihilation operators are (at least) of order $O(|g|^3)$, as follows again from Lemmata \ref{lm:standard} and \ref{lm:estimate_Wg}.

Therefore, one can compute%from the usual ``pull-through'' formula, one can compute
\begin{align}
& P_0 W_g \big [ H_0 - E_g \big ]^{-1} \bar P_\rho W_g P_0 \phantom{\sum_\lambda^\lambda} \notag \\
& = \sum_{\lambda=1,2} \int_{\mathbb{R}^3} P_0 \tilde w(r,k,\lambda) \big [ H_r + \frac{1}{2M} ( P - k )^2 + |k| - E_g \big ]^{-1} w(r,k,\lambda) P_0 \d k \notag \\
& \quad - \sum_{\lambda=1,2} \int_{|k| \le \rho} P_0 \tilde w(r,k,\lambda) \big [ e_0 + \frac{1}{2M} ( P - k )^2 + |k| - E_g \big ]^{-1} \notag \\
& \quad \phantom{- \sum_{\lambda=1,2} \int_{|k| \le \rho} P_0 \bar w(r,k,\lambda) } \times ( \mathds{1} \otimes P_{\phi_0} \otimes \mathds{1} ) w(r,k,\lambda) P_0 \d k  + O( |g|^{3} ). \label{eq:order_g2_1}
\end{align}
The second term in the right-hand-side of \eqref{eq:order_g2_1} is estimated as follows:
\begin{align}
& \bigg \| \sum_{\lambda=1,2} \int_{|k| \le \rho} P_0 \tilde w(r,k,\lambda) \big [ e_0 + \frac{1}{2M} ( P - k )^2 + |k| - E_g \big ]^{-1} \notag \\
&\phantom{ \sum_{\lambda=1,2} \int_{|k| \le \rho} } \times ( \mathds{1} \otimes P_{\phi_0} \otimes \mathds{1} ) w(r,k,\lambda) P_0 \d k \bigg \|  \le \sum_{\lambda=1,2} \int_{|k| \le \rho} \frac{ \mathrm{C} }{ |k|^2 } \d k \le \mathrm{C}' \rho.
\end{align}
Hence \eqref{eq:order_g2} is proven.
\end{proof}

\bibliographystyle{amsalpha}

\begin{thebibliography}{aaaaaaa}

\bibitem[AFFS]{AFFS} W.K. Abou Salem, J. Faupin, J. Fr{\"o}hlich and I.M. Sigal, \textit{On the theory of resonances in non-relativistic QED and related models}, Adv. Appl. Math., \textbf{43}, (2009), 201--230.

\bibitem[AF]{AF} L. Amour and J. Faupin, \textit{The confined hydrogenoid ion in non-relativistic quantum electrodynamics},  Cubo, \textbf{9}, (2007), 103--137.

\bibitem[AFGG1]{AFGG1} L. Amour, J. Faupin, B. Gr{\'e}bert and J.-C. Guillot, \textit{Le probl{\`e}me infrarouge pour l'{\'e}lectron habill{\'e} non relativiste dans un champ magn{\'e}tique}, C. R. Acad. Sci. Paris Ser. I, \textbf{346}, (2008), 1045--1050.

\bibitem[AFGG2]{AFGG2} L. Amour, J. Faupin, B. Gr{\'e}bert and J.-C. Guillot, \textit{On the infrared problem for the dressed non-relativistic electron in a magnetic field}, In \textit{Spectral and Scattering Theory for Quantum Magnetic Systems}, vol. \textbf{500} of Contemp. Math., Amer. Math. Soc., Providence, RI, (2009), 1--24.

\bibitem[AGG1]{AGG1} L. Amour, B. Gr{\'e}bert and J.-C. Guillot, \textit{The dressed nonrelativistic electron in a magnetic field}, Math. Methods Appl. Sci., \textbf{29}, (2006), 1121--1146.

\bibitem[AGG2]{AGG2} L. Amour, B. Gr{\'e}bert and J.-C. Guillot, \textit{The dressed mobile atoms and ions}, J. Math. Pures Appl., \textbf{86}, (2006), 177--200.

\bibitem[BCFS1]{BCFS1} V. Bach, T. Chen, J. Fr{\"o}hlich and I.M. Sigal, \textit{Smooth Feshbach map and operator-theoretic renormalization group methods}, J. Funct. Anal., \textbf{203}, (2003), 44--92.

\bibitem[BCFS2]{BCFS} V. Bach, T. Chen, J. Fr{\"o}hlich and I.M. Sigal, \textit{The renormalized electron mass in non-relativistic quantum electrodynamics}, J. Funct. Anal., \textbf{243}, (2007), 426--535.

\bibitem[BFS1]{BFS1} V. Bach, J. Fr\"{o}hlich and I. M. Sigal, \textit{Quantum electrodynamics of confined non-relativistic particles}, Adv. in Math., \textbf{137}, (1998), 299--395.

\bibitem[BFS2]{BFS2} V. Bach, J. Fr{\"o}hlich and I.M. Sigal, \textit{Spectral analysis for systems of atoms and molecules coupled to the quantized radiation field}, Comm. Math. Phys., \textbf{207}, (1999), 249--290.

\bibitem[BS]{BS} H. A. Bethe and E. Salpeter, \textit{Quantum mechanics of one- and two-electron atoms}, Springer-Verlag, (1957).

\bibitem[Ch]{Chen} T. Chen, \textit{Infrared renormalization in non-relativistic qed and scaling criticality}, J. Funct. Anal., \textbf{254}, (2008), 2555--2647.

\bibitem[CF]{CF} T. Chen and J. Fr{\"o}hlich, \textit{Coherent infrared representations in non-relativistic QED}, In \textit{Spectral theory and mathematical physics: a Festschrift in honor of Barry Simon's 60th birthday}, vol. \textbf{76} of \textit{Proc. Sympos. Pure Math.}, Amer. Math. Soc., Providence, RI, (2007), 25--45.

\bibitem[CFP]{CFP} T. Chen, J. Fr{\"o}hlich and A. Pizzo, \textit{Infraparticle scattering states in non-relativistic QED. II. Mass shell properties}, J. Math. Phys., \textbf{50}, 012103, (2009).

\bibitem[CTDL]{CTDL} C. Cohen-Tannoudji, B. Diu, F. Lalo{\"e}, \textit{M{\'e}canique quantique II}, Hermann, Paris, (1977).

\bibitem[Fa]{Fa} J. Faupin, \textit{Resonances of the confined hydrogen atom and the Lamb-Dicke effect in non-relativistic QED}, Ann. Henri Poincar{\'e}, \textbf{9}, (2008), 743--773.

\bibitem[FGS1]{FGS1} J. Fr{\"o}hlich, M. Griesemer and B. Schlein. \textit{Asymptotic completeness for Compton scattering}, Comm. Math. Phys., \textbf{252}, (2004), 415--476.

\bibitem[FGS2]{FGS2} J. Fr{\"o}hlich, M. Griesemer and B. Schlein, \textit{Rayleigh scattering at atoms with dynamical nuclei}, Comm. Math. Phys., \textbf{271}, (2007), 387--430.

\bibitem[FP]{FP} J. Fr{\"o}hlich and A. Pizzo, \textit{Renormalized Electron Mass in Nonrelativistic QED}, preprint, mp\_arc 08-131, (2008). 

\bibitem[GH]{GH} M. Griesemer and D. Hasler, \textit{On the smooth Feshbach-Schur map}, J. Funct. Anal., \textbf{254}, (2008), 2329--2335.

\bibitem[GLL]{GLL} M. Griesemer, E.H. Lieb and M. Loss, \textit{Ground states in non-relativistic quantum electrodynamics}, Invent. Math., \textbf{145}, (2001), 557--595.

\bibitem[HH]{HH} D. Hasler and I. Herbst, \textit{Absence of Ground States for a Class of Translation Invariant Models of Non-relativistic QED}, Comm. Math. Phys., \textbf{279}, (2008), 769--787.

\bibitem[Hi1]{Hir} F. Hiroshima, \textit{Fiber Hamiltonians in non-relativistic quantum electrodynamics}, J. Funct. Anal., \textbf{252}, (2007), 314--355.

\bibitem[Hi2]{Hir2} F. Hiroshima. \textit{Multiplicity of ground states in quantum ﬁeld models: application of asymptotic ﬁelds}, J. Funct. Anal., \textbf{224}, (2005), 431--470.

\bibitem[HL]{HL} F. Hiroshima and J. Lorinczi, \textit{Functional integral representations of nonrelativistic quantum electrodynamics with spin 1/2}, J. Funct. Anal., \textbf{254}, (2008), 2127--2185.

\bibitem[HS]{HS} F. Hiroshima and H. Spohn, \textit{Ground state degeneracy of the Pauli-Fierz Hamiltonian with spin},  Adv. Theor. Math. Phys., \textbf{5}, (2001), 1091--1104.

\bibitem[LMS1]{LMS1} M. Loss, T. Miyao and H. Spohn, \textit{Lowest energy states in nonrelativistic QED: atoms and ions in motion}, J. Funct. Anal., \textbf{243}, (2007), 353--393.

\bibitem[LMS2]{LMS2} M. Loss, T. Miyao and H. Spohn, \textit{Kramers degeneracy theorem in nonrelativistic QED},  Lett. Math. Phys., \textbf{89}, (2009), 21--31.

\bibitem[Pi]{Pizzo} A. Pizzo, \textit{One-particle (improper) states in Nelson's massless model}, Ann. Henri Poincar{\'e}, \textbf{4}, (2003), 439--486.

\bibitem[Me]{Me} A. Messiah, \textit{M{\'e}canique quantique, Tome 2}, Dunod, Paris, (1995).

\bibitem[Sa]{Sa} I. Sasaki, \textit{Ground State of a Model in Relativistic Quantum Electrodynamics with a Fixed Total Momentum}, Preprint, 	arXiv:math-ph/0606029v4, (2006).

\bibitem[Sig]{Sigal} I.M. Sigal \textit{Ground State and Resonances in the Standard Model of Non-relativistic QED},  J. Stat. Phys., \textbf{134}, (2009), 899--939.

\bibitem[Sim]{Simon} B. Simon, \textit{Functional Integration and Quantum Physics}, Academic Press, (1979).

\bibitem[Sp]{Spohn} H. Spohn, \textit{Dynamics of charged particles and their radiation field}, Cambridge University Press, Cambridge, (2004).

\end{thebibliography}

\end{document}